\newcommand\remove[1]{}
\theoremstyle{plain}
\newtheorem{lemma}{Lemma}[section]
\newtheorem{prop}{Proposition}[lemma]
\newtheorem{theorem}[lemma]{Theorem}
\newtheorem{thm}[lemma]{Theorem}
\newtheorem{conj}[lemma]{Conjecture}
\theoremstyle{definition}
\newtheorem{definition}[lemma]{Definition}
\DeclareMathOperator{\poly}{poly}
\newcommand\card[1]{\left| #1 \right|}
\newcommand\tup[1]{\langle #1 \rangle}
\newcommand\sett[2]{\left\{ \left. #1 \;\right\vert #2 \right\}}
\newcommand\set[1]{{\left\{ #1 \right\}}}
\newcommand\Prob[2]{{\Pr_{#1}\left[ {#2} \right]}}
\newcommand\Expc[2]{{\mathop{\bf E}_{#1}\left[ {#2} \right]}}
\newcommand\Var[2]{\mathop{\bf Var}_{#1}\left[{#2}\right]}
\newcommand\defeq{\doteq}
\newcommand\R{\mathbb{R}}
\newcommand\N{\mathbb{N}}
\newcommand\degree{d}
\newcommand\normal{\mathcal{G}}
\def\qed{\hfill $\vcenter{\hrule height .3mm
\hbox {\vrule width .3mm height 2.1mm \kern 2mm \vrule width .3mm
height 2.1mm} \hrule height .3mm}$ \bigskip}
\def \Sph{\mathbb{S}^{n-1}}
\def \RR {\mathbb R}
\def \EE {\mathbb E}
\def \Var {\mathrm{Var}}
\def \eps {\varepsilon}
\def \one {\mathrm{\mathbf{1}}}
\def \Id {\mathrm{I}_n}
\title{Reduction From Non-Unique Games To Boolean Unique Games}
\begin{document}

\author{
		  Ronen Eldan
		  \thanks{
		  {\tt ronen.eldan@weizmann.ac.il}.
		  Department of Mathematics,
          Weizmann Institute of Science. Supported by a European Research Council Starting Grant (ERC StG) and by an Israel Science Foundation grant no. 715/16.
		  } \and
        Dana Moshkovitz
        \thanks{
		  {\tt danama@cs.utexas.edu}.
        Department of Computer Science,
		  UT Austin. This material is based upon work supported by the National Science Foundation under grants number 1218547 and 1648712.}
}

\setcounter{page}{0}

\maketitle

\begin{abstract}
We reduce the problem of proving a ``Boolean Unique Games Conjecture'' (with gap $1-\delta$ vs.~$1-C\delta$, for any $C> 1$, and sufficiently small $\delta>0$) to the problem of proving a PCP Theorem for a certain non-unique game.
In a previous work, Khot and Moshkovitz suggested an inefficient {\em candidate} reduction (i.e., without a proof of soundness). The current work is the first to provide an efficient reduction along with a proof of soundness. The non-unique game we reduce from is similar to non-unique games for which PCP theorems are known.

Our proof relies on a new concentration theorem for functions in Gaussian space that are restricted to a random hyperplane. We bound the typical Euclidean distance between the low degree part of the restriction of the function to the hyperplane and the restriction to the hyperplane of the low degree part of the function.
\end{abstract}

\clearpage
\setcounter{page}{1}

\section{Introduction}

\subsection{The Unique Games Conjecture}

The Unique Games Conjecture was introduced by Khot~\cite{Khot} (see also the survey~\cite{KUGCSurvey}) in order to prove optimal inapproximability results that eluded existing techniques.
\begin{definition}[\textsc{Unique Game}]
The input of a unique game consists of a regular graph $G=(V,E)$, an alphabet $\Sigma$ of size $k$, and permutations $\pi_e:\Sigma\to\Sigma$ for the edges $e=(u,v)\in E$.
The task is to label each vertex with a symbol $\sigma(v)\in\Sigma$, as to maximize the fraction of edges $e=(u,v)\in E$ that are satisfied, i.e., $\pi_e(\sigma(u))=\sigma(v)$.
\end{definition}
The following two prover game describes a unique game instance: a verifier interacts with two all-powerful provers. The verifier picks uniformly an edge $e=(u,v)\in E$; sends $u$ to one prover and sends $v$ to the other prover. Each prover is supposed to respond with a label from $\Sigma$. The verifier accepts if the two received labels $\sigma(u),\sigma(v)$ satisfy $\pi_e(\sigma(u))=\sigma(v)$. Note that for every response of one prover in the game, there is a {\em unique} response of the other prover that is acceptable to the verifier. Hence, this two prover game is called a {\em unique game}.
The {\em value} of the game is the probability that the verifier accepts when the provers play optimally.


The Unique Games Conjecture says that it is NP-hard to distinguish unique games of value close\begin{footnote}{For unique games there is an efficient algorithm to distinguish games of value exactly $1$ from games of value smaller than $1$. Hence, it is necessary to focus on games of value close to $1$ rather than $1$.}\end{footnote} to $1$ from unique games of value close to $0$:
\begin{conj}[Unique Games Conjecture]
For every $\varepsilon,\delta>0$, there exists $k=k(\varepsilon,\delta)$, such that it is NP-hard, given a unique game instance with alphabet of size $k$, to distinguish between the case where at least $1-\delta$ fraction of the edges are satisfied and the case where at most $\varepsilon$ fraction of the edges are satisfied.
\end{conj}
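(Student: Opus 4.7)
The plan is to reduce from a non-unique two-prover game whose approximation hardness is a (still-to-be-established) PCP theorem, and to encode it via a Boolean long code into a unique game over a small alphabet. I would start from a Label-Cover-style instance with perfect completeness and soundness $\varepsilon' \ll \varepsilon$, of the general shape for which PCP theorems are known. The gadget encodes each prover's symbol into a Boolean function (equivalently, by invariance, a Gaussian function on $\R^\dims$) subject to a random linear constraint. The verifier samples an edge $(u,v)$ of the starting game, a random hyperplane $H$ in the function's domain, and a random point on $H$, and queries each prover's encoding at that point; the gadget is designed so that each answer of one prover uniquely determines the acceptable answer of the other, giving a unique game.

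Completeness $\geq 1-\delta$ follows by taking the honest long-code (or Hermite) encodings of the optimal strategies for the non-unique game. The crux is soundness: given Boolean strategies $\{f_u\}$ for the reduced unique game with value $\geq 1-C\delta$, I would define a randomized decoder into the non-unique alphabet by sampling a coordinate from the Fourier spectrum of the low-degree truncation $f_u^{\leq \degree}$, then apply a Majority-Is-Stablest / invariance-principle argument to conclude that the decoded labelling satisfies the non-unique game with probability exceeding $\varepsilon'$, contradicting the starting PCP theorem.

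The hard part---and the reason the earlier Khot--Moshkovitz attempt was only a candidate---is that each $f_u$ is only accessed on a random hyperplane $H$, while the decoder uses low-degree structure on all of Gaussian space. So I need the estimate that $(f_u|_H)^{\leq \degree}$ is $L^2$-close to $f_u^{\leq \degree}|_H$ for a typical $H$, i.e., that low-degree truncation essentially commutes with hyperplane restriction. This is precisely the new concentration theorem that the paper proves; once it is in hand, the standard noise-stability and invariance machinery can transfer soundness from the Boolean unique game back to the non-unique PCP and complete the reduction.
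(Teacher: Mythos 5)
The statement you are asked to prove is the \emph{Unique Games Conjecture} itself, which is a famous open problem. The paper does not prove it, nor does it claim to: the UGC appears only as a \emph{conjecture}, and the paper's Main Theorem (Theorem~\ref{t:main}) proves something much weaker and \emph{conditional} --- assuming the (itself unproven) Subspaces Near-Intersection Conjecture, one gets NP-hardness of distinguishing $1-\delta$ from $1-C\delta$ unique games over the fixed Boolean alphabet $k=2$. This is not the UGC: the soundness error is close to $1$ rather than close to $0$, and the alphabet is fixed at $2$ rather than growing with $\varepsilon,\delta$. The paper explicitly notes that promoting $1-C\delta$ soundness to $\varepsilon$ soundness is open and that past attempts via ``strong parallel repetition'' hit a known obstacle. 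So a proof sketch targeted at the full UGC, as yours is, cannot be compared against ``the paper's own proof'' --- there isn't one, and your plan silently skips the amplification step that everyone knows is blocked.

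Even restricted to the paper's actual, weaker target, your soundness decoder is not the one used, and the replacement you propose would fail. You suggest sampling a coordinate from the Fourier/Hermite spectrum of $f_u^{\leq d}$ and then invoking a Majority-Is-Stablest/invariance-principle argument. The paper deliberately avoids this route: Mossel--Neeman-style arguments only say that a noise-stable Boolean function correlates with a half-space \emph{after a large random shift}, and the paper points out that such a shift destroys the subcode-covering property (consistency on hyperplanes) that the reduction relies on. Moreover, in the $1-C\delta$ regime the function $f_u$ need not correlate with \emph{any} single half-space (it could be a function of $C$ half-spaces, a low-degree threshold, etc.), so ``sample a coordinate'' has no soundness guarantee. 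What the paper actually does in Section~\ref{s:soundness} is quite different: it extracts a unit vector from $f_u^{\leq d}$ by \emph{repeated differentiation} in random directions $y_1,\ldots,y_{d-1}$ until the degree-$1$ part has norm at least $\eta$ (using Carbery--Wright anti-concentration to ensure termination), then normalizes; consistency across edges is propagated through the derivatives using the hypercontractive level-$k$ inequality and the concentration theorem, and the whole construction is carried out on a zoom-in $E_Y$ so that differentiation commutes with restriction to the edge hyperplanes. Your identification of the concentration theorem as the key new ingredient is correct, but the decoding step around it is a genuinely different (and, for the reasons above, non-working) mechanism.
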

We refer to the problem of distinguishing instances where at least $1-\delta$ fraction of the edges can be satisfied and instances where at most $\varepsilon$ fraction of the edges can be satisfied as $1-\delta$ vs. $\varepsilon$ unique games.

The Unique Games Conjecture is known to imply optimal NP-hardness of approximation for problems like \textsc{Max-Cut}~\cite{KKMO} and \textsc{Vertex-Cover}~\cite{KR2} that eluded optimal inapproximability results via existing techniques~\cite{Has97,Chan}. Moreover, under the Unique Games Conjecture one can prove inapproximability for wide families of approximation problems. Most notably, basic semidefinite programming (SDP)-based algorithms are optimal for all local constraint satisfaction problems~\cite{Rag}.

There are efficient algorithms for unique games in four cases: (i) Sufficiently small alphabet $k \leq \exp(1/\delta)$~\cite{Khot,CMM}; (ii) Sufficiently small $\delta = O(1/\log n)$ where $n$ is the size of the graph~\cite{Trevisan,GT,CMM,CMM2}; (iii) Large run-time $2^{n^{\poly(\delta)}}$~\cite{ABS}; (iv) Random-like structure of $G$~\cite{AKKSTV,KMM}.


There is an NP-hardness result for unique games for $\delta = 1/2$ and any $\varepsilon>0$ as follows from the recently proved 2-to-2 Theorem~\cite{KMS0,DKKMS0,DKKMS,BKS-ldt,KMMS,KMS}.
There is also a hardness result for any $\delta>0$ and $\varepsilon = 1 - 2\delta$~\cite{HHMOW,KMS} that holds in the Boolean case $k=2$.

The Boolean case $k=2$ is the first interesting case of unique games, and it captures problems like \textsc{Max-Cut} and \textsc{2Lin}(2).
The assignments to the variables are $\pm 1$, and each edge either requires its two endpoints to have the same assignment or different assignment.
It is conjectured (and, indeed, follows from the Unique Games Conjecture~\cite{KKMO}) that the best algorithm for Boolean unique games is the Goemans-Williamson SDP-based algorithm~\cite{GW} that can distinguish value $1-\delta$ from value $\varepsilon = 1-\Theta(\sqrt{\delta})$.
We focus on a weaker conjecture:
\begin{conj}[Boolean Unique Games Conjecture]\label{c:BUGC}
For every $C\geq 1$, for sufficiently small $\delta>0$, it is NP-hard to distinguish between unique games with $k=2$ where $1-\delta$ fraction of the edges can be satisfied, and ones where only $1-C\delta$ fraction of the edges can be satisfied.
\end{conj}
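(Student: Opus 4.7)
The plan is to give a conditional proof of Conjecture~\ref{c:BUGC}: assuming a PCP theorem for a suitable non-unique game $\mathcal{G}$ (similar in spirit to non-unique games for which such theorems already exist), I construct a polynomial-time reduction that turns instances of $\mathcal{G}$ into Boolean unique games with the desired $1-\delta$ vs.\ $1-C\delta$ gap. The reduction uses a Gaussian long-code gadget: each vertex $v$ of $\mathcal{G}$ is replaced by a cloud of Boolean-valued points corresponding to restrictions of $\pm 1$-valued functions $f_v:\R^m\to\{\pm 1\}$ to random affine hyperplanes $H\subset\R^m$ (drawn under the Gaussian measure), and each non-unique constraint of $\mathcal{G}$ is replaced by Boolean unique edges of the form $(f_u|_H,\,f_v|_{H'})$ where $H,H'$ are noise-correlated so that the prescribed unique relation holds exactly when $u,v$ would satisfy the original non-unique constraint under the associated labels.

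Completeness is the easy direction: given a satisfying assignment to $\mathcal{G}$, take each $f_v$ to be the corresponding Gaussian long code (a dictator in the appropriate direction); the noise correlation of parameter $\rho = 1-\delta$ between $H$ and $H'$ causes only a $\delta$-fraction of the unique edges to fail, yielding value $\geq 1-\delta$. Soundness is the technical heart. Given Boolean functions $\{f_v\}$ that satisfy a $\geq 1-C\delta$ fraction of the Boolean unique edges, one wants to extract a labeling of $\mathcal{G}$ that beats the PCP soundness bound. Following the familiar long-code template, I would Hermite-expand each $f_v$, use the near-satisfaction of the unique constraints to argue that most of the Fourier mass of $f_v|_H$ sits on low degrees for typical $H$, and then list-decode the low-degree part to extract a short list of candidate labels for each vertex $v$ of $\mathcal{G}$.

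The main obstacle---and, per the abstract, the principal analytic contribution---is that the Boolean $\pm 1$ structure forces us to analyze the restricted functions $f_v|_H$, while the invariance and list-decoding machinery wants to work with the low-degree projection $f_v^{\le d}$ of the \emph{unrestricted} function. Restriction to $H$ and truncation at degree $d$ do not commute, so I plan to prove a new concentration inequality in Gaussian space bounding, for a random hyperplane $H$, the typical $L^2$-distance
\[
\bigl\| (f|_H)^{\le d} - (f^{\le d})|_H \bigr\|_{L^2(H)}
\]
in terms of $\|f\|_{L^2(\gamma)}$, the degree $d$, and the dimension $m$, with high probability over $H$. I expect the proof of this hyperplane concentration theorem to be the technically hardest step, requiring a careful second-moment analysis of Hermite polynomials restricted to random hyperplanes. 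Once it is in hand, the standard noise-stability/invariance arguments transfer from full Gaussian space to the hyperplane, the low-degree decoding produces a labeling of $\mathcal{G}$ satisfying a nontrivial fraction of its constraints, and the reduction closes, yielding the conjecture under the PCP assumption.
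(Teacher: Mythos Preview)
Your high-level architecture matches the paper: a conditional reduction from a non-unique projection game (in the paper, the specific ``Subspaces Near-Intersection'' game over $\R^k$) to Boolean unique games via a Gaussian gadget, with the hyperplane concentration inequality for $\|(f|_H)^{\le d}-(f^{\le d})|_H\|_2$ as the key lemma, proved by a second-moment argument (the paper further reduces, via Schur's lemma on rotation-invariant quadratic forms, to functions of a single coordinate). The encoding is by half-spaces $x\mapsto\mathrm{sign}(\langle a,x\rangle)$; your ``Gaussian long code / dictator'' is presumably this, but the paper stresses that half-spaces are needed precisely because they combine the noise-stability of the long code with a Hadamard-like linearity that lets one fold in the linear constraints of the source game.

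However, your soundness plan has genuine gaps at two places where the paper departs from the standard template. First, ``standard noise-stability/invariance arguments'' is exactly what fails here: robust noise-stability results (Mossel--Neeman) only handle acceptance $1-\delta-\epsilon$ with $\epsilon\ll\delta$; at $1-C\delta$ the function need not correlate with any single half-space, so ``list-decoding the low-degree part to labels'' is not available. The paper instead keeps the full low-degree part $f_v^{\le d}$ and extracts a unit vector by \emph{iterated differentiation} in random directions $y_1,\dots,y_{d-1}$ until the degree-$1$ part is large; to make differentiation commute with restriction to the edge hyperplane, it restricts to ``zoom-in'' edges whose hyperplane contains $\mathrm{span}(y_1,\dots,y_{d-1})$ --- this is why the source conjecture is stated with zoom-ins. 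Second, two quantitative rates are essential and absent from your plan: the consistency step needs $\|(f_u|_H)^{\le d}-(f_v|_H)^{\le d}\|_2=\tilde O(\delta)$ rather than the $O(\sqrt{\delta})$ one gets from Parseval alone (the paper uses the level-$d$ hypercontractive inequality on the Boolean difference), and the concentration bound must be $O(1/n)$ rather than the easy $O(1/\sqrt{n})$, which the paper explicitly notes would be useless for the application.
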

The Unique Games Conjecture can be thought of as an amplified version of Conjecture~\ref{c:BUGC}, with the soundness error close to $0$ rather than close to $1$ and the alphabet size appropriately increased. It is open whether the Unique Games Conjecture follows from Conjecture~\ref{c:BUGC}. There were past attempts to prove this implication via a ``strong parallel repetition'', but those attempts uncovered an obstacle~\cite{Raz-counter,BHHRRS}.

\subsection{This Work}

In a previous work Khot and Moshkovitz~\cite{KM-candidate} suggested a {\em candidate} reduction for proving hardness of $1-\delta$ vs. $1-C\delta$ Boolean unique games, however they could not prove the soundness of the reduction.
In this work we define a problem, Subspaces Near-Intersection, and show a provably sound reduction from Subspaces Near-Intersection to $1-\delta$ vs. $1-C\delta$ Boolean unique games.
Importantly, the NP-hardness of Subspaces Near-Intersection -- which we conjecture but do not prove -- is in the same spirit of known PCP Theorems, and resembles in many ways the 2-to-2 Theorem.
\begin{theorem}[Main Theorem]\label{t:main} Assume the Subspaces Near-Intersection Conjecture (Conjecture~\ref{c:sub-near-intersection} in the sequel).
For any $C\geq 1$, for any sufficiently small $\delta>0$, distinguishing $1-\delta$ vs.~$1-C\delta$ Boolean unique games is NP-hard. In fact,
if the Subspaces Near-Intersection problem requires time $T$, then distinguishing $1-\delta$ vs.~$1-C\delta$ Boolean unique games requires time $\Omega(T)$.
\end{theorem}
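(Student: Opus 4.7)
The plan is to reduce a Subspaces Near-Intersection instance $I$ to a Boolean unique game $G$ whose vertices are pairs $(U, x)$ of a subspace $U$ from $I$ and a Gaussian point $x \in U$ (under an appropriate folding by the problem's natural symmetries). Two vertices $(U_1, x_1), (U_2, x_2)$ are connected by an edge whenever $U_1, U_2$ share an almost-common hyperplane $H$ through which $x_1, x_2$ are naturally identified; the unique ($k=2$) permutation records the $\pm 1$ sign induced by the canonical isomorphism $U_1 \cap H \to U_2 \cap H$. A $\pm 1$ labeling of $G$ is then equivalent to a family of Boolean functions $\{f_U\}$, one on each relevant subspace.

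For completeness, a YES instance supplies a direction $v^*$ that approximately lies in each $U$. Setting $f_U(x) = \mathrm{sign}(\langle \Proj_U v^*, x\rangle)$ produces linear threshold functions whose restrictions to each near-common $H$ disagree only within a thin Gaussian band about the decision boundary; averaged over the edge distribution, this has measure $O(\delta)$, giving value at least $1-\delta$.

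For soundness, assume value exceeds $1-C\delta$ via some labeling $\{f_U\}$, and extract an approximate common direction. Replace each $f_U$ by its Hermite truncation $f_U^{\leq d}$ for a suitable constant $d = d(C)$. The edge-satisfaction event, that $f_{U_1}|_H$ and $f_{U_2}|_H$ agree after the edge's sign correction, combined with the new concentration theorem from the abstract, implies $(f_{U_1}^{\leq d})|_H$ and $(f_{U_2}^{\leq d})|_H$ are close in $L^2$ after the same sign correction. Averaging over edges incident to each vertex and invoking hypercontractivity together with standard invariance-style decoding, the truncation $f_U^{\leq d}$ must concentrate its Hermite mass on a single low-degree character aligned with some direction $v_U^* \in U$. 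The pairwise consistency enforced by the edges then glues the family $\{v_U^*\}$ into a single near-common vector, yielding a witness for $I$.

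The main obstacle, and the technical heart of the paper, is the new Gaussian concentration theorem itself. The restriction operator does not commute with Hermite truncation, since high-degree mass of $f$ can in principle masquerade as low-degree mass after restriction to $H$; a naive bound on $\|(f|_H)^{\leq d} - (f^{\leq d})|_H\|_2$ is far too weak to beat the sharp $1-C\delta$ threshold. To make the reduction work for every constant $C$, one must prove this distance is typically negligible over a random $H$, with a quantitative rate that is independent of the ambient dimension $m$ and decays appropriately in $\delta$. Everything else in the soundness pipeline — hypercontractivity, influence/noise-stability arguments, dictatorship-style decoding — is adapted from standard UGC-era machinery; it is precisely this concentration estimate that upgrades the Khot--Moshkovitz candidate into a provably sound reduction.
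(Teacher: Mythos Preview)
Your outline captures the half-space encoding and correctly identifies the concentration theorem as the technical heart, but two genuine gaps would prevent the argument from going through.

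First, your construction has only consistency edges. The paper's unique game combines the consistency test (run with probability $1-p$) with a \emph{noise test} (probability $p = \Theta(\delta/\sqrt{k})$) that checks $f_v(\tilde x) = f_v(\tilde z)$ for $(1-\beta)$-correlated Gaussians $\tilde x,\tilde z$. Without the noise test nothing forces $\|f_v^{\leq d}\|_2$ to be large, and your truncation step is vacuous --- the functions $f_v$ could be entirely high-degree and still pass all consistency constraints. Second, your decoding is not ``standard invariance-style decoding'': $f_v^{\leq d}$ is a general degree-$d$ polynomial whose degree-$1$ part can vanish (e.g.\ when $f_v$ depends on several half-spaces), so it need not concentrate on any single direction $v_U^*$. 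The paper instead repeatedly differentiates $f_v^{\leq d}$ along fresh random Gaussian directions $y_1,\ldots,y_{d-1}$ until the degree-$1$ part becomes $\Omega(1)$ (guaranteed eventually by the Poincar\'e inequality and Carbery--Wright anti-concentration), and takes that degree-$1$ part as the vector $\sigma(v)$. Making this consistent across an edge requires showing both endpoints halt at the same iteration $i_u=i_v$ and, crucially, requires the $y_j$ to lie in $\Theta_e^\perp$ so that differentiation and restriction commute. This is precisely why the Subspaces Near-Intersection conjecture is stated with \emph{zoom-ins} on the subspace $Y$ spanned by the $y_j$, a feature your sketch does not account for. (A smaller correction: a YES instance supplies a per-vertex labeling $\sigma(v)$ with $A_v\sigma(v)=0$, not a single global direction $v^*$ lying in every subspace.)
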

Our reduction has the added benefit of being highly efficient (linear-sized). In contrast, the reduction in~\cite{KM-candidate} had an exponential blowup, as it was only meant to rule out polynomial time algorithms for unique games under plausible assumptions on exponential hardness. Like for the 2-to-2 problem, one would expect a reduction from \textsc{Sat} to Subspaces Near-Intersection to map size-$n$ instances of \textsc{Sat} to size $n^{c(\delta)}$ instances of Subspaces Near-Intersection, where $\delta$ is the completeness error in Subspaces Near-Intersection and $c(\delta)\geq 1/\delta$ is a function of $\delta$.

Subspaces Near-Intersection is discussed in the next section.
The main ideas of the proof of Theorem~\ref{t:main} are discussed in Section~\ref{s:ideas}. A key lemma is a new concentration theorem for the restriction of a function in Gaussian space to a random hyperplane. The lemma bounds the Euclidean distance between the degree-$d$ part of the restriction and the restriction of the degree-$d$ part. The formal statement and more details appear in Section~\ref{s:concentration-intro}.

\subsection{Subspaces Near-Intersection Conjecture}

First we discuss existing PCP theorems (projection games), and a projection game based on \textsc{3Lin}($\R$), then we define the new conjecture.

\subsubsection{Projection Games}

Existing optimal hardness of approximation results follow from the {\em proven} NP-hardness of approximating {\em projection games}~\cite{AS,ALMSS,Raz,MR08}. In (the symmetric version of) projection games, the verifier tests the answer of each prover {\em separately} in a way that depends solely on the question to the prover, and then checks {\em equality} between parts of the two answers (the projections). For instance, given a \textsc{Sat} instance the verifier may ask each prover for the assignment to a subset of the variables. Each subset spans clauses and the verifier checks that those clauses are satisfied (a separate test for each prover that depends only on the question to the prover). The two subsets intersect, and the verifier checks that the provers agree on the assignments to the variables in the intersection (a comparison on parts of the answer).
Formally:
\begin{definition}[\textsc{Projection Game}]
The input of a projection game consists of a bi-regular graph $G=(X,Y,E)$ whose $X$-degree is denoted $q$, an alphabet $\Sigma$ and sets $L_x\subseteq \Sigma^q$ for every vertex $x\in X$.
The task is to label each vertex $x\in X$ with a symbol $\sigma(x)\in L_x$, as to maximize the probability that, when one picks $e=(x,y),(x',y)\in E$, it holds $\sigma(x)_y=\sigma(x')_y$.
Sometimes one describes the game over the graph $(X,\set{(x,x')})$.
\end{definition}

It is known that it is NP-hard to distinguish projection games of value $1$ from projection games of value close to $0$~\cite{AS,ALMSS,Raz,MR08}, and moreover that it requires time $2^{n^{1-o(1)}}$ assuming the widely believed Exponential Time Hypothesis\begin{footnote}{The Exponential Time Hypothesis postulates that \textsc{Sat} requires time $2^{\Omega(n)}$ on inputs of size $n$.}\end{footnote} as follows from an almost-linear sized reduction from \textsc{Sat} to projection games~\cite{MR08}.

2-to-2 games are projection games where given $\sigma(x)_y\in\Sigma$ there are only {\em two} possibilities for $\sigma(x)\in L_x\subseteq \Sigma^q$. It is known that it is NP-hard to distinguish 2-to-2 games of value close to $1$ from 2-to-2 games of value close to $0$~\cite{KMS0,DKKMS0,DKKMS,BKS-ldt,KMMS,KMS}. However, 2-to-2 games are easier than general projection games, since they have algorithms that run in time $2^{n^{\poly(\delta)}}$~\cite{ABS}. Appropriately, the known NP-hardness reduction to 2-to-2 games maps size $n$ inputs of \textsc{Sat} to size $n^{c(\delta)}$ 2-to-2 games for a function $c(\delta)\geq 1/\delta$.

\subsubsection{\textsc{3Lin}($\R$) Projection Game}\label{s:3LIN-projection}

Subspaces Near-Intersection is a proxy for the following projection game based on the Khot-Moshkovitz~\cite{KM} robust real \textsc{3Lin}: The verifier picks uniformly at random $100k$ real \textsc{3Lin} equations $c_1,\ldots,c_{100k}$ and two sets $S_1,S_2$ of $k$ variables among their variables, where $\card{S_1\cap S_2} = k-1$. Note that any subset of the linear equations induced on $S_1$ or on $S_2$ forms a linear subspace of $\R^k$. The verifier sends $S_1$ to one prover, and receives a unit vector that represents an assignment to $S_1$'s variables. The unit vector must satisfy a random linear constraint on $S_1$. The verifier sends $S_2$ to the other prover, and receives a unit vector that represents an assignment to $S_2$'s variables. The vector must satisfy a random linear constraint on $S_2$. The verifier projects each of the vectors on the $k-1$ coordinates that correspond to the intersection $S_1\cap S_2$, and measures the Euclidean distance between the projections.
Suppose that there exists a prover strategy where the projections are identical with probability $1-\delta$. The task is to efficiently compute a prover strategy that minimizes the average Euclidean distance between the projections.

Simple approximation algorithms for this problem guarantee distances $O(\sqrt{\delta/k})$ and $O(1/k)$:
\begin{itemize}
\item {\em Basic semidefinite programming} achieves {\em square distance} $\delta/k$, since in the completeness case one achieves deviation $0$ with probability $1-\delta$ and deviation $1/\sqrt{k}$ with probability $\delta$. As a result, this algorithm can efficiently guarantee distance $O(\sqrt{\delta/k})$.
\item {\em Correlated sampling} is the strategy in which the provers guess a clause in $S_1\cap S_2$, satisfy it (with a norm $1$ assignment) and assign all other coordinates $0$. It achieves distance $1$ with probability\begin{footnote}{Note that the error probability of correlated sampling can be made $C/k$ if one considers a projection onto a subspace of dimension $k-C$ instead of $k-1$.}\end{footnote} $1/k$, and deviation $0$ with the remaining probability.
\end{itemize}
Hence, the question is whether one can efficiently compute a prover strategy where the average distance between the projections is, say, $0.0001\cdot \min\set{\sqrt{\delta/k},1/k}$.

Subspaces Near-Intersection is closely related to this projection game: there one compares the vectors on their projection to a {\em generic} hyperplane in $\R^k$, as opposed to an {\em axis-parallel} hyperplane.

\subsubsection{Subspaces Near-Intersection}

The Subspaces Near-Intersection game is a projection game that is defined over the reals\begin{footnote}{The intention is to consider real numbers up to a finite precision, so the errors introduced by the finite precision are much smaller than any other quantity involved. For the sake of clarity in exposition we do not explicitly address precision errors.}\end{footnote}. Each vertex is associated with a linear subspace in $\R^k$, and a labeling to the vertex is a unit vector that satisfies the constraints. Each edge is associated with a hyperplane in $\R^k$. The vectors on the endpoints of the edge should have the same restriction to the hyperplane of the edge.

\begin{definition}[Subspaces Near-Intersection]\label{d:subspaces-intersection}
The input is a regular graph $G=(V,E)$, $k\times k$ matrices $A_v$ with entries in $[-1,1]$ for the vertices $v\in V$, and unit vectors $\Theta_e\in \R^k$ for the edges.
We assume that, per vertex $v\in V$, when one picks a uniform edge $e=(u,v)\in E$ that touches $v$, the vector $\Theta_e$ is uniform.
The task is to label each vertex with a unit vector $\sigma(v)\in\R^k$ such that $A_v\sigma(v)=0$, as to maximize the number of edges $e=(u,v)\in E$ with $Proj_{\Theta_e^\perp}(\sigma(u))=Proj_{\Theta_e^\perp}(\sigma(v))$ (``satisfied edges'').
We say that the edge is $\alpha$-satisfied if $| Proj_{\Theta_e^\perp}(\sigma(u))-Proj_{\Theta_e^\perp}(\sigma(v))|_2\leq \alpha$.
\end{definition}

As before, in the case that there exists an assignment where the distance between the projections is $0$ with probability $1-\delta$ and $1/\sqrt{k}$ with probability $\delta$, a semidefinite programming algorithm that minimizes the square distance between the projections, would lead to distance $\sqrt{\delta/k}$ between the projections.
There is a natural matching semidefinite programming integrality gap for Subspaces Near-Intersection described in Appendix~\ref{s:sdp-ig}.
The correlated sampling algorithm we described for the \textsc{3Lin}($\R$) projection game in Sub-section~\ref{s:3LIN-projection} no longer applies.

There is an analogy between the games considered in the recent proof of the 2-to-2 Theorem and the Subspaces Near-Intersection game: in both games for every edge the label of one endpoint does not uniquely determine the label of the other endpoint, but rather {\em nearly} determines it, leaving out one ``degree of freedom''. In the 2-to-2 games of~\cite{KMS0,DKKMS0,KMS}, labels are vectors over the binary finite field, and one degree of freedom means that there are two possibilities for the answer of the other prover. Here labels are real vectors and one of their ``coordinates'' remains undetermined.

\remove{
There is a semidefinite programming algorithm for Subspaces Near-Intersection that minimizes $\Expc{e=(u,v)\in E}{|Proj_{\Theta_e^\perp}(\sigma(u))-Proj_{\Theta_e^\perp}(\sigma(v))|_2^2}$. In the completeness case, this quantity is roughly $\delta$, and hence the algorithm can guarantee the same in the soundness case. Thus, the algorithm guarantees that for a typical edge $e=(u,v)\in E$ it holds that $|Proj_{\Theta_e^\perp}(\sigma(u))-Proj_{\Theta_e^\perp}(\sigma(v))|_2 \leq \Theta(\sqrt{\delta})$.
Also note that a typical coordinate in the unit vector $\sigma(v)\in\R^k$ is of magnitude $1/\sqrt{k}$. Therefore, in the completeness case, with probability $1-O(\delta)$ we have\begin{footnote}{As $k$ gets larger, the Subspaces Near-Intersection problem becomes closer to a unique game. Consequently, we suggest to focus on a moderate answer size, say $k=\tilde{\Theta}(1/\delta)$, for which the difference from a unique game is sufficiently large. Conveniently, in this regime of parameters known approximation algorithms for unique games fail~\cite{CMM}.}\end{footnote} $\card{\sigma(u) - \sigma(v)}_2^2 \leq O(1/k)$. Hence a semidefinite programming algorithm can even guarantee $\card{\sigma(u) - \sigma(v)}_2 \leq O(\sqrt{\delta + 1/k})$ for a typical edge $(u,v)\in E$ (i.e., proximity on the entire vector, not just the projection on the hyperplane).
The Subspaces Near-Intersection Conjecture is that it is NP-hard to obtain $|Proj_{\Theta_e^\perp}(\sigma(u))-Proj_{\Theta_e^\perp}(\sigma(v))|_2 \ll \Theta(\sqrt{\delta + 1/k})$.
}

For technical reasons, and similarly to the proof of the 2-to-2 Theorem, we will define a slight strengthening using {\em zoom-in}s.
For a linear subspace $Y\subseteq\R^k$ we define the $Y$-zoom-in Subspaces Near-Intersection game as follows: Focus on edges $e\in E$ where $Y\subseteq \Theta_e^{\perp}$, i.e., one can write $\Theta_e^\perp = Y + S_e$, where $S_e$ is a hyperplane in $Y^\perp$.
An edge is satisfied if $Proj_{S_e}(\sigma(u)) = Proj_{S_e}(\sigma(v))$ and is $\alpha$-satisfied if $\card{Proj_{S_e}(\sigma(u)) - Proj_{S_e}(\sigma(v))}_2\leq\alpha$.

\begin{conj}[Subspaces Near-Intersection Conjecture]\label{c:sub-near-intersection}
There exists a global constant $0<\alpha<1$, such that for any $\varepsilon,\delta>0$, $r\in\N$, there exists $k\geq 1$ such that $\sqrt{\delta/k}\gg 1/k$, and the following is NP-hard: The input is an instance of the Subspaces Near-Intersection problem. The task is to distinguish between the cases:
\begin{itemize}
\item {\em Completeness:} There exists a labeling $\sigma:V\to\R^k$ that satisfies\begin{footnote}{Near satisfaction suffices; see Section~\ref{s:future}.}\end{footnote} at least $1-\delta$ fraction of the edges $e=(u,v)\in E$. The remaining edges are $O(1/\sqrt{k})$-far from satisfied.

\item {\em Soundness:} For any $r$-dimensional $Y\subseteq \R^k$, for any labeling $\sigma:V\to\R^k$, the probability over the choice of $e=(u,v)$ in the $Y$-zoom-in, that $e$ is $\alpha\sqrt{\delta/k}$-satisfied is at most $\varepsilon$.
\end{itemize}
\end{conj}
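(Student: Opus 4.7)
The statement to address is a conjecture, not a theorem with a proof in this paper, so my ``proposal'' is really a research strategy whose success is uncertain. The natural blueprint is the path from label cover to $2$-to-$2$ games in the Dinur--Khot--Kindler--Minzer--Safra program, translated from the combinatorial Grassmann setting to the real-valued setting dictated by the $3$\textsc{Lin}$(\R)$ projection game of Section~\ref{s:3LIN-projection}. I would try to realize Subspaces Near-Intersection as (roughly) the ``axis-free'' version of that game, with the zoom-in playing the role of Grassmann zoom-ins.

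First I would take the Khot--Moshkovitz robust real $3$\textsc{Lin} PCP as the starting point and produce the $3$\textsc{Lin}$(\R)$ projection game of Section~\ref{s:3LIN-projection}. To convert to an instance of Definition~\ref{d:subspaces-intersection}, each prover question is replaced by the subspace $\ker A_v \subseteq \R^k$ cutting out the linear equations queried, and each edge is assigned a hyperplane $\Theta_e^\perp$ that generalizes the axis-parallel comparison on the $k-1$ shared coordinates to a uniform random hyperplane. A symmetrization step over the choice of basis then enforces the per-vertex uniform marginal on $\Theta_e$. Completeness is largely inherited from the $3$\textsc{Lin}$(\R)$ PCP: a satisfying global real assignment gives equal projections on every hyperplane, with gap $O(1/\sqrt{k})$ occurring only on the $\delta$ fraction of edges whose hyperplane passes through a violated clause, matching the promise.

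The hard part is soundness in the presence of zoom-ins. In the $2$-to-$2$ story this is the Grassmann consistency theorem, and one should expect an analogous ``real Grassmann'' consistency statement: any labeling $\sigma: V \to \R^k$ that is $\alpha\sqrt{\delta/k}$-consistent on a non-negligible fraction of edges inside some $r$-dimensional zoom-in must be globally explained by a low-complexity object on $\R^k$ (for example, a bounded-degree Hermite polynomial). I would attempt this by expanding $\sigma(v)$ in a Hermite basis on $\ker A_v$, invoking the concentration theorem announced in Section~\ref{s:concentration-intro} to argue that closeness of the restrictions of $\sigma(u), \sigma(v)$ to a random hyperplane $\Theta_e^\perp$ forces closeness of their low-degree Hermite parts, and then list-decoding these low-degree pieces into a candidate global labeling that contradicts the soundness of the underlying real $3$\textsc{Lin} PCP.

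The main obstacle I anticipate is the zoom-in conditioning itself. Restricting to edges with $Y \subseteq \Theta_e^\perp$ conditions on a measure-zero event in the unconditioned distribution, and the resulting edge measure may lose the pseudorandomness that a concentration-based consistency argument relies on. In the $2$-to-$2$ setting this was handled by a delicate covering/expansion analysis on the Grassmann graph; here one would need an analogous spectral bound for a natural random walk on real subspaces of $\R^k$ containing $Y$, likely combined with Gaussian-space noise stability. I do not see a short proof: as for the $2$-to-$2$ Theorem, a full proof of this conjecture almost certainly requires new pseudorandomness tools tailored to real subspaces, and identifying the correct ``real Grassmann'' structure is the central research question.
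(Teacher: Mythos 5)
This statement is a conjecture, not a theorem: the paper explicitly assumes it (Theorem~\ref{t:main} is conditional on it) and never attempts a proof, so there is no paper argument to compare against. Your proposal is therefore rightly framed as a research sketch, and you correctly identify that any proof would need a ``real Grassmann'' agreement theorem analogous to the DKKMS machinery. That said, the sketch has concrete gaps worth naming.

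The ``symmetrization step over the choice of basis'' is not a reduction step; it is essentially the whole open problem. In the \textsc{3Lin}($\R$) projection game of Section~\ref{s:3LIN-projection}, the comparison hyperplane is forced to be axis-parallel: it is the coordinate subspace spanned by the $k-1$ shared variables, determined by the combinatorics of the underlying \textsc{3Lin} instance (which equations/variables the two provers share). Replacing it by a uniformly random hyperplane of $\R^k$ is not achievable by rotation -- a per-vertex rotation destroys edge consistency (the two provers' vectors live in different coordinate systems that agree only on the shared variables), and a global rotation does not change the joint distribution of the edge hyperplane and the vertex subspaces -- and it is not cosmetic: the correlated-sampling cheating strategy applies to the axis-parallel game but not to the generic-hyperplane one, so the two problems have genuinely different soundness landscapes. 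The paper itself flags this (Section~\ref{s:future}) as the reason Theorem~\ref{t:concentration} does not transfer to the axis-parallel setting. Separately, your soundness sketch conflates two layers of the paper: the labels in Subspaces Near-Intersection are single unit vectors $\sigma(v)\in\R^k$, not functions, so there is no Hermite expansion of $\sigma(v)$ to take. The Hermite and concentration machinery lives inside the reduction \emph{from} Subspaces Near-Intersection to Boolean unique games (Sections~\ref{s:ug}--\ref{s:soundness}), where the objects are the prover functions $f_v:\R^k\to\{\pm1\}$, not inside a hypothetical proof of Conjecture~\ref{c:sub-near-intersection}. A proof of the conjecture would have to supply new agreement-testing and expansion tools over real subspaces (including over all $r$-dimensional zoom-ins $Y$), for which even candidate statements are not currently known.
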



\subsection{Main Ideas}\label{s:ideas}

This work builds on an idea suggested by Khot and Moshkovitz~\cite{KM-candidate} for proving hardness of unique games.
Like\begin{footnote}{The candidate reduction in~\cite{KM-candidate} had a variation on half-space encoding, namely, ${\sf interval}(\tup{a,x})$, where ${\sf interval}$ changes sign as one crosses any integer point, not just $0$. Crucially, we use half-spaces in the current paper.}\end{footnote}~\cite{KM-candidate} we replace the commonly used long code and Hadamard code by an encoding by half-spaces. We first explain the half-space idea, and then describe our new ideas in using and analyzing half-space encodings. 

The half-space defined by $a\in\R^k$ is $h_a:\R^k\to\set{\pm 1}$, where
$$h_a(x) = {\sf sign}(\tup{a,x}).$$
The half-space encoding of $a$ is the truth-table of $h_a$ where we enumerate over all $x\in\R^k$ up to a precision that makes the rounding error sufficiently smaller than any of the other quantities involved.

Half-space encoding is similar in structure to the Hadamard encoding, where a vector $a\in\set{0,1}^k$ is encoded as the linear function $l_a(x) = \tup{a,x}$ for all $x\in\set{0,1}^k$, and arithmetic is done over the finite field $\set{0,1}$. This similarity gains us two benefits that the Hadamard encoding has:
\begin{enumerate}
\item We can test linear conditions on $a\in\R^k$ by testing its encoding. Specifically, $\tup{a,c} =0$ for a vector $c\in\R^k$ iff $h_a(x+c) = h_a(x)$ for every $x\in\R^k$.
(On the soundness side we need $\card{\tup{a,c}}\gg 0$ to detect that the inequality does not hold; this the reason we require robustness).

\item Encodings of similar strings have common parts. Suppose that the projections of $a,a'\in\R^{k}$ on a hyperplane $\Theta^\perp$ are the same. Then, when one picks $x\in\Theta^\perp$ it holds that $\tup{a,x} = \tup{a',x}$. Importantly, the union of all hyperplanes covers $\R^k$ uniformly.

\end{enumerate}
Note that both equations $h_a(x+c) = h_a(x)$ and $h_a(x) = h_{a'}(x')$ are unique tests.
We remark that a property like the first is used in any optimal inapproximability result that uses the Hadamard code, and a property like the second was used in the proof of the 2-to-2 Games Theorem (under the name ``sub-code covering'').
Crucially, half-space encoding has a property that the Hadamard encoding does not have, but the long code does have, namely, a unique test:
\begin{enumerate}
\item[3.] Noise stability test. Half-spaces optimize the success probability of the following test: pick random Gaussian $x\in\R^k$, perturb $x$ to obtain $x'\in\R^k$ also distributed as a Gaussian. Check whether $h_a(x) = h_a(x')$.
\end{enumerate}

In discrete space, the long code encoding $d_i(x) = x_i$ optimizes the analogous noise stability test, and this was used to show hardness of Boolean unique games assuming the Unique Games Conjecture~\cite{KKMO}.

In~\cite{KM-candidate} it was suggested that to prove NP-hardness of Boolean unique games one needs robustness of the noise stability test:
\begin{quote}
Suppose that a half-space passes the noise stability test with probability $1-\delta$. Assume that a balanced function $f:\R^k\to\set{\pm 1}$ passes the test with probability $1-C\delta$ for $C>1$. Does $f$ correspond to a half-space?
\end{quote}
Works that dealt with robustness in noise stability~\cite{MN0,MN,E2} proved such results for functions that pass the test with probability at least $1-\delta-\epsilon$ for $\epsilon\ll\delta$. Such must be the same as a half-space almost everywhere.
When the acceptance probability is $1-C\delta$, the function $f$ can have many forms, including functions of $C$ half-spaces, low degree threshold functions, and many more. In particular, the function may have no correlation with any half-space.
Mossel and Neeman~\cite{MN3} note that functions that pass the noise stability test with constant probability have to correlate with a half-space {\em after a large random shift}, but we are unable to use this fact since a shift hurts the second property above.

Our idea is not to focus on a half-space that correlates with $f$ (which corresponds to the linear part of $f$), but rather consider the {\em low degree part} of $f$ (where the low degree part is obtained from the Hermite expansion of $f$). By the noise stability of $f$, its low degree part must be large. We argue about consistency between low degree parts of functions that are partly similar.
We also argue about the ability to extract vectors that satisfy linear tests from low degree parts that satisfy the same tests.

Crucially, all our estimates must be extremely tight, since the gap for Boolean unique games is extremely narrow to begin with, $1-\delta$ vs. $1-\Theta(\sqrt{\delta})$. We obtain the required tightness using two tools: hypercontractivity and concentration.

Hypercontractive inequalities (see, e.g.,~\cite{O}) bound norms of a ``smoothed'' function by norms of the original function. Here we use the Gaussian hypercontractive inequality, through the implied {\em level-$d$ inequalities} (see, e.g.,~\cite{O}), to show that Boolean functions that are the same with probability at least $1-\delta$ over the input must have low degree parts that are $\approx\delta$-close in $l_2$ distance. In contrast, a less careful estimate, not using Booleanity and hypercontractivity, only gives $\sqrt{\delta}$-closeness, which is useless in our context. Note that the functions we compare are restrictions of functions $f$ to hyperplanes (as in the second property above).

Concentration is discussed in Section~\ref{s:concentration-intro}. It considers functions restricted to a random hyperplane, and bounds the typical Euclidean distance of the low degree part of the restriction from the restriction of the low degree part. We use concentration to argue consistency between the low degree parts of the restrictions of a function to different hyperplanes. We note that the much easier to prove distance of $O(1/\sqrt{k})$ rather than $O(1/k)$ would have been useless for our application.

\subsection{Concentration of Degree-d Part}\label{s:concentration-intro}


Let $f:\R^n\to\R$, and let $f^{\leq \degree}$ be the degree-$\degree$ part of $f$. Note that $f^{\leq \degree}$ is a {\em global} property of $f$.
Let $\Theta$ be uniformly distributed in the $(n-1)$-dimensional sphere, so $\Theta^{\bot}$ is a random hyperplane in $\R^n$. Denote the restriction of $f$ to $\Theta^\perp$ by $f_{|\Theta^\perp}$. This is a {\em local} part of $f$.
We show a {\em local-to-global theorem}: the degree-$\degree$ part of $f_{|\Theta^\perp}$ is extremely close to the restriction of $f^{\leq \degree}$ to $\Theta^\perp$:

\begin{thm}[Concentration of degree-$d$ part]\label{t:concentration} For any $\varepsilon>0$, for every $0$-homogeneous\begin{footnote}{$f$ is $0$-homogeneous if $f(cx)=f(x)$ for every $x\in\R^n$ and $c>0$.}\end{footnote} function $f:\R^n\to\R$ with bounded $2$-norm, with probability at least $1-\varepsilon$ over $\Theta$,
$$\card{(f_{|\Theta^\perp})^{\leq\degree} - (f^{\leq \degree})_{|\Theta^\perp}}_2\leq {O}_{d,\varepsilon}({1}/{n}).$$
\end{thm}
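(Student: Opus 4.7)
The plan is to reduce the difference to a ``leakage'' term capturing the Hermite-degree-$>\degree$ mass of $f$ that collapses to degree $\leq\degree$ after restriction, then bound its expected $L^2$-norm via rotation-invariance, and finally pass to a high-probability statement by Markov. Decomposing $f=f^{\leq\degree}+f^{>\degree}$, observe that $(f^{\leq\degree})_{|\Theta^\perp}$ is a polynomial of total degree $\leq\degree$ in the hyperplane coordinates, so its own Hermite-degree-$\leq\degree$ part equals itself. Consequently,
$$(f_{|\Theta^\perp})^{\leq\degree} - (f^{\leq\degree})_{|\Theta^\perp} \;=\; \bigl((f^{>\degree})_{|\Theta^\perp}\bigr)^{\leq\degree},$$
and it suffices to bound the $L^2$-norm of the right-hand side with probability $\geq 1-\eps$ over $\Theta$.

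Write $f^{>\degree}=\sum_{k>\degree}f^{=k}$, and for fixed $\Theta$ work in coordinates with $\Theta=e_n$, splitting $x = x_\parallel e_n + x_\perp$. Each $f^{=k}$ expands uniquely in the $\Theta$-adapted Hermite basis as $\sum_{j+|\beta|=k} c_{j,\beta}(\Theta)\,h_j(x_\parallel)\,h_\beta(x_\perp)$. Restriction evaluates $x_\parallel=0$ (multiplying each term by $h_j(0)$, which vanishes for odd $j$), and the Hermite-degree-$\leq\degree$ cutoff on $\Theta^\perp$ keeps only $|\beta|=k-j\leq\degree$. Using $|h_j(0)|=O(1)$ together with the parity cancellation $h_1(0)=0$, one obtains
$$\|((f^{=k})_{|\Theta^\perp})^{\leq\degree}\|_2^2 \;\leq\; C\sum_{\substack{j\geq \max(2,k-\degree)\\ j \text{ even}}} \|\Pi_{j,\Theta} f^{=k}\|_2^2,$$
where $\Pi_{j,\Theta}$ projects the Hermite-level space $V_k$ onto its parallel-degree-$j$ subspace.

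The core estimate is that, for every $k>\degree$ and every $j\geq 2$,
$$\mathbb{E}_\Theta \|\Pi_{j,\Theta} f^{=k}\|_2^2 \;\leq\; C_{k,\degree}\, n^{-j}\, \|f^{=k}\|_2^2.$$
This follows from rotation-invariance: $\mathbb{E}_\Theta[\Pi_{j,\Theta}]$ is $O(n)$-equivariant on $V_k$, so by the $O(n)$-irreducible decomposition $V_k=\bigoplus_{m\le k,\,m\equiv k\,(2)}H_m^{(n)}$ it acts as a scalar $\lambda_{k,j,m}$ on each spherical-harmonic block. A direct computation on test functions $\phi_m\in H_m^{(n)}$ of the form $p(\|x\|^2)\,Y_m(x)$ (associated Laguerre polynomial $p$ times harmonic extension $Y_m$) shows $\lambda_{k,j,m}=O_{k,\degree}(n^{-j})$ uniformly in $m\geq 1$; an explicit instance of this calculation (for $k=3$, $j=2$, $m=1$) indeed yields $\lambda_{3,2,1}\sim 1/n^2$. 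The $m=0$ block would instead give only $n^{-1}$, but $0$-homogeneity closes off this channel: the $m=0$ spherical-harmonic component of $f$ is the constant $\int_{S^{n-1}} f\, d\sigma$, which lives entirely in $V_0$ and contributes nothing to $f^{=k}$ for $k>0$. Summing over $j$ yields $\mathbb{E}_\Theta\|((f^{=k})_{|\Theta^\perp})^{\leq\degree}\|_2^2 \leq C'_{k,\degree}\, n^{-2}\, \|f^{=k}\|_2^2$ for each $k>\degree$.

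To aggregate over $k$, truncate at a level $K=K(\degree,\eps)$: the contribution from $\degree<k\leq K$ is $O_{\degree,\eps}(n^{-2})\|f\|_2^2$ by Cauchy-Schwarz, since the constants $C'_{k,\degree}$ grow only polynomially in $k$. For the tail $k>K$, combine the crude bound $\|((f^{=k})_{|\Theta^\perp})^{\leq\degree}\|_2\leq\|(f^{=k})_{|\Theta^\perp}\|_2$ with the identity (valid for $0$-homogeneous $f$)
$$\mathbb{E}_\Theta\|f_{|\Theta^\perp}\|_2^2 \;=\; \mathbb{E}_{\sigma\sim S^{n-1}}\bigl[f(\sigma)^2\bigr] \;=\; \|f\|_2^2,$$
together with the spherical-harmonic/Laguerre description of $\|f^{=k}\|_2^2$ for $0$-homogeneous $f$, to make the tail contribute $o_\eps(n^{-2})$ by choosing $K$ large enough. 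Markov's inequality applied to $\|((f^{>\degree})_{|\Theta^\perp})^{\leq\degree}\|_2^2$ delivers the claimed $O_{\degree,\eps}(1/n)$ bound with probability at least $1-\eps$. The main obstacle I anticipate is the Schur-scalar computation---in particular, establishing $\lambda_{k,j,m}=O_{k,\degree}(n^{-j})$ uniformly in $m\geq 1$, which pins down the $1/n$ rate rather than the easier $1/\sqrt{n}$---with a secondary subtlety being the Hermite-tail control, which crucially relies on $0$-homogeneity to make the restriction operator uniformly bounded in $\Theta$.
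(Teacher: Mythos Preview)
Your proposal shares the paper's core ingredients (rotation-invariance, a Schur-type reduction, a second-moment bound of order $n^{-2}$, then Markov), but the two arguments are organized quite differently, and your organization creates difficulties the paper's avoids.

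The paper applies Schur's lemma \emph{once}, to the entire quadratic form $Q_k(f)=\EE_\Theta\|b_k(f;\Theta)-b_k(f)\|_{HS}^2$ viewed as a functional on $L^2(S^{n-1})$ (using $0$-homogeneity to pass to the sphere). By Funk--Hecke, $Q_k(f)$ depends only on $\|\mathrm{Proj}_{\mathcal{S}_i}f\|_{L^2(S^{n-1})}^2$ for $i\le 8k$, so $f$ may be replaced by a \emph{one-variable} polynomial $h$ of degree $\le 8k$ with the same low spherical-harmonic profile. On such $\tilde f(x)=h(\sqrt{n}\,x_1/|x|)$ the quantity $\|b_k(f;\theta)-b_k(f)\|^2$ depends only on $\theta_1=\langle\theta,e_1\rangle$, and an explicit calculation shows it is $O_k(\theta_1^4+n^{-2})$; since $\EE\theta_1^4=O(n^{-2})$ one is done. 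This handles all of $f$ at once and never requires summing over Hermite degrees of $f$.

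Your route instead applies Schur on each Hermite level $V_k$ for $k>\degree$, producing scalars $\lambda_{k,j,m}$, and then sums over $k$. Two of your own flagged obstacles become genuine gaps in this setup. First, the tail argument as written cannot work: you propose to choose a \emph{fixed} cutoff $K=K(\degree,\eps)$ and make $\sum_{k>K}$ contribute $o_\eps(n^{-2})$, but a $0$-homogeneous $f$ with $\|f\|_2=1$ can be (say) a spherical harmonic of arbitrarily high degree $M$, and then essentially all Hermite mass of $f$ sits at $k\ge M$; no fixed $K$ independent of $f$ controls this. The crude bound plus the identity $\EE_\Theta\|f_{|\Theta^\perp}\|_2^2=\|f\|_2^2$ gives only $O(1)$ for the tail, not $o(n^{-2})$. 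What you would actually need is a \emph{uniform-in-$k$} bound $C'_{k,\degree}\le C_\degree$; then no truncation is necessary. Second, the claim that $C'_{k,\degree}$ grows only polynomially in $k$ is suspect: already $\lambda_{k,k,k}\sim k!\,n^{-k}$ (test on $\phi=x_1\cdots x_k$), so the constants are not polynomial, and establishing the needed uniform-in-$k$ control of $\sum_{j\ge k-\degree}\lambda_{k,j,m}$ at the $n^{-2}$ scale (rather than the easy $n^{-1}$ from $\sum_j j\lambda_{k,j,m}=k/n$) is essentially the whole theorem.

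In short: the paper's single Schur step on $L^2(S^{n-1})$, followed by reduction to a one-variable polynomial of bounded degree and an explicit $\theta_1^4$ computation, replaces both of your hard steps with a finite, concrete calculation.
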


Local-to-global theorems, like linearity testing~\cite{BLR} and low degree testing~\cite{RuSu} over finite fields, are key to PCP. 
With Theorem~\ref{t:concentration} we add a new, tight, low degree testing -type theorem, this time in the highly challenging case of real functions and approximate equality. To get intuition for why this case is so challenging, note that two different real low degree polynomials can be similar on much of the space (Carbery-Wright (Lemma~\ref{l:CW}) gives tight bounds). In contrast, two different low degree polynomials over a finite field are vastly different, and this is key to existing combinatorial and algebraic techniques, which we cannot use.
Standard analytic techniques (e.g., Hermite analysis, or a sampling theorem of Klartag and Regev~\cite{KlartagRegev}) give an upper bound of $O(1/\sqrt{n})$ rather than $O(1/n)$ even for $\degree=1$. As we remarked above, such bounds are useless for our needs.

Our proof is by a delicate second moment argument using symmetry considerations. Crucially, the second moment is a rotationally-invariant quadratic form in $f$, and hence we can use Schur's lemma from representation theory that classifies rotationally-invariant quadratic forms. The lemma implies that the second moment depends only on the spectrum of $f$, and not on its identity. Our calculations can therefore be significantly simplified by focusing on $f$ that depends only on one of its variables. Given a function that depends on one direction, the expression that we need to bound will only depend on the angle between this direction and $\Theta$. The technical bulk of the proof then amounts to showing that this dependence is quadratic in the scalar product, meaning that it is typically of the order $1/n$.

\subsection{The Road Ahead}\label{s:future}

This paper suggests two paths to NP-hardness of Boolean unique games:
\begin{enumerate}
\item Prove NP-hardness of Subspaces Near-Intersection as in Conjecture~\ref{c:sub-near-intersection}. This paper implies that NP-hardness of Boolean unique games would follow.
\item Lift the reduction in this paper to a reduction from the Khot-Moshkovitz NP-hard \textsc{3Lin}($\R$) to Boolean unique games. The reduction was outlined in Sub-section~\ref{s:3LIN-projection}.
\end{enumerate}
In this sub-section we give more details about each of these paths.

One can weaken the Subspaces Near-Intersection conjecture substantially and the analysis in this paper would still go through (with modifications): The verifier can project onto subspaces of dimension, say, $k-100$, instead of dimension $k-1$. In the completeness case there could be approximate equality (with deviation $O(\delta/\sqrt{k})$) rather than exact equality.
It is enough to have large soundness error, say $\varepsilon = 0.99$, instead of low error. The distance of the projections in the soundness case can be of the order of $\tilde{\Theta}(\delta/\sqrt{k} + 1/k)$, rather than $\Theta(\sqrt{\delta/k})$.

The reduction in this paper can be lifted to a reduction from a \textsc{3Lin}($\R$) projection game like we described in Sub-section~\ref{s:3LIN-projection} (instead of Subspaces-Near Intersection) to Boolean unique games. In this setting, we suggest to focus on projections onto subspaces of dimension sufficiently smaller than $k-1$, as to decrease the probability that the correlated sampling algorithm achieves distance $0$.
To analyze such a reduction one would need to address subspaces that are axes-parallel rather than generic, and this requires ideas beyond the ones in this paper. In particular, the concentration theorem we prove is no longer directly applicable.
In the authors' opinion, this path is the most promising path towards hardness of Boolean unique games.

\remove{
\subsection{Open Problems}

The main remaining open problems towards a proof of the Unique Games Conjecture are:
\begin{enumerate}
  \item Prove the Subspaces Near-Intersection Conjecture, possibly using ideas from the hardness of 2-to-2 games~\cite{KMS0,DKKMS0,DKKMS,BKS-ldt,KMMS,KMS} and Robust \textsc{3Lin}~\cite{KM}.
  \item Improve the hardness result for Boolean unique games in this paper from $1-\delta$ vs. $1-C\delta$ for any $C\geq 1$ to $1-\delta$ vs. $1-\Theta(\sqrt{\delta})$ matching the approximation algorithm of Goemans and Williamson~\cite{GW}.
  \item Amplify the hardness of unique games to $1-\delta$ vs. $\varepsilon$ for any $\delta,\varepsilon>0$ over a large alphabet $k=k(\varepsilon,\delta)$ as in the original Unique Games Conjecture~\cite{Khot}.
\end{enumerate}
}

\section{Preliminaries}\label{s:preliminaries}

\subsection{Hermite Polynomials}

Let $\normal^{n}$ denote the $n$-dimensional Gaussian distribution with
$n$ independent  mean-$0$ and variance-$1$ coordinates.
The space of all real functions $f:\R^n\to\R$
with $\Expc{x\sim\normal^{n}}{f(x)^2}<\infty$ is denoted $L^2(\R^n,\normal^{n})$. This is an inner product
space with inner product
 $$\left< f, g\right> \defeq  \Expc{x\sim\normal^{n}}{f(x)g(x)}.$$
For a natural number $j$, the $j$'th {\em Hermite polynomial}
$H_j:\R\to\R$ is
$$H_j(x) = \frac{1}{\sqrt{j!}}\cdot(-1)^j e^{x^2/2}\frac{d^j}{dx^j} e^{-x^2/2}.$$
The first few Hermite polynomials are $H_0\equiv 1$, $H_1(x) = x$, $H_2(x) = \frac{1}{\sqrt{2}} \cdot(x^2 - 1)$, $H_3(x) = \frac{1}{\sqrt{6}}\cdot(x^3 - 3x)$, $H_4(x) = \frac{1}{2\sqrt{6}}\cdot(x^4 - 6x^2 + 3)$.
The Hermite polynomials satisfy:
\begin{prop}[Orthonormality]\label{c:Hermite Orthogonality} For every
$j$, $\tup{H_j,H_j}=1$.  For every $i\neq j$, $\tup{H_i,H_j} = 0$.
In particular, for every $j\geq 1$, $\Expc{x\in \normal{}}{H_j(x)} = 0$.
\end{prop}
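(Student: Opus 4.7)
The plan is to prove orthonormality directly from the Rodrigues-type formula defining $H_j$, using integration by parts to reduce the inner product of two Hermite polynomials to a simple calculation. After that, the statement about expectations falls out as an immediate corollary.

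First I would rewrite $\tup{H_i,H_j}$ as a Gaussian integral. Using the definition,
$$H_j(x)\cdot\tfrac{1}{\sqrt{2\pi}}e^{-x^2/2} = \tfrac{(-1)^j}{\sqrt{j!\cdot 2\pi}}\cdot\tfrac{d^j}{dx^j}e^{-x^2/2},$$
so, assuming without loss of generality $i\leq j$,
$$\tup{H_i,H_j}=\int_{-\infty}^{\infty} H_i(x)\cdot\tfrac{(-1)^j}{\sqrt{j!}}\cdot\tfrac{1}{\sqrt{2\pi}}\tfrac{d^j}{dx^j}e^{-x^2/2}\,dx.$$
Next I would integrate by parts $j$ times. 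Since $H_i$ is a polynomial and $e^{-x^2/2}$ decays faster than any polynomial, boundary terms vanish at $\pm\infty$; each integration by parts moves one derivative from the Gaussian onto $H_i$ and flips a sign. After $j$ steps the signs cancel and we obtain
$$\tup{H_i,H_j}=\tfrac{1}{\sqrt{j!}}\int_{-\infty}^{\infty}\tfrac{d^j H_i}{dx^j}(x)\cdot\tfrac{1}{\sqrt{2\pi}}e^{-x^2/2}\,dx.$$

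Now I would split into cases. If $i<j$, then $H_i$ is a polynomial of degree $i<j$, so its $j$th derivative is identically $0$ and the integral vanishes; by symmetry this handles $i\neq j$. If $i=j$, I would note from the definition that the leading coefficient of $H_j$ is $1/\sqrt{j!}$ (since $H_j$ is a degree-$j$ polynomial obtained by applying $\sqrt{j!}^{-1}(-1)^j e^{x^2/2}$ to a $j$-fold derivative of the Gaussian, producing a leading term $\tfrac{1}{\sqrt{j!}}x^j$). Hence $\tfrac{d^j H_j}{dx^j}=j!/\sqrt{j!}=\sqrt{j!}$, a constant, so
$$\tup{H_j,H_j}=\tfrac{1}{\sqrt{j!}}\cdot\sqrt{j!}\cdot\int\tfrac{1}{\sqrt{2\pi}}e^{-x^2/2}\,dx=1.$$
Finally, for $j\geq 1$, $\Expc{x\sim\normal}{H_j(x)}=\tup{H_j,H_0}=0$ since $H_0\equiv 1$ and orthogonality applies. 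The only nontrivial step is verifying that the leading coefficient of $H_j$ is $1/\sqrt{j!}$; this can be proved by a quick induction on $j$ using $\tfrac{d^j}{dx^j}e^{-x^2/2}=(-1)^j p_j(x)e^{-x^2/2}$ with $p_j$ a monic polynomial of degree $j$. There is no real obstacle here; the argument is textbook, and the only care needed is in tracking the signs and the $\sqrt{j!}$ normalizations.
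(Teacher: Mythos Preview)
Your proof is correct and is the standard textbook argument via integration by parts. The paper itself does not supply a proof of this proposition; it is stated as a known fact about Hermite polynomials, so there is nothing to compare against beyond noting that your argument is the expected one.
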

\remove{
\begin{prop}[Addition formula]\label{c:Hermite Addition}
$$H_j\left(\frac{x+y}{\sqrt{2}}\right) = \frac{1}{2^{j/2}}\cdot\sum_{k=0}^{j}
\sqrt{\binom{j}{k}}H_{k}(x)H_{j-k}(y).$$ 
\end{prop}
\begin{prop}[Recursive relation]
$$H_{i+1}(x) = xH_i(x) - i H_{i-1}(x).$$
\end{prop}
}
The multi-dimensional Hermite polynomials are:
$$H_{j_1,\ldots,j_n}(x_1,\ldots,x_n) = \prod_{i=1}^{n} H_{j_i}(x_i).$$
For multi-indices $L=(l_1,\ldots,l_n)$ and $T=(t_1,\ldots,t_n)$ we denote $L\leq T$ if $l_i\leq t_i$ for every $i$. We write $T-L$ to denote $(t_1-l_1,\ldots,t_n-l_n)$. We write $C^T$ to denote $C^{\sum_i t_i}$ and $\binom{T}{L}$ to denote $\binom{t_1}{l_1}\cdots\binom{t_n}{l_n}$.
\remove{
The multi-dimensional addition formula is:
\begin{prop}[Multi-dimensional addition formula]\label{c:Hermite Addition multi}
$$H_T\left(\frac{x+y}{\sqrt{2}}\right) = \frac{1}{\sqrt{2}^{T}}\cdot\sum_{L\leq T}
\sqrt{\binom{T}{L}}H_{L}(x)H_{T-L}(y).$$
\end{prop}
}
The Hermite polynomials form an orthonormal basis for the space $L^2(\R^n,\normal^{n})$. Hence, every
function $f\in L^2(\R^n,\normal^{n})$ can be written as
$$f(x) = \sum_{S\in\mathbb{N}^n} \hat{f}(S) \ H_S(x),$$
where $S$ is multi-index, i.e. an $n$-tuple of natural numbers, and
 $\hat{f}(S) \in \R$ (Hermite expansion). The
 size of a multi-index $S = (S_1,\ldots,S_n)$ is defined as
 $|S| = \sum_{i=1}^n S_i$. 
The degree-$d$ part of $f$ is $f^{= d} =
\sum_{\card{S}= d} \hat{f}(S) H_S(x)$.
The part of degree at most $\degree$ is $f^{\leq d} = \sum_{i=0}^{d} f^{=i}$.
When $f$ is anti-symmetric, i.e. $\forall x \in \R^n, f(-x)=-f(x)$,
we have $\widehat{f}(\vec{0})
 = \Expc{}{f} =  0$ ~and  $f^{\leq 0} \equiv 0$.

The noise operator (more commonly known as the
Ornstein-Uhlenbeck operator) $T_{\rho}$ takes a function $f\in
L^2(\R^n,\normal^{n})$ and produces a function $T_{\rho}f \in
L^2(\R^n,\normal^{n})$ that averages the value of $f$ over local
neighborhoods:
$$T_{\rho}f(x) = \Expc{y\in \normal^{n}}{f(\rho x + \sqrt{1-\rho^2}y)}.$$
The Hermite expansion of $T_{\rho}f$ can be obtained from the Hermite expansion of $f$ as follows:
\begin{prop}\label{c:Hermite Noise}
$$T_{\rho} f = \sum_{S} \rho^{\card{S}}\hat{f}(S) H_S.$$
\end{prop}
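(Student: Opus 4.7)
The plan is to show that each Hermite basis function $H_S$ is an eigenfunction of $T_\rho$ with eigenvalue $\rho^{|S|}$; by linearity and $L^2$-continuity of $T_\rho$, applying this to the expansion $f=\sum_S \hat{f}(S) H_S$ then yields the claim.

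First I would reduce to the one-dimensional case. Since $H_S(x)=\prod_{i=1}^n H_{s_i}(x_i)$ is a product and the noise vector $y$ in the definition $T_\rho f(x)=\Expc{y\sim\normal^n}{f(\rho x+\sqrt{1-\rho^2}y)}$ has independent coordinates, the expectation factors coordinatewise:
\[
T_\rho H_S(x)=\prod_{i=1}^n \Expc{y_i\sim\normal}{H_{s_i}(\rho x_i+\sqrt{1-\rho^2}y_i)}=\prod_{i=1}^n (T_\rho H_{s_i})(x_i).
\]
Thus, if I establish the one-dimensional identity $T_\rho H_j=\rho^j H_j$, the product becomes $\rho^{\sum_i s_i}\prod_i H_{s_i}(x_i)=\rho^{|S|}H_S(x)$, as required.

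For the one-dimensional identity I would use the Hermite generating function
\[
G(t,x)\defeq \sum_{j=0}^\infty \frac{t^j}{\sqrt{j!}}H_j(x)=e^{tx-t^2/2},
\]
which follows from the Rodrigues-type formula $H_j(x)=\frac{(-1)^j}{\sqrt{j!}}e^{x^2/2}\frac{d^j}{dx^j}e^{-x^2/2}$ stated in the paper, by Taylor-expanding $e^{-(x-t)^2/2}$ around $t=0$ and dividing by $e^{-x^2/2}$. Applying $T_\rho$ in the variable $x$ and using the standard Gaussian moment generating function $\Expc{y\sim\normal}{e^{sy}}=e^{s^2/2}$, one gets
\[
T_\rho G(t,\cdot)(x)=e^{-t^2/2}\Expc{y\sim\normal}{e^{t\rho x+t\sqrt{1-\rho^2}y}}=e^{t\rho x-t^2\rho^2/2}=G(t\rho,x).
\]
On the other hand, by linearity and a dominated-convergence justification for interchanging the expectation with the series,
\[
T_\rho G(t,\cdot)(x)=\sum_{j=0}^\infty \frac{t^j}{\sqrt{j!}}\,T_\rho H_j(x),
\]
and matching coefficients of $t^j$ against $G(t\rho,x)=\sum_j \frac{(t\rho)^j}{\sqrt{j!}}H_j(x)$ yields $T_\rho H_j=\rho^j H_j$ for every $j$.

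Finally I would extend to general $f\in L^2(\R^n,\normal^n)$. For finite Hermite expansions, linearity of the expectation in the definition of $T_\rho$ gives the claimed identity immediately. For arbitrary $f$, approximate $f$ in $L^2$ by its truncations $\sum_{|S|\leq N}\hat{f}(S)H_S$; since $T_\rho$ is a contraction on $L^2(\R^n,\normal^n)$ (by Jensen, using that $\rho x+\sqrt{1-\rho^2}y$ has distribution $\normal^n$ when $x,y$ are independent $\normal^n$), $T_\rho$ of the truncation converges in $L^2$ to $T_\rho f$, while the right-hand side $\sum_{|S|\leq N}\rho^{|S|}\hat{f}(S)H_S$ converges in $L^2$ to $\sum_{S}\rho^{|S|}\hat{f}(S)H_S$ by orthonormality and $|\rho|\leq 1$. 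There is no serious obstacle here: the only substantive computation is the generating-function manipulation used to identify the eigenvalues in one dimension, and everything else is bookkeeping via tensorization and $L^2$-density.
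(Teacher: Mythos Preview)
Your argument is correct. The paper states this proposition as a standard preliminary fact without proof, so there is no paper proof to compare against; the route you take---tensorize to one dimension, identify the eigenvalue $\rho^j$ via the Hermite generating function $e^{tx-t^2/2}$, then pass to general $f$ by $L^2$-density using that $T_\rho$ is a contraction---is the standard textbook derivation and is fully sound.
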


\subsection{Some classical inequalities}

The hypercontractive inequality is given in the next lemma.
\begin{lemma}[Hypercontractive inequality]\label{l:hypercontractivity} Let $f,g:\R^k\to\R$. 
For $0\leq \rho\leq \sqrt{rs}\leq 1$,
$$\tup{f,T_{\rho}g} \leq \card{f}_{1+r}\card{g}_{1+s}.$$
\end{lemma}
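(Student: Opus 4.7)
The plan is to reduce the two-function statement to the standard one-function hypercontractivity and then to prove the latter by tensorization from the one-dimensional case. First, by H\"older duality with conjugate exponent $(1+r)/r$, the inequality $\tup{f, T_\rho g} \leq \card{f}_{1+r}\card{g}_{1+s}$ (for all admissible $f$) is equivalent to $\card{T_\rho g}_{(1+r)/r} \leq \card{g}_{1+s}$. Setting $p = 1+s$ and $q = (1+r)/r$, the hypothesis $\rho \leq \sqrt{rs}$ rearranges exactly to $\rho^2 \leq (p-1)/(q-1)$, so it suffices to show that $T_\rho : L^p(\R^k, \normal^k) \to L^q(\R^k, \normal^k)$ is a contraction under this condition.

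Second, I would reduce to $k=1$ by tensorization. The Ornstein--Uhlenbeck operator on $\R^k$ factors as a tensor product of one-dimensional Ornstein--Uhlenbeck operators, and the Gaussian measure on $\R^k$ is a product of 1-D Gaussians. An iterated application of Minkowski's integral inequality (applied to the nested $L^p$/$L^q$ norms one coordinate at a time) upgrades any 1-D $L^p \to L^q$ contraction statement to a $k$-dimensional statement with the same operator norm $1$.

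Third, for the one-dimensional base case I would take the classical Bonami--Beckner route via the central limit theorem: first establish the analogous contraction on two-point Rademacher space by direct calculation on the $2$-parameter family $f(x) = a+bx$; then tensorize to $\{-1,+1\}^N$; then apply the Boolean inequality to $F_N(z_1,\dots,z_N) = f((z_1+\dots+z_N)/\sqrt{N})$ and pass to the $N\to\infty$ limit, where the normalized Rademacher sum converges to a standard Gaussian and uniform integrability (controlled by the assumed $L^p$ norms) justifies exchanging limit and norm. An alternative is the semigroup / log-Sobolev approach of Gross: write $\rho = e^{-t}$, pick $q(t)$ with $q(t)-1 = (p-1)e^{2t}$, and differentiate $\card{T_t f}_{q(t)}$ in $t$; after integration by parts the derivative is a signed combination of the entropy of $\card{f}^q$ and a Dirichlet form, and the Gaussian log-Sobolev inequality forces it to be non-positive, giving monotonicity and the contraction at the endpoints.

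The main obstacle either way is the 1-D base case: the two-point inequality requires a nontrivial check of an inequality in $(a,b,\rho,p,q)$ that is tight precisely on the boundary $\rho^2=(p-1)/(q-1)$, while the log-Sobolev route off-loads the difficulty onto Gross's inequality, itself proved either by a two-point calculation plus tensorization or by Bakry--\'Emery curvature-dimension methods. Once the 1-D case is in hand, the $k$-dimensional statement and the two-function form in the lemma follow cleanly from tensorization and H\"older duality.
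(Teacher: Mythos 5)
The paper states this lemma without proof, treating it as a classical fact and deferring to standard references such as O'Donnell's \emph{Analysis of Boolean Functions}, so there is no paper proof to compare against. Your sketch is a correct outline of the textbook argument. The algebra checks out: with $p = 1+s$ and $q = (1+r)/r$ one has $p-1 = s$ and $q-1 = 1/r$, so $\rho^2 \le (p-1)/(q-1)$ is exactly $\rho \le \sqrt{rs}$, and the side condition $\sqrt{rs}\le 1$ is precisely $p\le q$, the regime in which the $L^p\to L^q$ contraction statement is nontrivial. The H\"older duality reducing the bilinear form to an operator-norm bound, the Minkowski-integral-inequality tensorization from $k=1$ to general $k$, and either of the two routes you name for the one-dimensional base case (two-point Bonami--Beckner plus a CLT limit with a uniform-integrability argument, or Gross's log-Sobolev semigroup differentiation) are all standard and valid; as you note, the real work is concentrated in the one-dimensional inequality, and the rest is soft.
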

The inequality is often used to show the small sets cannot have much weight on low degree parts. Similarly, we will use a corollary of it to show that Boolean functions that are almost always the same must have low degree parts that are similar. The corollary is known as {\em level-$k$ inequality}:
\begin{lemma}[Level-$k$ inequality]\label{l:level-k} Let $f:\R^k\to\set{0,1}$ have mean $\Expc{}{f} = \alpha$ and let $k\leq 2\ln(1/\alpha)$. Then,
$$\card{f^{\leq k}}_2^2 \leq \left(\frac{2e}{k}\ln(1/\alpha)\right)^k\alpha^2.$$
\end{lemma}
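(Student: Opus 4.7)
The plan is a two-step hypercontractivity argument specialized to exploit Booleanity, followed by an optimization in the hypercontractivity parameter. First, since $f^{\leq k}$ is supported in Hermite degrees at most $k$, Parseval and Proposition~\ref{c:Hermite Noise} give
\[
\langle f, T_\rho f^{\leq k}\rangle \;=\; \sum_{|S|\leq k} \rho^{|S|}\hat f(S)^2 \;\geq\; \rho^k \card{f^{\leq k}}_2^2
\]
for every $\rho\in(0,1]$. Second, I apply Lemma~\ref{l:hypercontractivity} to the pair $(f, f^{\leq k})$ with parameters $s=1$, $r\in(0,1]$, and $\rho = \sqrt{r}$, obtaining
\[
\langle f, T_\rho f^{\leq k}\rangle \;\leq\; \card{f}_{1+r}\,\card{f^{\leq k}}_2.
\]
Booleanity of $f$ yields $\card{f}_{1+r} = \alpha^{1/(1+r)}$ because $f^{1+r}=f$ pointwise. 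Combining the two displays and dividing through by $\card{f^{\leq k}}_2$ (assumed nonzero, else the lemma is trivial), then squaring, produces
\[
\card{f^{\leq k}}_2^2 \;\leq\; r^{-k}\,\alpha^{2/(1+r)}.
\]

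It remains to optimize in $r \in (0,1]$. Using the elementary bound $2/(1+r) \geq 2-2r$ together with $\alpha \leq 1$, I weaken the right-hand side to $r^{-k}\alpha^{-2r}\cdot\alpha^2$. Setting the logarithmic derivative in $r$ of $r^{-k}\alpha^{-2r}$ to zero gives the minimizer $r^\star = k/(2\ln(1/\alpha))$, at which point $(r^\star)^{-k}\alpha^{-2r^\star} = (2\ln(1/\alpha)/k)^k\cdot e^k = (2e\ln(1/\alpha)/k)^k$. Plugging back yields the claimed bound $(2e\ln(1/\alpha)/k)^k\alpha^2$. The hypothesis $k \leq 2\ln(1/\alpha)$ is exactly the condition $r^\star \leq 1$, which is what ensures that the hypercontractivity constraint $\sqrt{rs}\leq 1$ holds at the optimum.

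The only genuine care needed is bookkeeping of the exponent of $\alpha$: one must use the estimate $2/(1+r) \geq 2-2r$ (rather than a coarser bound such as $2/(1+r) \geq 1$) in order to preserve the full $\alpha^2$ factor in the conclusion — this is the place where Booleanity is crucially leveraged, because a non-Boolean $f$ of mean $\alpha$ would only give $\card{f}_{1+r}\leq \card{f}_\infty^{r/(1+r)}\alpha^{1/(1+r)}$, which washes out the tight dependence on $\alpha$. Apart from this, every step is a direct invocation of the stated Gaussian hypercontractive inequality together with the identity $\card{f}_p^p = \Expc{}{f}$ for $\set{0,1}$-valued $f$.
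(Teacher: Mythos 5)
The paper does not prove Lemma~\ref{l:level-k}; it cites the level-$k$ inequality as a known consequence of Gaussian hypercontractivity (e.g.\ from~\cite{O}), and only proves the derived Lemma~\ref{l:level-k'}. Your proof is a correct, self-contained derivation and is essentially the standard textbook argument: lower-bound $\langle f,T_\rho f^{\leq k}\rangle$ by $\rho^k\card{f^{\leq k}}_2^2$ via Proposition~\ref{c:Hermite Noise}, upper-bound it via Lemma~\ref{l:hypercontractivity} with $s=1$, $\rho=\sqrt{r}$, exploit $\card{f}_{1+r}=\alpha^{1/(1+r)}$ from Booleanity, and optimize over $r$. All the bookkeeping checks out — in particular the inequality $2/(1+r)\geq 2-2r$ weakens $\alpha^{2/(1+r)}$ in the correct direction because $\ln\alpha<0$, the stationary point $r^\star=k/(2\ln(1/\alpha))$ lands in $(0,1]$ exactly under the hypothesis $k\leq 2\ln(1/\alpha)$, and the value $(r^\star)^{-k}\alpha^{-2r^\star}=(2e\ln(1/\alpha)/k)^k$ gives the claimed constant. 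The degenerate case $\card{f^{\leq k}}_2=0$ is correctly set aside as trivial.
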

A convenient re-formulation is
\begin{lemma}\label{l:level-k'} Let $A\subseteq\R^k$ be a set of probability $\alpha$. Let $p:\R^k\to\R$ be a polynomial of degree at most $k\leq 2\ln(1/\alpha)$ with $\card{p}_2 = 1$. Then, for $\chi_A$, the indicator function of $A$,
$$\card{\Expc{x}{p(x)\chi_A(x)}} \leq \left(\frac{2e}{k}\ln(1/\alpha)\right)^{k/2}\alpha.$$
\end{lemma}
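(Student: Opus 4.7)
The plan is to recognize the expectation as a Gaussian inner product and peel off the high-degree part of $\chi_A$ using the degree bound on $p$, then finish by Cauchy-Schwarz plus the hypothesized level-$k$ inequality.

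Concretely, the first step is to write
$$\Expc{x}{p(x)\chi_A(x)} = \tup{p,\chi_A},$$
where the inner product is in $L^2(\R^k,\normal^k)$. Since $p$ is a polynomial of degree at most $k$, its Hermite expansion is supported on multi-indices $S$ with $\card{S}\leq k$. By Proposition~\ref{c:Hermite Orthogonality}, $p$ is then orthogonal to every Hermite basis element of degree exceeding $k$, so
$$\tup{p,\chi_A} \;=\; \tup{p,\chi_A^{\leq k}}.$$

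Second, I would apply Cauchy--Schwarz in $L^2(\R^k,\normal^k)$, using $\card{p}_2=1$:
$$\card{\tup{p,\chi_A^{\leq k}}} \;\leq\; \card{p}_2\cdot\card{\chi_A^{\leq k}}_2 \;=\; \card{\chi_A^{\leq k}}_2.$$

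Third, $\chi_A:\R^k\to\{0,1\}$ is Boolean with mean $\Expc{}{\chi_A}=\alpha$ and by assumption $k\leq 2\ln(1/\alpha)$, so Lemma~\ref{l:level-k} applies to $\chi_A$ and yields
$$\card{\chi_A^{\leq k}}_2^2 \;\leq\; \left(\tfrac{2e}{k}\ln(1/\alpha)\right)^{k}\alpha^2.$$
Taking square roots and chaining the two inequalities gives exactly
$$\card{\Expc{x}{p(x)\chi_A(x)}} \;\leq\; \left(\tfrac{2e}{k}\ln(1/\alpha)\right)^{k/2}\alpha,$$
as claimed.

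There is no real obstacle here: the lemma is just the Cauchy--Schwarz dual of Lemma~\ref{l:level-k}. The only thing to be careful about is that the ``degree $\leq k$'' hypothesis on $p$ is used exactly to reduce $\tup{p,\chi_A}$ to $\tup{p,\chi_A^{\leq k}}$, and that the Boolean hypothesis on $\chi_A$ (not $p$) is what enables the level-$k$ bound to be invoked with the claimed exponent.
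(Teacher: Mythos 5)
Your proof is correct and is essentially identical to the paper's: reduce $\tup{p,\chi_A}$ to $\tup{p,\chi_A^{\leq k}}$ using the degree bound, apply Cauchy--Schwarz with $\card{p}_2=1$, and invoke Lemma~\ref{l:level-k} on $\chi_A$.
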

\begin{proof} Since $p$ is of degree at most $k$, we have $\tup{\chi_A,p} = \tup{\chi_A^{\leq k},p}$.
By Cauchy-Schwarz inequality,
$$\tup{\chi_A^{\leq k},p}\leq \card{\chi_A^{\leq k}}_2\card{p}_2 \leq \card{\chi_A^{\leq k}}_2.$$
The lemma follows from a level-$k$ inequality (Lemma~\ref{l:level-k}) invoked on $\chi_A$.
\end{proof}

The Carbery-Wright anti-concentration inequality shows that a low degree polynomial cannot be concentrated around any point:
\begin{lemma}[Carbery-Wright Anti-concentration~\cite{CW}]\label{l:CW} For $t\in \R$ and $\varepsilon>0$, for a polynomial $p$ of degree $d$, $|p|_2=1$,
$$\Prob{x\sim\normal^n}{|p(x)-t|\leq\varepsilon}\leq O(d)\varepsilon^{1/d}.$$
\end{lemma}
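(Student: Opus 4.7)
I would take the classical negative-moment route: deduce anti-concentration from a bounded negative moment via Markov's inequality. Setting $q = p - t$ and renormalizing so that $\|q\|_2 = 1$, for any $0 < \alpha < 1$ we have $\Pr[|q(x)| \leq \varepsilon] \leq \varepsilon^\alpha \cdot \mathbb{E}[|q|^{-\alpha}]$. Taking $\alpha = 1/d$ reduces Lemma~\ref{l:CW} to the single inequality $\mathbb{E}[|q|^{-1/d}] = O(d)$ for every degree-$d$ polynomial $q$ with $\|q\|_2 = 1$ on Gaussian space.

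The heart of the proof is the one-dimensional case. If $r$ is a univariate polynomial of degree $\leq d$ with $\|r\|_{L^2(\normal)} = 1$, the sublevel set $\{x : |r(x)| \leq \varepsilon\}$ is a union of at most $d+1$ intervals. On any such interval $I$ of Lebesgue length $\ell$, the Remez inequality forces $|r| \leq C^d \varepsilon$ on a fixed dilation of $I$; if $\ell \gg \varepsilon^{1/d}$ this would make $r$ too small on a macroscopic region, contradicting the $L^2$-normalization. Integrating against the Gaussian density across the at most $d+1$ intervals then yields the 1D version of the lemma.

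To lift to $n$ dimensions I would use Fubini: decompose the Gaussian as $\normal^n = \normal^{n-1} \otimes \normal$ and condition on all but one coordinate. For each slice $x'$, the polynomial $q(x', \cdot)$ is univariate of degree $\leq d$, so the 1D bound applies after dividing by $\rho(x') := \|q(x', \cdot)\|_{L^2(\normal)}$. This yields $\mathbb{E}[|q|^{-1/d}] \leq O(d) \cdot \mathbb{E}[\rho(x')^{-1/d}]$, where $\rho^2$ is itself a polynomial in $x'$ of degree $\leq 2d$ with $\mathbb{E}[\rho^2] = 1$.

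The main obstacle is closing this recursion: the auxiliary polynomial $\rho^2$ has degree up to $2d$ rather than $d$, so naive induction blows up the constant. The remedy is a simultaneous induction on $(n, d)$ combined with Gaussian hypercontractivity (Lemma~\ref{l:hypercontractivity}), using the latter to control $\mathbb{E}[\rho^{-\alpha}]$ at exponents $\alpha$ slightly smaller than $1/d$, together with a sharp Remez-type estimate in each dimension-reduction step. An alternative is to bypass dimension induction entirely via polar coordinates $x = R U$: on each homogeneous component $p(RU) = R^{\deg p} p(U)$ decouples the radial and spherical parts, reducing the problem to anti-concentration on the sphere, provable by a symmetry-based argument using the representation theory of $O(n)$ (echoing the Schur-type reasoning the paper uses elsewhere). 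Keeping the final constant at $O(d)$ rather than exponential in $d$ is the delicate point in either approach.
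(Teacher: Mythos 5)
The paper does not prove this lemma; it is stated as a black box with a citation to Carbery--Wright~\cite{CW}. So there is no ``paper's own proof'' to compare against, and the question becomes whether your sketch would stand on its own as a proof of the cited result.

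Your outline correctly identifies the negative-moment reduction (Markov with exponent $\alpha = 1/d$ reduces the claim to $\mathbb{E}[|q|^{-1/d}] = O(d)$) and the Remez-type one-dimensional input; both genuinely appear in proofs of this theorem. But the proof as you write it has a real gap exactly where you flag one: the dimension reduction. Slicing off one coordinate at a time via Fubini produces the auxiliary quantity $\rho(x')^2 = \|q(x',\cdot)\|_{L^2(\normal)}^2$, which is a polynomial of degree up to $2d$, so a naive induction on $(n,d)$ does not close -- each step doubles the degree and the constant blows up well beyond $O(d)$. You acknowledge this, but the two remedies you propose are not worked out, and neither is as simple as advertised. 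Hypercontractivity controls \emph{positive} moments of low-degree polynomials; turning that into control on \emph{negative} moments of the derived quantity $\rho$ is not a one-line corollary. And the polar-coordinate idea only decouples cleanly for \emph{homogeneous} polynomials ($p(Ru) = R^{\deg p} p(u)$); a general degree-$d$ polynomial has cross-terms between homogeneous pieces, so ``anti-concentration on the sphere'' does not directly yield the Gaussian statement. The actual Carbery--Wright argument avoids this by working with log-concave measures and a more delicate convexity/measure-theoretic dimension reduction rather than a coordinate-by-coordinate Fubini recursion, precisely to keep the dependence on $d$ linear. So while the skeleton is in the right family, the step you defer is the heart of the theorem, and as written the proposal does not yet establish the lemma.
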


The Gaussian Poincar\'{e} inequality upper bounds the variance of a function in terms of its derivative:
\begin{lemma}[Gaussian Poincar\'{e} inequality]\label{l:Poincare}
Let $f:\R^k\to\R$ have continuous derivatives. Then,
$$\Var{}{f} \leq \Expc{}{\card{\nabla f}^2}.$$
\end{lemma}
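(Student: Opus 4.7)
The plan is to prove the Gaussian Poincar\'{e} inequality by Hermite expansion, which the preliminaries have conveniently set up. First, I would record the one-dimensional differentiation identity for normalized Hermite polynomials, $H_j'(x) = \sqrt{j}\, H_{j-1}(x)$, which follows from the Rodrigues-type definition $H_j(x) = \frac{1}{\sqrt{j!}}(-1)^j e^{x^2/2}\frac{d^j}{dx^j}e^{-x^2/2}$ by a direct computation (differentiating once and matching with $H_{j-1}$ produces exactly the factor $\sqrt{j}$ coming from the normalization). Because the multidimensional Hermite basis is a product, $H_S(x) = \prod_{i=1}^{k} H_{S_i}(x_i)$, this lifts to
$$\partial_i H_S(x) = \sqrt{S_i}\, H_{S - e_i}(x),$$
where $e_i$ is the $i$-th standard multi-index, with the convention that the right-hand side vanishes when $S_i = 0$.

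Next I would expand $f = \sum_S \hat{f}(S)\, H_S$ and differentiate termwise to obtain
$$\partial_i f = \sum_{S\,:\,S_i \geq 1} \hat{f}(S)\sqrt{S_i}\, H_{S-e_i}.$$
By orthonormality this is itself the Hermite expansion of $\partial_i f$, so Parseval gives $\card{\partial_i f}_2^2 = \sum_S S_i\, \hat{f}(S)^2$. Summing over $i = 1, \ldots, k$,
$$\Expc{}{\card{\nabla f}^2} = \sum_S |S|\, \hat{f}(S)^2.$$
On the other hand, $\Var{}{f} = \sum_{S \neq \vec{0}} \hat{f}(S)^2$, and since $|S| \geq 1$ whenever $S \neq \vec{0}$, the claimed inequality is immediate, and in fact tight exactly on linear functions.

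The only delicate point is justifying the termwise differentiation and the application of Parseval to $\partial_i f$ when $f$ is only assumed to have continuous derivatives. If $\Expc{}{\card{\nabla f}^2} = +\infty$ there is nothing to prove, and otherwise a standard truncation argument (approximate $f$ by its partial Hermite sums $f^{\leq N}$ and pass to the limit using that $\card{\partial_i f^{\leq N} - \partial_i f}_2 \to 0$ together with the monotone character of the coefficients $S_i \hat{f}(S)^2$) suffices. This is the only place a small amount of routine analysis is needed; I do not expect any genuine obstacle, as the entire argument reduces to a one-line computation once the differentiation identity for $H_j$ is established.
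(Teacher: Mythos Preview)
Your proof is correct and follows the standard Hermite-expansion route to the Gaussian Poincar\'{e} inequality. The paper, however, does not prove this lemma at all: it is listed in Section~\ref{s:preliminaries} among ``some classical inequalities'' and simply quoted without argument. So there is nothing to compare against; you have supplied a proof where the paper relies on the literature.

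One small remark on the technical justification at the end: rather than arguing via convergence of $\partial_i f^{\leq N}$ to $\partial_i f$ (which itself requires some work), it is slightly cleaner to use Gaussian integration by parts directly. If $\partial_i f \in L^2(\gamma)$, then for any multi-index $T$,
\[
\widehat{\partial_i f}(T) = \langle \partial_i f, H_T \rangle = \langle f, (x_i - \partial_i) H_T \rangle = \sqrt{T_i+1}\,\hat{f}(T+e_i),
\]
which after reindexing gives exactly your formula and lets you apply Parseval immediately without a limiting argument. Either way, no genuine obstacle.
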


Klartag and Regev showed that a random subspace samples well any function: 
\begin{lemma}[Sampling~\cite{KlartagRegev}]\label{l:sampling}
Let $f:\R^k\to \R$ with $\card{f}_2 <\infty$. Let $0<\varepsilon<1$. Let $S$ be a uniform subspace of dimension $k-1$.
Then,
$$\Prob{S}{\card{\Expc{S}{f}-\Expc{}{f}}\geq \varepsilon\card{f}_2}\leq O\left(\exp\left(-\Omega\left(\frac{\varepsilon k}{\log(2/\varepsilon)}\right)\right)\right).$$
\end{lemma}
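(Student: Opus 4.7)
Without loss of generality normalize $\|f\|_2 = 1$ and, by replacing $f$ with $f - \Expc{\gamma}{f}$, assume $\Expc{\gamma}{f} = 0$; this shifts $\Expc{S}{f}$ by the same constant and does not change the target quantity. Set $g(\Theta) := \Expc{x \in \Theta^\perp}{f(x)}$, viewed as a function on $S^{k-1}$. The plan is to Hermite-decompose $f = \sum_{d \geq 1} f^{=d}$ and correspondingly $g = \sum_d g_d$ with $g_d(\Theta) = \Expc{x \in \Theta^\perp}{f^{=d}(x)}$, bound the $L^2(\sigma)$-mass of each $g_d$, truncate at a level $D \asymp \log(2/\varepsilon)$, and combine Chebyshev for the high-degree tail with spherical hypercontractivity for the low-degree part.

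The key quantitative step is a per-degree variance bound
$$\|g_d\|_{L^2(\sigma)}^2 \;\leq\; \frac{C^d}{k^d}\,\|f^{=d}\|_2^2$$
for an absolute constant $C$. By rotational invariance and Schur's lemma, it suffices to verify this on zonal representatives $f^{=d} = H_d(\tup{u,\cdot})$ for a unit vector $u$. Rotating coordinates so that $\Theta = e_k$ and integrating out the remaining Gaussian variables using the generating identity $\sum_j H_j(y) t^j/\sqrt{j!} = e^{ty - t^2/2}$ expresses $g_d(\Theta)$ as an explicit polynomial in $\alpha := \tup{u,\Theta}$ of degree $d$, which vanishes identically for odd $d$ because $H_d(0)=0$. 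The bound then follows from the moment estimate $\Expc{\Theta}{\alpha^{2d}} = O((d/k)^d)$ for $\Theta$ uniform on $S^{k-1}$. Summing, $\|g\|_2^2 = O(1/k^2)$; in particular, the bias $|\Expc{\Theta}{g}| = O(1/k)$ coming from the degree-$2$ part is negligible compared to $\varepsilon$ in the nontrivial regime $\varepsilon \gtrsim 1/k$ and is absorbed into $\varepsilon/2$.

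Set $D := c_0 \log(2/\varepsilon)$ and split $g = g^{\leq D} + g^{>D}$. Parseval combined with the per-degree bound gives $\|g^{>D}\|_2^2 \leq (C/k)^D$, so Chebyshev yields $\Prob{\Theta}{|g^{>D}(\Theta)| > \varepsilon/2} \leq 4\varepsilon^{-2}(C/k)^D$, which lies below $\exp(-\Omega(\varepsilon k/\log(2/\varepsilon)))$ once $c_0$ is large enough and $k$ is sufficiently large. For the low-degree part, each $g_d$ with even $d \leq D$ is a polynomial of degree $d$ on $S^{k-1}$, so the Beckner-Gross hypercontractive inequality on the sphere gives $\|g_d\|_q \leq (q-1)^{d/2}\|g_d\|_2$ for $q \geq 2$. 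Combining with Markov's inequality and optimizing over $q$,
$$\Prob{\Theta}{|g_d(\Theta)| > \varepsilon/(2D)} \;\leq\; \exp\big(-\Omega\big((\varepsilon k^{d/2}/D)^{2/d}\big)\big).$$
The exponent is minimized at $d = 2$, where it equals $\Omega(\varepsilon k / D) = \Omega(\varepsilon k / \log(2/\varepsilon))$; larger $d$ give strictly stronger exponents. A union bound over the $\lesssim D/2$ relevant even degrees loses a factor of $D$ that is absorbed into the $\Omega$.

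The main obstacle is the per-degree second-moment bound: one must obtain the constant $C_d$ with at most exponential growth in $d$, so that the degrees up to $D \sim \log(1/\varepsilon)$ all enjoy the same qualitative exponent. The careful use of rotational invariance to reduce to the zonal case, the identification that only "even-order" contributions in the $\Theta$-direction survive restriction to $\Theta^\perp$, and the explicit moment computation on the sphere are the technical crux.
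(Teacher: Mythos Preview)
The paper does not prove this lemma; it is quoted from Klartag--Regev with a one-sentence remark on passing from their spherical, non-negative, multiplicative formulation to the stated Gaussian, signed, additive one. There is no in-paper argument to compare against.

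Your sketch has a genuine gap at the Schur step. The degree-$d$ Hermite space in $L^2(\R^k,\gamma)$ is \emph{not} $O(k)$-irreducible once $d\ge 2$: as an $O(k)$-module it is $\mathrm{Sym}^d(\R^k)$, which splits into harmonic pieces of degrees $d,d-2,\ldots$. Schur's lemma therefore only tells you that the quadratic form $f^{=d}\mapsto\|g_d\|_2^2$ is a scalar on each irreducible piece, and those scalars differ; checking a single zonal function $H_d(\langle u,\cdot\rangle)$ (which projects nontrivially onto every piece) does not determine them. The radial piece at $d=2$ already breaks both your per-degree bound and your bias estimate: for $f=(|x|^2-k)/\sqrt{2k}$ one has $\|f\|_2=1$, $\Expc{}{f}=0$, yet $g(\Theta)=\Expc{x\in\Theta^\perp}{f(x)}\equiv -1/\sqrt{2k}$ for \emph{every} $\Theta$. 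Thus $\|g_2\|_2^2=\Theta(1/k)$ on this component, not $(C/k)^2$, and the bias $|\Expc{\Theta}{g(\Theta)}|$ is $\Theta(1/\sqrt{k})$, not $O(1/k)$. In fact this same example shows the lemma cannot hold verbatim for arbitrary $f\in L^2(\gamma)$: with $\varepsilon$ just below $1/\sqrt{2k}$ the left-hand side equals $1$ while the right-hand side is $\exp(-\Omega(\sqrt{k}/\log k))\to 0$.

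Klartag--Regev work directly on the sphere, where the spherical-harmonic spaces \emph{are} irreducible and there is no radial direction to cause this obstruction. Your overall strategy, transplanted to that setting (decompose into spherical harmonics, compute the per-degree norm ratio via Funk--Hecke on zonal harmonics, then apply spherical hypercontractivity and a union bound over degrees), is a legitimate route. The paper's actual applications of the lemma are to $0$-homogeneous functions, or to functions whose radial component is already $O(1/\sqrt{k})$-small relative to $\|f\|_2$, which is why the issue does not surface downstream.
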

Their formulation referred to functions on spheres, but immediately implies the same for functions in Gaussian space by averaging over all possible radii. Their formulation referred to non-negative functions and multiplicative approximation, but immediately extends to general functions and additive approximation by separately considering the negative and positive parts of the function.

The next lemma follows from Lemma~\ref{l:sampling} (in fact, one needs a much weaker version of Lemma~\ref{l:sampling}):
\begin{lemma}\label{l:subspace-sampling} For any constants $0<\delta<1$ and $d\geq 1$,
For any subset $\mathcal{H}$ of fraction $\delta$ of $(k-1)$-dimensional subspaces in $\R^k$, the distribution induced on $d$-dimensional subspaces by picking $H\in\mathcal{H}$ and $S\subseteq H$, $dim(S)=d$, is $\tilde{O}_{d,\delta}(1/k)$-close in statistical distance to the uniform distribution over $d$-dimensional subspaces.
\end{lemma}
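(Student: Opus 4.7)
The plan is to reduce to Lemma~\ref{l:sampling} via duality between $d$-dimensional subspaces and their orthogonal complements. First I would identify each hyperplane $H$ with its unit normal $\theta \in \mathbb{S}^{k-1}$ (up to sign), so that $\mathcal{H}$ corresponds to a symmetric measurable set $A \subseteq \mathbb{S}^{k-1}$ of measure $\delta$. A hyperplane containing a fixed $d$-dimensional subspace $S$ corresponds to a unit vector $\theta \in \mathbb{S}^{k-1} \cap S^{\perp}$, and by rotational invariance, sampling $(H,S)$ with $H$ uniform on hyperplanes and $S \subseteq H$ uniform is the same as sampling $S$ uniformly over $d$-dim subspaces and then $\theta$ uniformly in $\mathbb{S}^{k-1} \cap S^{\perp}$.

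Let $\mu$ denote the uniform distribution over $d$-dim subspaces and $\nu$ the distribution in the lemma, and set $\rho(S) := \Pr_{\theta \in \mathbb{S}^{k-1} \cap S^{\perp}}[\theta \in A]$. By the swap above, $\mathbb{E}_S[\rho(S)] = \delta$, and for any event $B$ on $d$-dim subspaces,
$$\nu(B) - \mu(B) \;=\; \frac{1}{\delta}\,\mathbb{E}_S\big[\chi_B(S)\,(\rho(S) - \delta)\big],$$
hence $d_{TV}(\nu,\mu) \leq \tfrac{1}{\delta}\,\mathbb{E}_S[|\rho(S) - \delta|]$. So it suffices to show $\mathbb{E}_S[|\rho(S) - \delta|] \leq \tilde{O}_{d,\delta}(1/k)$.

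For this, note that $S^{\perp}$ is a uniform $(k-d)$-dimensional subspace, and $\rho(S)$ is the average of $\chi_A$ over its unit sphere. I would apply Lemma~\ref{l:sampling} iteratively $d$ times, at each step restricting to a uniform codimension-$1$ subspace of the current space and invoking the sampling bound on the (currently) restricted indicator. Since $d$ is constant and $\chi_A$ is bounded by $1$, a triangle inequality plus union bound over the $d$ steps yields
$$\Pr_S\!\big[|\rho(S) - \delta| \geq \varepsilon\sqrt{\delta}\big] \;\leq\; O_d\!\Big(\exp\!\Big(-\Omega\!\Big(\tfrac{\varepsilon\, k}{\log(2/\varepsilon)}\Big)\Big)\Big).$$
Integrating via $\mathbb{E}_S[|\rho-\delta|] = \int_0^\infty \Pr[|\rho-\delta| \geq t]\,dt$ and cutting at $t^{*} \sim \sqrt{\delta}\,\log^2(k)/k$ bounds the small-$t$ portion by $t^{*} = \tilde{O}_\delta(1/k)$ and the tail beyond $t^{*}$ by $O_{d,\delta}(1/k)$ using the exponential decay, giving the desired estimate on the statistical distance.

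The main technical subtlety will be iterating Lemma~\ref{l:sampling}: the $L^2$-norm of the restricted indicator, $\sqrt{\rho}$ at each stage, can drift from $\sqrt{\delta}$ across the $d$ codimension-$1$ restrictions. However, inductively this norm stays in $[0,1]$ and is close to $\sqrt{\delta}$ with high probability, so the errors compose with only a factor of $d$, which is constant. The hint that ``a much weaker version'' of Lemma~\ref{l:sampling} suffices suggests that the precise logarithmic factors in the tail are inessential for obtaining the $\tilde{O}_{d,\delta}(1/k)$ bound.
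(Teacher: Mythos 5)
Your proof is correct, and it fills in exactly the argument the paper leaves implicit — the paper merely states that Lemma~\ref{l:subspace-sampling} ``follows from Lemma~\ref{l:sampling}'' without spelling out the reduction. Your chain (hyperplane $\leftrightarrow$ normal duality, the order-of-sampling swap giving a uniform marginal on $S$, the bound $d_{TV}(\nu,\mu)\le \frac{1}{\delta}\,\EE_S|\rho(S)-\delta|$, and the $d$-fold iteration of the Klartag--Regev sampling lemma with a tail integral) is sound, and your observation that the restricted indicator's $L_2$ norm stays bounded by $1$ correctly dispenses with the only subtlety in the iteration.
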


\section{Boolean Unique Game Construction}\label{s:ug}

Let $C\geq 1$. Fix an instance of the Subspaces Near-Intersection Problem, given by $G=(V,E)$, $k$, $\set{A_v}_v$, $\set{\Theta_e}_e$.
Let $\delta$ and $\varepsilon$ be the completeness and soundness errors, respectively, where $\delta>0$ is sufficiently small and $\varepsilon$ is a constant, say $1/10$. We will construct a Boolean unique games instance with completeness error $O(\delta/\sqrt{k})$ (where the $O(\cdot)$ hides a small absolute constant, independent of $C$) and soundness error $1-C\delta/\sqrt{k}$.

The unique game we construct consists of encodings of the labeling for the $v\in V$ via half-spaces.
\begin{definition}[half-space encoding]
The half-space encoding of $\sigma\in\R^k$ is the Boolean function $\R^k\to\set{\pm 1}$ defined as
$${\sf HS}_{\sigma}(x) = {\sf sign}(\tup{\sigma, x}).$$
\end{definition}

For every $v\in V$ and $x\in \R^k$ we have a unique game variable corresponding to $v,x$ that is supposed to be assigned ${\sf HS}_{\sigma(v)}(x)$ (The actual construction involves a discretization of $\R^k$ up to a very high precision in each coordinate. The precision depends on $k$ and $1/\delta$).
We denote by $f_{v}:\R^k\to\set{\pm 1}$ the actual assignment to the variables that correspond to $v$.

Next we group together variables in order to enforce certain basic structural properties on the $f_v$'s in a technique called {\em folding}. The properties we consider are ones that half-spaces have.

Half-spaces are anti-symmetric, i.e., for every $x\in\R^k$,
$${\sf HS}_{\sigma}(-x) = -{\sf HS}_{\sigma}(x).$$
While $f_v$ may not necessarily be ${\sf HS}_{\sigma(v)}$, we will enforce anti-symmetry by having only one variable for every pair of $x,-x$ where $x\in\R^k$.
\begin{definition}[anti-symmetry folding]\label{d:folding-odd}
In the unique games construction the functions $f_v$ satisfy
$f_v(-x) = -f_v(x)$ for every $x\in\R^k$.
\end{definition}
Half-spaces are $0$-homogeneous, i.e., for every $x\in\R^k$ and $c>0$ it holds ${\sf HS}_{\sigma}(c\cdot x) = {\sf HS}_{\sigma}(x)$. We enforce $0$-homogeneity as follows:
\begin{definition}[$0$-homogeneity folding]\label{d:folding-homogeneous}
In the unique games construction the functions $f_v$ satisfy
$f_v(cx) = f_v(x)$ for every $x\in\R^k$ and $c>0$.
\end{definition}
For every $A$ such that $A\sigma = 0$, for every $x,y\in\R^k$, $\alpha,\beta\in\R$, we have:
\begin{eqnarray*}
{\sf HS}_{\sigma}(\alpha xA + \beta y) & = & {\sf sign}(\tup{\sigma,\alpha xA + \beta y})\\
& = & {\sf sign}(\alpha\cdot\tup{\sigma,xA} + \tup{\sigma,\beta y})\\
& = & {\sf sign}(\alpha\cdot\tup{A\sigma,x} + \tup{\sigma,\beta y})\\
& = & {\sf sign}(\tup{\sigma,\beta y})
\end{eqnarray*}
Therefore we enforce:
\begin{definition}[constraints folding]\label{d:folding-constraints}
In the unique games construction the functions $f_v$ satisfy
$f_v(\alpha xA_v + \beta y) = f_v(\alpha zA_v+\beta y)$ for every $x,y,z\in\R^k$, $\alpha,\beta\in \R$.
\end{definition}

To complete the definition of the unique games instance, we define the equations over the variables.
The equations correspond to two local tests: (1) Noise test on $f_v$ for $v\in V$;
(2) Consistency test on $f_u,f_v$ for $(u,v)\in E$.
The equations are specified in Figure~\ref{f:verifier}.


\begin{figure}[h]
\begin{center}
\fbox{
\begin{minipage}{0.96\textwidth}
Verifier$\set{f_v}$\\
{\em Folding:} We assume that the $f_v$'s are folded as in Definitions~\ref{d:folding-odd},~\ref{d:folding-homogeneous} and~\ref{d:folding-constraints}.\\
Set $\beta = 1/(10^{10}C^2)$, $p=\delta/\sqrt{\beta k}$.    
The verifier performs the noise test with probability $p$; the consistency test with probability $1-p$:
\begin{itemize}
\item {\em Noise Test:} Pick at random $v\in V$.
    Pick $y,x,z\sim\normal^k$ and set $\tilde{x},\tilde{z} \in \R^k$ as follows: $\tilde{x} = (1-\beta)y + \sqrt{2\beta - \beta^2} x$,
    $\tilde{z} = (1-\beta)y + \sqrt{2\beta - \beta^2} z$. Check $f_{v}(\tilde{x}) = f_{v}(\tilde{z})$.

\item {\em Consistency Test:} Pick at random $e=(u,v)\in E$. Pick a random Gaussian $x\in \Theta_e^{\perp}$. Check $f_u(x) = f_v(x)$.
\end{itemize}\end{minipage}
} \caption{Unique game}\label{f:verifier}
\end{center}
\end{figure}

The size of the construction is linear in the size of the Subspaces Near-Intersection instance and a function of (the constants) $k$ and $1/\delta$.

\subsection{Completeness}

Suppose that there is an assignment $\sigma:V\to \R^k$ as in the completeness case of Subspaces Near-Intersection.
Further, assume that each $f_v$ corresponds to a half-space encoding of $\sigma(v)$.
The probability that the noise test rejects is $O(\sqrt{\beta})$ and it is performed with probability $p$, so its total contribution is $O(\delta/\sqrt{k})$. By the completeness of Subspaces Near-Intersection, with probability $1-\delta$ the consistency test always passes, and with probability $\delta$ it passes except with probability $1/\sqrt{k}$. Overall, the probability of rejection is $O(\delta/\sqrt{k})$.

\remove{WITH SMALL COORD:
Suppose that there is an assignment $\sigma:V\to \R^k$ as in the completeness case of Subspaces Near-Intersection.
Further, assume that each $f_v$ corresponds to a half-space encoding of $\sigma(v)$.
We will show that the unique game passes except with $O(\delta)$ probability.
The probability that the noise test rejects is $O(\sqrt{\beta})$ and it is performed with probability $p$, so its total contribution is $O(\delta)$.

Consider the consistency test. Let $v\in V$. We will show that for any orthogonal basis of $\R^k$, if we consider a uniform choice of a vector from the basis and a subspace $S$ orthogonal to the vector, as well as $x\sim\normal^k$ where $|x_{S^{\bot}}|\in Small_{\zeta}$, then
$$\Prob{S}{{\sf sign}(\tup{\sigma(v),x}) \neq {\sf sign}(\tup{\sigma(v)_{|S},x_{|S}})} \leq O(\delta).$$
It will follow that the consistency test rejects with probability $O(\delta)$ by averaging over all choices of orthogonal bases.

Fix an orthogonal basis, and rotate $\R^k$ so the basis becomes the standard basis $e_1,\ldots,e_k$. Pick uniformly $i\in [k]$, and let $S$ be the subspace orthogonal to $e_i$.
For
${\sf sign}(\tup{\sigma(v),x}) \neq {\sf sign}(\tup{\sigma(v)_{|S},x_{|S}})$
it must be that $$\card{\sum_{j\neq i} \sigma(v)_j x_j} \leq \card{\sigma(v)_i x_i}.$$
For $x\sim\normal^k$ we have that $\sum_{j\neq i} \sigma(v)_j x_j$ is normal with variance $\sum_{j\neq i} \sigma(v)_j^2= 1 - \sigma(v)_i^2$ since $\sigma(v)$ is a unit vector.
There are two cases: $\sigma(v)_i^2 > 1/2$ and $\sigma(v)_i^2 \leq 1/2$. There can be only one $i\in [k]$ with $\sigma(v)_i^2 > 1/2$ since $\sigma(v)$ is a unit vector. When $\sigma(v)_i^2 \leq 1/2$ we have that $\sum_{j\neq i} \sigma(v)_j x_j$ is normal with variance at least $1/2$, and the probability that it is of magnitude at most $\card{\sigma(v)_i x_i}$ is $O(\card{\sigma(v)_i x_i})$.
We know that $x_{i}\in Small_{\zeta}$, so
$$\Prob{x}{f_v(x) \neq {\sf sign}(\tup{\sigma(v)_{\overline{i}_e},x_{\overline{i}_e}})}\leq \frac{1}{k} + O(\Expc{i}{\card{\sigma(v)_i}}\zeta).$$
By Jensen inequality, $$\Expc{i}{\card{\sigma(v)_i}}\leq \sqrt{\Expc{i}{\sigma(v)_i^2}} = 1/\sqrt{k},$$ where the last equality follows since $\sigma(v)$ is a unit vector in $\R^k$.
Thus, $$\Prob{}{f_v(x) \neq {\sf sign}(\tup{\sigma(v)_{\overline{i}_e},x_{\overline{i}_e}})}\leq O(1/k) = O(\delta).$$
By a union bound over the two endpoints of a uniform edge $e=(u,v)$, the consistency test rejects with probability at most $O(\delta)$.
}

\section{Soundness}\label{s:soundness}

Assume that $\set{f_v}_{v\in V}$ pass the unique tests with probability at least $1-C\delta/\sqrt{k}$. We will construct a constant-dimensional $Y\subset \R^k$ and an assignment $\sigma:V\to\R^k$. Each $\sigma(v)$ is a unit vector such that $A_v\sigma(v)=0$, and with constant probability over $e=(u,v)\in E_Y$, when one writes $\Theta_e^\perp = Y + S_e$ for $S_e$ orthogonal to $Y$, it holds that $$\card{ Proj_{S_e}(\sigma(u)) - Proj_{S_e}(\sigma(v)) }_2 \leq \tilde{O}_C(\delta/\sqrt{k} + 1/k),$$ where the $\tilde{O}_C(\cdot)$ hides logarithmic factors in $\sqrt{k}/\delta$, $k$, as well as factors that depend on $C$, and the deviation is therefore $\ll\sqrt{\delta/k}$.

The plan for the analysis is as follows:
Use the noise stability to decode a large low degree part for almost every vertex $v\in V$.
Use concentration to argue consistency between the restriction of the low degree part to an edge hyperplane and the low degree part of the restriction to the hyperplane, for most edges. The low degree parts of the restrictions to the edge hyperplane are close in $l_2$ distance for most edges thanks to the consistency test and hypercontractivity. Obtain from each low degree polynomial a vector by repeatedly differentiating the polynomial. The differentiation will be in random directions we pick, and we focus on zoom-in's so we can restrict to hyperplanes that contain the random directions.
We use consistency along edges to argue about consistency of the derivatives and of the number of differentiations.

For all $v\in V$ we have $|f_v|_2 = 1$.
By the success of the functions $f_v$ in the unique game, the noise test must pass except with probability $C\delta/(\sqrt{k}p)\leq C\sqrt{\beta}$ and the consistency test must pass except with probability $C\delta/(\sqrt{k}(1-p))\leq 2C\delta/\sqrt{k}$.
We say that $v\in V$ is {\em typical} if the noise test rejects with probability at most $100C\sqrt{\beta}$ when $v$ is chosen. In other words, for a typical $v\in V$,
$$\tup{f_v,T_{1-\beta}f_v} \geq 1- 200 C\sqrt{\beta}.$$
Note that all $v\in V$ are typical except for at most $0.1$ fraction.
We say that an edge $e=(u,v)\in E$ is {\em typical} if both $u$ and $v$ are typical and the consistency test rejects with probability at most $20C\delta/\sqrt{k}$ when $e$ is chosen. At least $0.7$ fraction of the edges are typical.

\subsection{Approximation By Low Degree}\label{s:low-deg-approx}

Our first lemma shows that the low degree part of a noise stable function approximates it:
\begin{lemma}[Noise stable functions have large low degree part]\label{l:large-low-deg}
Let $f:\R^k\to\R$, $\card{f}_2<\infty$. Let $0\leq \rho\leq 1$ and $\degree\geq 0$. Then,
$$|f^{\leq \degree}|_2^2 \geq \tup{f,T_{\rho}f} - \rho^{\degree} \card{f}_2^2.$$
\end{lemma}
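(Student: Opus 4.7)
The plan is to carry out the proof in the Hermite (Fourier) domain, since both sides of the inequality have clean spectral expressions. First I would write the Hermite expansion $f = \sum_{S} \widehat{f}(S) H_S$ guaranteed by the orthonormal basis property of the multi-dimensional Hermite polynomials. By Parseval (i.e., Proposition~\ref{c:Hermite Orthogonality} applied termwise), we have
$$\card{f}_2^2 = \sum_S \widehat{f}(S)^2, \qquad \card{f^{\leq d}}_2^2 = \sum_{\card{S}\leq d}\widehat{f}(S)^2.$$
Using Proposition~\ref{c:Hermite Noise} for the spectrum of $T_\rho f$ and orthonormality once more,
$$\tup{f,T_\rho f} = \sum_S \rho^{\card{S}}\widehat{f}(S)^2.$$

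Next I would split this sum at level $d$. For the low-degree terms $\card{S}\leq d$ I use the trivial bound $\rho^{\card{S}}\leq 1$. For the high-degree terms $\card{S}>d$, since $0\leq\rho\leq 1$ we have $\rho^{\card{S}}\leq \rho^{d+1}\leq \rho^d$. This gives
$$\tup{f,T_\rho f} \;\leq\; \sum_{\card{S}\leq d}\widehat{f}(S)^2 + \rho^d\sum_{\card{S}>d}\widehat{f}(S)^2 \;\leq\; \card{f^{\leq d}}_2^2 + \rho^d\card{f}_2^2,$$
and rearranging yields the claimed lower bound on $\card{f^{\leq d}}_2^2$.

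There is no substantive obstacle here: the inequality is a one-line rearrangement once the spectra are in hand. The only small subtlety is that the given statement uses $\rho^d$ rather than the slightly sharper $\rho^{d+1}$; I would simply note that $\rho^{d+1}\leq \rho^d$ in the regime $\rho\in[0,1]$, so the weaker form is what we need for the downstream application (where $\rho=1-\beta$ is close to $1$ and the extra factor is irrelevant, but the clean form $\rho^d$ is convenient).
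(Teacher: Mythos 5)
Your proof is correct and follows essentially the same approach as the paper: decompose $f$ by degree, bound $\rho^{\card{S}}$ by $1$ on the low-degree spectrum and by $\rho^{d}$ on the high-degree spectrum, and rearrange. The paper phrases the low-degree bound via Cauchy--Schwarz and the contractivity of $T_\rho$ rather than explicitly in the Hermite coefficients, but it is the same spectral argument.
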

\begin{proof}
We can decompose $f$ to its low degree part and its high degree part, $f = f^{\leq\degree } + f^{>\degree}$, and then $$\tup{f,T_{\rho}f} = \tup{f^{\leq \degree},T_{\rho}f^{\leq\degree}} + \tup{f^{> \degree},T_{\rho}f^{>\degree}}.$$
By Cauchy-Schwarz inequality,
$$\tup{f^{\leq \degree},T_{\rho}f^{\leq \degree}} \leq |f^{\leq \degree}|_2 |T_{\rho}f^{\leq\degree}|_2
\leq |f^{\leq \degree}|_2^2.$$
Therefore, by Proposition~\ref{c:Hermite Noise} and Parseval identity,
$$|f^{\leq \degree}|_2^2 \geq \tup{f^{\leq \degree},T_{\rho}f^{\leq \degree}} \geq \tup{f,T_{\rho}f} - \tup{f^{> \degree},T_{\rho}f^{>\degree}} > \tup{f,T_{\rho}f} - \rho^{\degree} \card{f}_2^2.$$
\end{proof}
Lemma~\ref{l:large-low-deg} implies that the low degree part of $f_v$ approximates $f_v$ for a typical $v\in V$:
$$|f_v^{\leq \degree}|_2^2 \geq 1-200 C\sqrt{\beta} - (1-\beta)^{\degree}.$$
In the above we used that $\card{f_v}_2 =1$.
We set $\degree = \Theta(1/\beta)$, so \begin{eqnarray}\label{e:large-low-deg}|f_v^{\leq \degree}|_2^2 \geq 0.99.\end{eqnarray}


\remove{
It will also be useful to get an upper bound on the low degree part of a function in terms of the function's noise stability:
\begin{lemma}[Noise sensitive functions have small low degree part]\label{l:small-low-deg}
Let $f:\R^k\to\R$. Let $0\leq \rho\leq 1$ and $\degree\geq 0$.
Then,
$$|f^{\leq \degree}|_2^2 \leq \rho^{-\degree}\tup{f,T_{\rho}f}.$$
\end{lemma}
\begin{proof}
We can write $f = f^{\leq\degree } + f^{>\degree}$.
$$\tup{f,T_{\rho}f} = \tup{f^{\leq \degree},T_{\rho}f^{\leq\degree}} + \tup{f^{> \degree},T_{\rho}f^{>\degree}}.$$
By the Hermite expansion of the noise operator, $\tup{f^{> \degree},T_{\rho}f^{>\degree}}\geq 0$, and, moreover,
$$\tup{f^{\leq\degree},T_\rho f^{\leq\degree}} \geq \rho^{\degree}|f^{\leq \degree}|_2^2.$$
The lemma follows.
\end{proof}
}

\subsection{Consistency of Degree-$\degree$ Parts}\label{s:consistency}

In this section we use the high acceptance probability of the consistency test in order to show that for most edges $(u,v)\in E$ the projections of the barycenters of $f_u$, $f_v$ onto $\Theta_e^\perp$ are extremely close to each other.
The proof uses the main technical tools we discussed in the introduction, namely hypercontractivity and concentration.

By hypercontractivity, Boolean functions that are the same except with probability $O(\delta)$ have low degree parts that are $\tilde{O}(\delta)$ apart in $2$-norm (note that there is a simple upper bound relying on Parseval identity alone, but it gives the worse upper bound $O(\sqrt{\delta})$), as proven in the following lemma:
\begin{lemma}[Low degree consistency]\label{l:low-deg-consistency}
Let $f,g:\R^k\to\set{\pm 1}$ be anti-symmetric functions. Let $0\leq \rho\leq 1$ and $\degree \leq 2\ln(1/\delta)$. Let $\delta>0$ be sufficiently small.
If $f(x) = g(x)$ with probability $1-\delta$ over Gaussian $x\in\R^k$, then
$|f^{\leq \degree} - g^{\leq \degree}|_2\leq 2\left(\frac{2e}{d}\ln(2/\delta)\right)^{d/2}\delta$.
\end{lemma}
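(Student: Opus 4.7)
The plan is to apply the level-$k$ inequality (Lemma~\ref{l:level-k'}) to the difference $h = f - g$, which is supported on the disagreement set $A = \{x : f(x) \neq g(x)\}$, a set of Gaussian measure $\delta$. The main observation is that while a crude Parseval-plus-Cauchy-Schwarz estimate would only yield $|f^{\leq d} - g^{\leq d}|_2 \leq |f - g|_2 = O(\sqrt{\delta})$, hypercontractivity (via its consequence Lemma~\ref{l:level-k'}) lets us pay only linearly in $\delta$, because any degree-$d$ polynomial of unit norm correlates very weakly with the indicator of a small set.

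More concretely, I would split the disagreement set as $A = A^+ \cup A^-$, where $A^+ = \{x : f(x) = 1, g(x) = -1\}$ and $A^- = \{x : f(x) = -1, g(x) = 1\}$, so that
\[
h(x) = f(x) - g(x) = 2\chi_{A^+}(x) - 2\chi_{A^-}(x).
\]
Because $f$ and $g$ are anti-symmetric, $x \in A^+$ iff $-x \in A^-$, and by the symmetry of the Gaussian measure we get $\Pr[A^+] = \Pr[A^-] = \delta/2$.

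To bound $|f^{\leq d} - g^{\leq d}|_2 = |h^{\leq d}|_2$, set $p = h^{\leq d}/|h^{\leq d}|_2$, so that $p$ is a polynomial of degree at most $d$ with $|p|_2 = 1$. Since $h^{\leq d}$ has degree at most $d$, we have $|h^{\leq d}|_2 = \langle h^{\leq d}, p \rangle = \langle h, p \rangle$, and therefore
\[
|h^{\leq d}|_2 \;=\; 2\bigl(\,\mathbb{E}_x[p(x)\chi_{A^+}(x)] - \mathbb{E}_x[p(x)\chi_{A^-}(x)]\,\bigr).
\]
Now I apply Lemma~\ref{l:level-k'} separately to each of the two expectations, using $\alpha = \delta/2$ (so that the hypothesis $d \leq 2\ln(2/\delta)$ holds for $\delta$ small enough), obtaining
\[
\bigl|\mathbb{E}_x[p(x)\chi_{A^\pm}(x)]\bigr| \;\leq\; \left(\frac{2e}{d}\ln(2/\delta)\right)^{d/2}\cdot\frac{\delta}{2}.
\]
Summing the two contributions with the factor of $2$ in front yields the claimed bound $|f^{\leq d} - g^{\leq d}|_2 \leq 2\bigl(\tfrac{2e}{d}\ln(2/\delta)\bigr)^{d/2}\delta$.

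There is no real obstacle here: the proof is essentially a one-line consequence of the level-$k$ inequality once one isolates the indicator structure of $h$. The only mild subtlety is matching the hypothesis of Lemma~\ref{l:level-k'} — it requires $d \leq 2\ln(1/\alpha)$ while we need to use $\alpha = \delta/2$ rather than $\delta$, but for sufficiently small $\delta$ the slack between $\ln(1/\delta)$ and $\ln(2/\delta)$ is absorbed in the ``sufficiently small $\delta$'' hypothesis. The parameter $\rho$ in the lemma statement does not enter the argument.
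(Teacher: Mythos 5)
Your proof is correct and follows essentially the same route as the paper: both reduce $|f^{\leq d}-g^{\leq d}|_2$ to a correlation $\langle f-g, p\rangle$ with a unit-norm degree-$d$ polynomial $p$, observe by anti-symmetry that the disagreement set has two halves of measure $\delta/2$ each, and invoke Lemma~\ref{l:level-k'} to pay only linearly in $\delta$. The only cosmetic difference is bookkeeping: the paper notes that $p$ is itself anti-symmetric (since $(f-g)^{\leq d}$ inherits anti-symmetry), which collapses the two expectations into a single application of Lemma~\ref{l:level-k'} on $A$, whereas you apply the lemma twice (to $A^+$ and $A^-$) and sum via the triangle inequality; both yield the same constant.
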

\begin{proof}
We have $\card{f^{\leq \degree} - g^{\leq\degree}}_2 = \card{(f-g)^{\leq \degree}}_2$. Let $p$ be a polynomial of degree at most $d$ and $2$-norm $1$ that maximizes the correlation with $f-g$. Then, $\card{(f-g)^{\leq \degree}}_2 = \tup{f-g,p}$. Since $f$ and $g$ are anti-symmetric, so is $f-g$. Hence, $p$ is anti-symmetric.
Let $A\subseteq\R^k$ be the set of $x$ with $f(x)>g(x)$.  Since $f(x)>g(x)$ iff $g(-x)>f(-x)$, the probability of $A$ is $\delta/2$, and $\tup{f-g,p} = \Expc{x}{(2p(x)-2p(-x))\chi_A(x)} = 4\Expc{}{p(x)\chi_A(x)}$.
By Lemma~\ref{l:level-k'}, since $\degree \leq 2\ln(1/\delta)$ for sufficiently small $\delta>0$,
$$4\Expc{x}{p(x)\chi_A(x)} \leq 4\left(\frac{2e}{d}\ln(2/\delta)\right)^{d/2}(\delta/2).$$
The lemma follows by collecting all of the above.
\end{proof}
Let $(u,v)\in E$ be a typical edge.
By the consistency test, it holds that $f_{u{|\Theta_e^{\perp}}}(x) = f_{v|\Theta_e^{\perp}}(x)$ for random $x\in \Theta_e^{\perp}$ except with probability $O(\delta/\sqrt{k})$. Thus, by Lemma~\ref{l:low-deg-consistency},
\begin{eqnarray}
\left|(f_{u{|\Theta_e^{\perp}}})^{\leq\degree} - (f_{v|\Theta_e^{\perp}})^{\leq\degree}\right|_2 & \leq & \tilde{O}(\delta/\sqrt{k}).
\end{eqnarray}
By Theorem~\ref{t:concentration}, for each $v\in V$, for at least $0.99$ fraction of edges $e=(u,v)\in E$,
\begin{eqnarray}
\left|(f_{v|\Theta_e^{\perp}})^{\leq\degree} - (f_v^{\leq\degree})_{|\Theta_e^{\perp}}\right|_2 & \leq & O(1/k).
\end{eqnarray}
By the regularity of the graph, the triangle inequality and a union bound, with probability at least $0.6$ over $(u,v)\in E$, the edge is typical, and
\begin{eqnarray}\label{i:consistency}
  \nonumber\left|(f_u^{\leq\degree})_{|\Theta_e^{\perp}} - (f_v^{\leq\degree})_{|\Theta_e^{\perp}}\right|_2 & \leq &
  \left|(f_{u|\Theta_e^{\perp}})^{\leq\degree} - (f_u^{\leq\degree})_{|\Theta_e^{\perp}}\right|_2 +
  \left|(f_{v|\Theta_e^{\perp}})^{\leq\degree} - (f_v^{\leq\degree})_{|\Theta_e^{\perp}} \right|_2 \\
  & \leq & \tilde{O}(\delta/\sqrt{k} + 1/k).
\end{eqnarray}

\subsection{Defining The Assignment}\label{s:assignment}

In Section~\ref{s:consistency} we showed that for most edges $e=(u,v)\in E$ the degree-$d$ polynomials $f_u^{\leq\degree}$ and $f_v^{\leq\degree}$ are close over $\Theta_e^\perp$. In this section we show how to extract from the degree-$d$ polynomials {\em unit vectors} that satisfy the constraints and their projections onto $\Theta_e^\perp$ are close.

We next describe the main ideas behind the construction of unit vectors.
Close degree-$d$ polynomials, like $f_u^{\leq\degree}$ and $f_v^{\leq\degree}$ over $\Theta_e^\perp$, imply close degree-$1$ parts, and the degree-$1$ parts correspond to vectors in the linear subspaces associated with $u$ and $v$.
Hence, if the degree-$1$ parts of the polynomials were known to be of large $2$-norm, then one could have assigned each vertex its normalized linear part. Unfortunately, the degree-$1$ part of the polynomials can be $\vec{0}$.
The idea is to differentiate the degree-$d$ polynomials sufficiently many times until the degree-$1$ part is of sufficiently large $2$-norm.
The consistency deteriorates with the number of differentiations, but since the degree $d$ is constant, the number of differentiations is constant and the deterioration is limited.

To carry through the above plan we differentiate along random directions $y_1,\ldots,y_{d-1}$, and focus only on hyperplanes $\Theta_e^\perp$ that contain $Y=span\set{y_1,\ldots,y_{d-1}}$, since for those hyperplanes differentiation and restriction to $\Theta_e^\perp$ commute. This is the reason we focus on a zoom-in of the Subspaces Near-Intersection game.
This also introduces a certain asymmetry in favor of the directions in $Y$. To eliminate this asymmetry, we focus on random affine shifts of the space $Y^\perp$. The random choices of $Y$ and the shift would be useful in the analysis, but eventually we will fix them so they satisfy desired properties.

The assignment $\sigma:V\to\R^k$ for the Subspaces Near-Intersection instance is defined by the algorithm in Figure~\ref{f:assignment}. Our analysis closely follows the algorithm.




\begin{figure}[h]
\begin{center}
\fbox{
\begin{minipage}{0.96\textwidth}
{\em Global parameters:}
\begin{itemize}
  \item For sufficiently small constants $0<c_0<c_1<1$ (depending on the constant in Lemma~\ref{l:CW}), pick uniformly at random $$\eta\in \left[c_0 \cdot 2^{-2d\log d},c_1 \cdot 2^{-2d\log d}\right].$$
  \item Pick Gaussian vectors $y_1,\ldots,y_{\degree-1}\in\R^k$. Let $Y=span\set{y_1,\ldots,y_{\degree-1}}$.
  \item Pick Gaussian vector $y\in Y$.
\end{itemize}
For every typical $v\in V$ we define the assignment $\sigma(v)$ as follows (for other $v$'s leave $\sigma(v)$ undefined):
\begin{enumerate}
\item Let $D_v^{(0)} = f_v^{\leq\degree}$ and $i=0$.
\item Let $D_{v,y}^{(0)}:Y^\perp\to\R$ be the affine shift $D_{v,y}^{(0)}(x) = D_{v}^{(0)}(y+x)$

\item While $\card{(D_{v,y}^{(i)})^{=1}}_2^2 < \eta$,

\begin{enumerate}
  \item $i\leftarrow i+1$.

  \item Let $D_v^{(i)} = \frac{\partial}{\partial y_i}D_v^{(i-1)}$.
  \item Let $D_{v,y}^{(i)}:Y^\perp\to\R$ be the affine shift $D_{v,y}^{(i)}(x) = D_{v}^{(i)}(y+x)$.
\end{enumerate}
\item $i_v\leftarrow i$.
\item Let $vec_v\in Y^\perp$ be $(D_{v,y}^{(i_v)})^{=1}$.
\item $\sigma(v) \leftarrow \frac{vec_v}{\card{vec_v}_2}$.
\end{enumerate}
\end{minipage}
} \caption{The assignment $\sigma:V\to\R^k$ for the $Y$-zoom-in of Subspaces Near-Intersection}\label{f:assignment}
\end{center}
\end{figure}

The first lemma upper bounds the degree and lower bounds the norm on $D_{v}^{(i)}$ from the algorithm in Figure~\ref{f:assignment} for $0\leq i\leq d-1$:
\begin{lemma}[Norm lemma]\label{l:differentiation-lemma} For every typical $v\in V$, during the execution of the algorithm in Figure~\ref{f:assignment}, for every $0\leq i\leq \degree-1$,
\begin{enumerate}
  \item For all $y_1,\ldots,y_{i}$, the function $D_v^{(i)}$ is a polynomial of degree at most $d-i$.
  \item $\Expc{y_1,\ldots,y_{i}}{\card{\Expc{}{D_v^{(i)}}}^2}< \eta$.
  \item $\Expc{y_1,\ldots,y_{i}}{\card{D_v^{(i)}}_2^2} \geq 0.99 - \eta i$.
\end{enumerate}
\end{lemma}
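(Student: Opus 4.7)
Part~1 is straightforward induction on $i$: $D_v^{(0)} = f_v^{\leq \degree}$ has degree at most $\degree$, and each differentiation $D_v^{(i)} = \partial_{y_i} D_v^{(i-1)}$ strictly decreases the degree by one.

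For Part~2, my starting point is Gaussian integration by parts: for any polynomial $g$, $\Expc{x \sim \normal^k}{\partial_{y_i} g(x)} = \tup{y_i, g^{=1}}$, where $g^{=1} \in \R^k$ is the degree-$1$ Hermite component viewed as a vector. Applied to $g = D_v^{(i-1)}$, squared, and averaged over $y_i \sim \normal^k$ (independent of $y_1,\ldots,y_{i-1}$), this reduces the claim to showing
\[
\Expc{y_1,\ldots,y_{i-1}}{\card{(D_v^{(i-1)})^{=1}}_2^2} < \eta.
\]
A Hermite-expansion calculation in a basis adapted to $\R^k = Y \oplus Y^\perp$ yields the averaging identity $\Expc{y \in Y}{(D_{v,y}^{(i-1)})^{=1}} = \Proj_{Y^\perp}(D_v^{(i-1)})^{=1}$. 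Since $\dim Y \leq \degree - 1 \ll k$, the projection onto $Y^\perp$ nearly preserves Euclidean norms, so $\Expc{y}{\card{(D_{v,y}^{(i-1)})^{=1}}_2^2}$ is comparable to $\card{(D_v^{(i-1)})^{=1}}_2^2$. The while-loop condition $\card{(D_{v,y}^{(i-1)})^{=1}}_2^2 < \eta$ controls the left-hand side pointwise in $y$, and the randomization of $\eta$ over $[c_0 2^{-2\degree\log\degree}, c_1 2^{-2\degree\log\degree}]$ is what promotes the pointwise bound into an average-over-$y$ bound; taking expectation over $y_1,\ldots,y_{i-1}$ then closes the estimate.

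For Part~3, combine the Gaussian Poincar\'e inequality (Lemma~\ref{l:Poincare}) with the $y_i$-isotropy identity $\Expc{y_i \sim \normal^k}{\card{\partial_{y_i} g}_2^2} = \Expc{x}{\card{\nabla g(x)}_2^2}$ to get
\[
\Expc{y_i}{\card{D_v^{(i)}}_2^2} = \Expc{x}{\card{\nabla D_v^{(i-1)}(x)}_2^2} \geq \card{D_v^{(i-1)}}_2^2 - \card{\Expc{}{D_v^{(i-1)}}}^2.
\]
Taking expectation over $y_1,\ldots,y_{i-1}$ and invoking Part~2 at step $i-1$ to bound the last term by $\eta$ yields the recursion $\Expc{y_1,\ldots,y_i}{\card{D_v^{(i)}}_2^2} \geq \Expc{y_1,\ldots,y_{i-1}}{\card{D_v^{(i-1)}}_2^2} - \eta$, which iterates from $\card{f_v^{\leq\degree}}_2^2 \geq 0.99$ (inequality~(\ref{e:large-low-deg})) down to the claimed $0.99 - \eta i$.

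The main obstacle is Part~2, specifically converting the pointwise while-loop bound on the sampled $y$ into an estimate on the global vector $(D_v^{(i-1)})^{=1} \in \R^k$. The random choice of $\eta$ and the fact that $Y$ has small codimension should suffice to upgrade pointwise to averaged, but handling the potential correlation between $(D_v^{(i-1)})^{=1}$ and $Y$ (through the shared vectors $y_1,\ldots,y_{i-1}$) will need care, most naturally via the decomposition $Y = \mathrm{span}\{y_1,\ldots,y_{i-1}\} \oplus \mathrm{span}\{y_i,\ldots,y_{\degree-1}\}$ so that the ``already-used'' and ``still-independent'' pieces can be treated separately.
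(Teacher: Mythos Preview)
Your Parts~1 and~3 are exactly the paper's argument: induction on $i$, with differentiation lowering degree by one, and Gaussian Poincar\'e together with the $y_i$-isotropy identity $\Expc{y_i}{\card{\partial_{y_i} g}_2^2} = \Expc{x}{\card{\nabla g(x)}_2^2}$ reducing Part~3 to Part~2 at the previous step and the inductive hypothesis.

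For Part~2, you and the paper agree on the first reduction: since $\Expc{}{D_v^{(i)}} = \langle (D_v^{(i-1)})^{=1}, y_i\rangle$ with $y_i$ Gaussian and independent of $y_1,\ldots,y_{i-1}$, one has $\Expc{y_i}{\card{\Expc{}{D_v^{(i)}}}^2} = \card{(D_v^{(i-1)})^{=1}}_2^2$. From here the paper takes a much shorter route than you do: it simply asserts $\card{(D_v^{(i-1)})^{=1}}_2^2 < \eta$ ``by the design of the algorithm,'' reading the while-loop condition at step $i-1$ as directly furnishing this bound (the phrase ``during the execution of the algorithm'' in the lemma's hypothesis is what makes this conditional reading available). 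In particular, the paper does \emph{not} invoke the random choice of $\eta$ in this lemma at all --- that randomization is used later in the analysis to force $i_u = i_v$ across typical edges via inequality~(\ref{i:degree-1-norm-consistency-restricted}), not here. Your proposed machinery (the averaging identity $\Expc{y}{(D_{v,y}^{(i-1)})^{=1}} = \Proj_{Y^\perp}(D_v^{(i-1)})^{=1}$, the codimension argument, and the random-threshold upgrade from pointwise to averaged) is substantially more elaborate than the paper's argument and introduces the very correlation complication you flag at the end.

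Your worry that the while-loop actually tests $\card{(D_{v,y}^{(i-1)})^{=1}}_2^2$ rather than $\card{(D_v^{(i-1)})^{=1}}_2^2$ is a legitimate reading of the algorithm; the paper is loose on this distinction. So you are being more careful than the paper here, but your attempted workaround goes in a direction the paper does not take and does not need for its stated argument.
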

\begin{proof}
We prove that the three items of the lemma hold by induction on $0\leq i\leq d-1$. First consider the case of $i=0$ where $D_v^{(0)} = f_v^{\leq\degree}$.

1. $f_v^{\leq\degree}$ is a polynomial of degree at most $d$.

2. By the anti-symmetry folding, $\Expc{}{f_v^{\leq\degree}}=0$.

3. By inequality~(\ref{e:large-low-deg}), for a typical $v$ we have $\card{f_v^{\leq d}}_2^2 \geq 0.99$.\\
Assume that the statement holds for $i-1$ and let us prove it for $i$.

1. The function $D_v^{(i)}$ is a polynomial of degree at most $deg(D_v^{(i-1)})-1$. The degree bound therefore follows from the inductive hypothesis.

2. $\Expc{}{D_v^{(i)}}$ is the constant part of $D_v^{(i)}= \tup{\nabla D_v^{(i-1)},y_i}$. Moreover, $\nabla D_v^{(i-1)}$ depends on $y_1,\ldots,y_{i-1}$ and is independent of $y_i$. Thus, $\Expc{}{D_v^{(i)}} = \tup{(D_v^{(i-1)})^{=1},y_i}$ is a normal variable with standard deviation $\card{(D_v^{(i-1)})^{=1}}_2$.
By the design of the algorithm, $\card{(D_v^{(i-1)})^{=1}}_2^2<\eta$ and hence $\Expc{y_1,\ldots,y_{d-1}}{\card{\Expc{}{D_v^{(i)}}}^2}< \eta$.


3. We have
$D_v^{(i)} = \tup{\nabla D_v^{(i-1)},y_i}$, where $\nabla D_v^{(i-1)}$ depends on $y_1,\ldots,y_{i-1}$ and is independent of $y_i$.
Thus, for every $x\in\R^k$, it holds that $D_v^{(i)}(x)$ is a normal variable with standard deviation $\card{\nabla D_v^{(i-1)}(x)}_2$.
Hence, $\Expc{y_1,\ldots,y_{d-1},x}{(D_v^{(i)})(x)^2} = \Expc{}{\card{\nabla D_v^{(i-1)}(x)}_2^2}$.
By the Gaussian Poincar\'{e} inequality (Lemma~\ref{l:Poincare}), for any $y_1,\ldots,y_{i}$,
$$\Expc{x}{\nabla D_v^{(i-1)}(x)^2} \geq \Var{D_v^{(i-1)}} = \card{D_v^{(i-1)}}_2^2 - \Expc{}{D_v^{(i-1)}}^2.$$
By the inductive hypothesis, $\Expc{}{\card{D_v^{(i-1)}}_2^2} \geq 0.99 - \eta (i-1)$ and $\Expc{}{D_v^{(i-1)}}^2 < \eta$. Hence,
$$\Expc{}{(D_v^{(i)}(x))^2} \geq 0.99 - \eta (i-1) - \eta = 0.99 - \eta i.$$
\end{proof}

By the following proposition and the constraints folding (see Definition~\ref{d:folding-constraints}), whenever $\sigma(v)$ is defined it satisfies $A_v\sigma(v)=\vec{0}$.
\begin{prop}\label{p:constraint-derive}
Let $f:\R^k\to \R$. If $f$ satisfies a constraints folding, then so do $f^{=i}$ for any $i$, any derivative of $f$, and any scalar multiplication of $f$.
\end{prop}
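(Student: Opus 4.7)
The plan is to first reformulate the constraints folding in a cleaner form. The identity $f(\alpha x A_v + \beta y) = f(\alpha z A_v + \beta y)$ for all $x,y,z,\alpha,\beta$ is equivalent to saying that $f$ is invariant under translations by any vector in the subspace $R := \{xA_v : x\in\R^k\}$ (the image of the linear map $x\mapsto xA_v$). Indeed, setting $\alpha=\beta=1$ and $z=0$ gives $f(y+r)=f(y)$ for every $r\in R$ and $y\in\R^k$, and conversely this translational invariance immediately yields the stated identity. So it suffices to prove that $R$-translation invariance is preserved under the three operations.

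Scalar multiplication is immediate: if $f(x+r)=f(x)$ then $(cf)(x+r)=cf(x+r)=cf(x)=(cf)(x)$. Differentiation in a direction $w$ is also immediate, since
\[
\tfrac{\partial}{\partial w}f(x+r) \;=\; \lim_{\varepsilon\to 0}\tfrac{f(x+r+\varepsilon w)-f(x+r)}{\varepsilon} \;=\; \lim_{\varepsilon\to 0}\tfrac{f(x+\varepsilon w)-f(x)}{\varepsilon} \;=\; \tfrac{\partial}{\partial w}f(x),
\]
so $R$-shift invariance passes through $\nabla$ componentwise; in particular it is preserved by the derivatives used in Figure~\ref{f:assignment}.

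The main (and only mildly nontrivial) step is showing that the Hermite projection $f\mapsto f^{=i}$ preserves $R$-translation invariance. Here I would use the rotational invariance of Gaussian measure. Let $r:=\dim R$, and let $U$ be an orthogonal transformation of $\R^k$ that sends $R$ onto $\mathrm{span}\{e_1,\ldots,e_r\}$. Because Hermite projection is defined with respect to the standard Gaussian measure, which is rotationally invariant, Hermite projection commutes with orthogonal change of variables, i.e.\ $(f\circ U)^{=i} = f^{=i}\circ U$. So it suffices to prove the claim when $R$ is coordinate-aligned. In that case, $R$-shift invariance means $f(x_1,\ldots,x_k)=\tilde f(x_{r+1},\ldots,x_k)$ for some $\tilde f$, so the Hermite expansion of $f$ only involves multi-indices $S=(s_1,\ldots,s_k)$ with $s_1=\cdots=s_r=0$ (since $\langle f, H_S\rangle = \E[\tilde f(x_{r+1},\ldots,x_k)H_S]$ factorizes and vanishes unless the first $r$ indices of $S$ are zero, using $\E[H_{s_j}]=0$ for $s_j\ge 1$). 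Restricting this expansion to terms with $|S|=i$ yields $f^{=i}$, which still depends only on $(x_{r+1},\ldots,x_k)$ and is therefore $R$-shift invariant. Translating back via $U^{-1}$ gives the claim in general.

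I do not expect any obstacle; the proposition is essentially a bookkeeping fact. The only point requiring care is the invocation of rotational invariance of Hermite projection, which follows because the degree-$i$ subspace of $L^2(\R^k,\normal^k)$ is the span of $i$-th order polynomials orthogonal to polynomials of lower degree, a description that makes no reference to a choice of basis.
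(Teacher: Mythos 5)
Your proof is correct. The paper states Proposition~\ref{p:constraint-derive} without a proof, treating it as routine, so there is no argument in the paper to compare against; you are supplying the omitted verification, and you do so cleanly. The reformulation of the constraints folding as invariance of $f$ under translations by the subspace $R=\{xA_v : x\in\R^k\}$ is the right normal form: the two directions of the equivalence are immediate (set $\alpha=\beta=1$, $z=0$ for one direction; use linearity of $x\mapsto xA_v$ and closure of $R$ under scaling for the other). Preservation under scalar multiplication and differentiation is then trivial, as you note. The one step that actually needs a thought is preservation under the Hermite projection $f\mapsto f^{=i}$, and your argument is exactly the standard one: the degree-$i$ Hermite eigenspace is a rotationally invariant subspace of $L^2(\R^k,\normal^k)$ (being the space of degree-$\le i$ polynomials orthogonal to all polynomials of lower degree, a basis-free description), so orthogonal projection onto it commutes with composition by orthogonal maps; after rotating $R$ onto a coordinate subspace, $R$-invariance becomes independence of the first $r$ coordinates, which is visibly preserved because $\hat f(S)=0$ whenever some $s_j\ge 1$ for $j\le r$ (the factor $\Expc{}{H_{s_j}(x_j)}$ vanishes). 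This is complete and matches the level of rigor the surrounding text expects.
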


The next lemma uses Lemma~\ref{l:differentiation-lemma} to argue that $\sigma(v)$ is well-defined for most vertices $v\in V$.
\begin{lemma}[Assignment lemma]\label{l:assignment}
Let $v\in V$ be typical. With probability at least $0.99$ over $y_1,\ldots,y_{d-1}$ and $y$, the algorithm in Figure~\ref{f:assignment} terminates, $i_v$ is well-defined, and
$$\card{(D_{v,y}^{(i_v)})^{=1}}_2^2\geq \eta.$$
\end{lemma}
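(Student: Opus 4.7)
My plan is to prove that, with probability at least $0.99$ over the random choices of $y_1,\ldots,y_{d-1}$, $y$ and $\eta$, the while-loop of Figure~\ref{f:assignment} exits at some $i_v \leq d-1$. Because exiting at an earlier step is by design accompanied by the termination condition $\card{(D_{v,y}^{(i_v)})^{=1}}_2^2 \geq \eta$, it suffices to show that this condition is forced at the last possible step $i = d-1$.

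By item 1 of Lemma~\ref{l:differentiation-lemma}, $D_v^{(d-1)}$ has $w$-degree at most $1$: write $D_v^{(d-1)}(w) = a + \langle b,w\rangle$ with $a \in \R$ and $b \in \R^k$ depending on $y_1,\ldots,y_{d-1}$. Decomposing $w = y + x$ with $y \in Y$ and $x \in Y^\perp$ yields
$$D_{v,y}^{(d-1)}(x) = (a + \langle b, y\rangle) + \langle P_{Y^\perp} b, x\rangle,$$
so $\card{(D_{v,y}^{(d-1)})^{=1}}_2^2 = \card{P_{Y^\perp} b}_2^2$, and the desired inequality becomes $\card{P_{Y^\perp} b}_2^2 \geq \eta$. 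Parseval applied to the degree-$\leq 1$ polynomial $D_v^{(d-1)}$ gives $\card{D_v^{(d-1)}}_2^2 = a^2 + \card{b}_2^2$, which, combined with items 2 and 3 of Lemma~\ref{l:differentiation-lemma}, yields $\EE \card{b}_2^2 \geq 0.99 - \eta d \geq 0.9$.

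The remainder of the proof turns this expectation bound into a high-probability lower bound on $\card{P_{Y^\perp} b}_2^2$. I exploit the recursive formula $b = M \cdot y_{d-1}$ where $M := \nabla^2 D_v^{(d-2)}$ is independent of $y_{d-1}$, and condition on $y_1,\ldots,y_{d-2}$. The subspace $Y$ then decomposes as $Y_{d-2} \oplus \mathrm{span}(y_{d-1}^\perp)$ for $y_{d-1}^\perp := P_{Y_{d-2}^\perp} y_{d-1}$, so $\card{P_Y b}_2^2$ splits into the projection onto the \emph{fixed} subspace $Y_{d-2}$ and the projection onto the residual direction $y_{d-1}^\perp$. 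Because $k$ is much larger than $d$, the residual direction has norm of order $\sqrt{k}$, and a direct second-moment computation (together with $\EE \|M\|_F^2 = \EE \card{b}_2^2 \geq 0.9$) gives $\EE \card{P_{Y^\perp} b}_2^2 = \Omega(1)$. The Carbery--Wright anti-concentration inequality (Lemma~\ref{l:CW}) applied to the polynomial $\det(Y^TY) \cdot \card{P_{Y^\perp} b}_2^2$ --- which is an honest polynomial of bounded degree in $y_1,\ldots,y_{d-1}$ --- together with the random choice $\eta \in [c_0,c_1]\cdot 2^{-2d\log d}$ and the high-probability lower bound on $\det(Y^TY)$, then upgrades this to $\card{P_{Y^\perp} b}_2^2 \geq \eta$ with probability at least $0.99$.

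The main obstacle is the entanglement between $b$ and $Y$: the vector $b$ is itself a function of the $y_i$'s spanning $Y$, so a naive ``random vector versus random subspace'' argument is unavailable. The recursive Hessian representation $b = M \cdot y_{d-1}$ is the mechanism by which this entanglement is partially disentangled, separating the ``old'' randomness in $M$ (through $y_1,\ldots,y_{d-2}$) from the ``fresh'' randomness $y_{d-1}$, and it is also what renders $\det(Y^TY)\cdot \card{P_{Y^\perp} b}_2^2$ polynomial so that Carbery--Wright applies.
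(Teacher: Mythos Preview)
Your high-level strategy matches the paper's: force the loop to step $i=d-1$, invoke the degree and norm bounds of Lemma~\ref{l:differentiation-lemma}, and apply Carbery--Wright to conclude that the linear part is at least $\eta$ with probability $\geq 0.99$. The paper's argument is considerably shorter: it simply asserts that $\card{D_{v,y}^{(d-1)}}_2^2$ is a polynomial of degree at most $2d$ in the Gaussian inputs $y_1,\ldots,y_{d-1},y$, lower bounds its second moment via $\bigl(\Expc{}{\card{D_{v,y}^{(d-1)}}_2^2}\bigr)^2\geq 0.81$, and invokes Lemma~\ref{l:CW} directly --- no Hessian representation, no $\det(Y^TY)$ multiplier, no decomposition of $Y$.

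Your additional machinery is aimed at a genuine subtlety (the projection $P_{Y^\perp}$ is not polynomial in the $y_i$'s), and clearing the denominator via $\det(Y^TY)$ is a natural fix. But two steps in your write-up are not actually carried out. First, the ``direct second-moment computation'' that is supposed to yield $\Expc{}{\card{P_{Y^\perp}b}_2^2}=\Omega(1)$ is only gestured at; since $b$ is built from the very $y_i$'s that span $Y$, the decomposition $Y=Y_{d-2}\oplus\mathrm{span}(y_{d-1}^\perp)$ and the identity $b=My_{d-1}$ do not by themselves bound $\Expc{}{\card{P_Yb}_2^2}$ away from $\Expc{}{\card{b}_2^2}$ --- you would still need to control $\mathrm{tr}(MP_{Y_{d-2}}M)$ and the cross term $\langle My_{d-1},y_{d-1}^\perp\rangle^2/\card{y_{d-1}^\perp}^2$, neither of which you address. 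Second, to apply Carbery--Wright to $p:=\det(Y^TY)\cdot\card{P_{Y^\perp}b}_2^2$ you need a lower bound on $\card{p}_2$, not on $\Expc{}{\card{P_{Y^\perp}b}_2^2}$; these differ by a factor that scales like $k^{d-1}$, and you must then divide back by a high-probability upper bound on $\det(Y^TY)$ of the same order and check that the resulting constant still dominates $\eta\sim 2^{-2d\log d}$. None of this is impossible, but it is real work that your proposal omits, whereas the paper sidesteps it entirely by working with the (asserted) polynomial $\card{D_{v,y}^{(d-1)}}_2^2$.
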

\begin{proof}
The algorithm terminates and $i_v$ is well-defined iff there exists $0\leq i\leq d-1$ such that $\card{(D_{v,y}^{(i)})^{=1}}_2^2\geq \eta$. Assume on way of contradiction that there is no such $i$.
By Lemma~\ref{l:differentiation-lemma}, when the algorithm reaches $i=d-1$, the polynomial $D_v^{(d-1)}$ is of degree $1$ and $$\Expc{Y,y}{\card{D_{v,y}^{(d-1)}}_2^2}=\Expc{y_1,\ldots,y_{d-1}}{\card{D_v^{(d-1)}}_2^2} \geq 0.9.$$
Since each coordinate of the coefficients vector $\nabla D_{v,y}^{(d-1)}$ is a polynomial of degree at most $d$ in $y_1,\ldots,y_{d-1}$ and $y$, the norm $\card{D_{v,y}^{(d-1)}}_2^2$ is a polynomial of degree at most $2d$ in $y_1,\ldots,y_{d-1},y$.
By convexity, $$\Expc{y_1,\ldots,y_{d-1},y}{\card{D_{v,y}^{(d-1)}}_2^4}\geq \left(\Expc{}{\card{D_{v,y}^{(d-1)}}_2^2}\right)^2 \geq 0.81.$$
By Carbery-Wright anti-concentration (Lemma~\ref{l:CW}), $\card{D_{v,y}^{(d-1)}}_2^2\geq \eta$ with probability at least $0.99$ over $y_1,\ldots,y_{d-1}$ and $y$. In this case, the loop in the algorithm in Figure~\ref{f:assignment} terminates and $i_v=d-1$.
\end{proof}


The next lemma argues consistency between $D_u^{(i)}$ and $D_v^{(i)}$ across most edges $e=(u,v)\in E$, provided that $y_1,\ldots,y_{d-1}\in\Theta_e^\perp$ (note that the degree $d$ is constant so the large dependence in $d$ -- which we state here explicitly, and later omit in the $O(\cdot)$ notation -- is permissible).
\begin{lemma}[Consistency lemma]\label{l:consistency} With probability at least $0.6$ over $e=(u,v)\in E$,
for every $0\leq i\leq d-1$,
\begin{eqnarray*}
\Expc{y_1,\ldots,y_{i}\in \Theta_e^\perp}{\left|(D_u^{(i)})_{|\Theta_e^{\perp}} -  (D_v^{(i)})_{|\Theta_e^{\perp}}\right|_2} & \leq & (O(d))^{i}\cdot\tilde{O}(\delta/\sqrt{k} + 1/k).
  \end{eqnarray*}
\end{lemma}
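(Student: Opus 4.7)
The proof goes by induction on $i$, with the base case $i=0$ given directly by inequality~(\ref{i:consistency}) applied to $D_u^{(0)}=f_u^{\leq d}$ and $D_v^{(0)}=f_v^{\leq d}$. The ``good'' set of edges (of density at least $0.6$) is exactly the set of typical edges on which~(\ref{i:consistency}) already holds; the expectation over $y_1,\ldots,y_i\in\Theta_e^\perp$ in the lemma is with respect to independent Gaussians on the hyperplane $\Theta_e^\perp$ and is independent of the choice of edge, so there is nothing more to arrange once $e$ is fixed.

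For the inductive step, the key observation is that whenever $y_i\in\Theta_e^\perp$, directional differentiation along $y_i$ commutes with restriction to $\Theta_e^\perp$: for any smooth $g:\R^k\to\R$,
$$\left(\partial_{y_i} g\right)_{|\Theta_e^{\perp}} = \partial_{y_i}\!\left(g_{|\Theta_e^{\perp}}\right),$$
since infinitesimal motion in direction $y_i$ stays inside $\Theta_e^\perp$. Writing $q := (D_u^{(i-1)})_{|\Theta_e^{\perp}} - (D_v^{(i-1)})_{|\Theta_e^{\perp}}$, which is a polynomial of degree at most $d-i+1\leq d$ on the Gaussian space $\Theta_e^\perp\simeq\R^{k-1}$, this commutativity yields
$$(D_u^{(i)})_{|\Theta_e^{\perp}} - (D_v^{(i)})_{|\Theta_e^{\perp}} = \partial_{y_i} q.$$

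A Hermite-expansion computation based on $H_j'=\sqrt{j}\,H_{j-1}$ (so $\partial_j H_S = \sqrt{s_j}\,H_{S-e_j}$) gives, for any polynomial $p$ of degree at most $d$ on $\Theta_e^\perp$,
$$\Expc{x\sim\normal_{\Theta_e^{\perp}}}{\card{\nabla p(x)}_2^2} = \sum_S \card{S}\,\hat{p}(S)^2 \leq d\,\card{p}_2^2.$$
Since $\partial_{y_i}q(x)=\tup{\nabla q(x),y_i}$ and $\Expc{y_i}{\tup{v,y_i}^2} = \card{v}_2^2$ for Gaussian $y_i$ on $\Theta_e^\perp$, taking the expectation over $y_i$ gives $\Expc{y_i}{\card{\partial_{y_i}q}_2^2} = \Expc{x}{\card{\nabla q(x)}_2^2} \leq d\,\card{q}_2^2$. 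Jensen's inequality then yields $\Expc{y_i}{\card{\partial_{y_i}q}_2}\leq \sqrt{d}\,\card{q}_2$. Taking a further expectation over $y_1,\ldots,y_{i-1}\in\Theta_e^\perp$ and plugging in the inductive hypothesis on $\Expc{}{\card{q}_2}$ gives a factor of $\sqrt d\cdot (O(d))^{i-1}\le (O(d))^i$, closing the induction.

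\textbf{Main obstacle.} The only nontrivial ingredient is the commutativity of directional differentiation with restriction when the direction lies in the hyperplane -- precisely the reason the zoom-in in Conjecture~\ref{c:sub-near-intersection} is needed, and why the algorithm in Figure~\ref{f:assignment} singles out the subspace $Y$ explicitly. Without forcing $y_i\in\Theta_e^\perp$, $\partial_{y_i}$ would produce an out-of-hyperplane component that could not be controlled by the previous level of the induction. The per-step loss of $\sqrt d$ from the Hermite-gradient bound is comfortably absorbed in the stated $(O(d))^i$ slack.
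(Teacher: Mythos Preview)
Your proposal is correct and follows essentially the same inductive route as the paper: the base case is inequality~(\ref{i:consistency}), and the inductive step bounds the directional derivative in a random Gaussian direction $y_i\in\Theta_e^\perp$ by a degree factor times the previous level. The paper phrases the step slightly differently---writing $D_u^{(i)}-D_v^{(i)}=\langle\nabla(D_u^{(i-1)}-D_v^{(i-1)}),y_i\rangle$ on $\R^k$, averaging $y_i$ inside the square root via concavity, and then invoking an $O(d)$ gradient bound---whereas you first use commutativity of $\partial_{y_i}$ with restriction to work entirely on $\Theta_e^\perp$ and then apply the Hermite identity $\sum_S|S|\hat q(S)^2\le d\,\|q\|_2^2$. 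Your version is arguably cleaner (it makes explicit why $y_i\in\Theta_e^\perp$ is needed and yields a tighter $\sqrt d$ per step), but the two arguments are the same in substance.
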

\begin{proof} By induction over $i$.
For $i=0$, the inequality follows from inequality~(\ref{i:consistency}): for at least $0.6$ of the edges $e=(u,v)\in E$ we have
$$\left|(f_u^{\leq\degree}-f_v^{\leq\degree})_{|\Theta_e^{\perp}}\right|_2 \leq \tilde{O}(\delta/\sqrt{k} + 1/k).$$
Assume that the claim holds for $i-1$, and let us prove it for $i$. Let $(u,v)\in E$.
We have
$D_u^{(i)} - D_v^{(i)} = \tup{\nabla (D_u^{(i-1)}- D_v^{(i-1)}),y_i}$, where $\nabla (D_u^{(i-1)}- D_v^{(i-1)})$ depends on $y_1,\ldots,y_{i-1}$ and is independent of $y_i$.
Thus, for every $y_1,\ldots,y_{i-1}$ and $x\in\R^k$, it holds that $(D_u^{(i)} - D_v^{(i)})(x)$ is a normal variable with standard deviation $\card{\nabla (D_u^{(i-1)}- D_v^{(i-1)})(x)}_2$.
Thus, by concavity and the inductive hypothesis,
\begin{eqnarray*}
  \Expc{y_1,\ldots,y_i\in \Theta_e^\perp}{\sqrt{\Expc{x\in \Theta_e^\perp}{(D_u^{(i)} - D_v^{(i)})(x)^2}}} & \leq & \Expc{y_1,\ldots,y_{i-1}}{\sqrt{\Expc{x,y_i}{\card{\nabla(D_u^{(i-1)}- D_v^{(i-1)})(x)}_2^2}}}\\
    & \leq & O(d)\cdot \Expc{y_1,\ldots,y_{i-1}}{\sqrt{\Expc{x}{(D_u^{(i-1)}- D_v^{(i-1)})(x)^2}}}\\
    & \leq & (O(d))^{i}\tilde{O}(\delta/\sqrt{k} + 1/k).
\end{eqnarray*}
\end{proof}

The next lemma is similar to Lemma~\ref{l:consistency}, but applies to the shifted $D_{u,y}^{(i)}$ and $D_{v,y}^{(i)}$ rather than to $D_{u}^{(i)}$ and $D_{v}^{(i)}$.
Recall that $Y = span\set{y_1,\ldots,y_{d-1}}$ and $E_Y = \sett{e\in E}{Y\subseteq \Theta_e^\perp}$. For each $e\in E_Y$ we write $\Theta_e^\perp = Y + S_e$. The subspace $S_e$ is a uniform hyperplane in $Y^\perp$.
\begin{lemma}\label{l:consistency'} With probability at least $0.99$ over $Y$ and $y$, with probability at least $0.6$ over $e=(u,v)\in E_Y$,
for every $0\leq i\leq d-1$,
\begin{eqnarray*}
\left|(D_{u,y}^{(i)})_{|S_e^{\perp}} -  (D_{v,y}^{(i)})_{|S_e^{\perp}}\right|_2 & \leq & \tilde{O}(\delta/\sqrt{k} + 1/k).
\end{eqnarray*}
\end{lemma}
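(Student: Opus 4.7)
The plan is to reduce Lemma~\ref{l:consistency'} to Lemma~\ref{l:consistency} via a Gaussian-measure factorization that is available exactly on the zoom-in edges. For any $e \in E_Y$ the decomposition $\Theta_e^\perp = Y \oplus S_e^\perp$ is an orthogonal direct sum (since $S_e^\perp \subseteq Y^\perp$), so the standard Gaussian on $\Theta_e^\perp$ is the product of the Gaussians on $Y$ and on $S_e^\perp$. The affine shift defining $D_{v,y}^{(i)}(x) = D_v^{(i)}(y+x)$ is designed precisely so that integrating the restricted function $(D_{v,y}^{(i)})_{|S_e^\perp}$ over Gaussian $y \in Y$ reassembles the unshifted restriction $(D_v^{(i)})_{|\Theta_e^\perp}$.

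Concretely, for any $e=(u,v)\in E_Y$ and any $y_1,\ldots,y_{d-1}\in\Theta_e^\perp$, Fubini gives
\[
\Expc{y \in Y}{\card{(D_{u,y}^{(i)} - D_{v,y}^{(i)})_{|S_e^\perp}}_2^2} \;=\; \card{(D_u^{(i)} - D_v^{(i)})_{|\Theta_e^\perp}}_2^2.
\]
Lemma~\ref{l:consistency} bounds the right-hand side by $\bigl((O(d))^i\cdot\tilde{O}(\delta/\sqrt{k}+1/k)\bigr)^2$ in expectation over Gaussian $y_1,\ldots,y_{d-1}\in\Theta_e^\perp$, for at least a $0.6$-fraction of edges $e$. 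Markov's inequality, applied to the outer expectation over $y$ and the $y_i$'s, combined with a union bound over the $d=O(1)$ values of $i$, then upgrades this to the claimed uniform high-probability bound on $(D_{u,y}^{(i)} - D_{v,y}^{(i)})_{|S_e^\perp}$; since $d$ is constant, all losses are absorbed in the $\tilde{O}$ notation.

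It remains to reorder the quantifiers: the algorithm samples $y_1,\ldots,y_{d-1},y$ as Gaussians in $\R^k$ \emph{first} and only then considers $e\in E_Y$, whereas the argument above conditions on $e$ first. Lemma~\ref{l:subspace-sampling} is exactly the tool that bridges the two sampling orders: since the set of typical edges has constant density and the marginal distribution of $\Theta_e$ is (near-)uniform by the Subspaces Near-Intersection regularity, picking a typical edge $e$ and then $(y_1,\ldots,y_{d-1})$ Gaussian in $\Theta_e^\perp$ yields a distribution on $Y=\mathrm{span}(y_1,\ldots,y_{d-1})$ that is $\tilde{O}(1/k)$-close in statistical distance to picking $Y$ Gaussian in $\R^k$ directly. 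Transferring the ``most $e$, most $y_i$'s'' bound through this negligible distance produces the desired ``most $(Y,y)$, most $e\in E_Y$'' conclusion.

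The main obstacle is this quantifier swap; the measure identity is essentially forced by the orthogonal decomposition, and the Markov/union-bound steps are routine. The swap is delicate because $E_Y$ itself depends on the random $Y$, but Lemma~\ref{l:subspace-sampling} is precisely calibrated for this situation, and since $d$ is a constant the statistical-distance loss $\tilde{O}_{d,\varepsilon}(1/k)$ together with the Markov failure probabilities are comfortably dominated by the $0.01$ slack in the ``$0.99$ fraction over $(Y,y)$'' conclusion.
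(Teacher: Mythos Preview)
Your proposal is correct and follows essentially the same route as the paper: the Fubini identity from the orthogonal decomposition $\Theta_e^\perp = Y \oplus S_e$, the appeal to Lemma~\ref{l:consistency}, a Markov step over $(y_1,\ldots,y_{d-1},y)$, and Lemma~\ref{l:subspace-sampling} for the quantifier swap are exactly the four ingredients the paper uses.

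One small technical slip: Lemma~\ref{l:consistency} bounds $\Expc{y_1,\ldots,y_i}{\bigl|(D_u^{(i)}-D_v^{(i)})_{|\Theta_e^\perp}\bigr|_2}$, not the expectation of the \emph{squared} norm, so your sentence ``Lemma~\ref{l:consistency} bounds the right-hand side by $\bigl(\cdots\bigr)^2$ in expectation'' does not follow directly (Jensen goes the wrong way). The paper handles this by working with the unsquared norm throughout and inserting a concavity step, namely $\sqrt{\Expc{y}{\Expc{x}{\cdot}}} \geq \Expc{y}{\sqrt{\Expc{x}{\cdot}}}$, before applying Markov. If you either insert that concavity step or rerun the induction in Lemma~\ref{l:consistency} for squared norms, your argument goes through unchanged.
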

\begin{proof}
By Lemma~\ref{l:consistency}, with probability at least $0.6$ over $e=(u,v)\in E$, for every $0\leq i\leq d-1$,
\begin{eqnarray}
\Expc{Y\subseteq \Theta_e^\perp}{\sqrt{\Expc{y\in Y,x\in S_e}{(D^{(i)}_{u,y} - D^{(i)}_{v,y})(x)^2}}} & \leq & \tilde{O}(\delta/\sqrt{k} + 1/k).
\end{eqnarray}
By concavity, with probability at least $0.6$ over $e=(u,v)\in E$, for every $0\leq i\leq d-1$,
\begin{eqnarray}
\Expc{Y\subseteq \Theta_e^\perp}{\Expc{y\in Y}{\sqrt{\Expc{x\in S_e}{(D^{(i)}_{u,y} - D^{(i)}_{v,y})(x)^2}}}} & \leq & \tilde{O}(\delta/\sqrt{k} + 1/k).
\end{eqnarray}
By Markov's inequality, with probability at least $0.6$ over $e=(u,v)\in E$, with probability at least $0.99$ over $Y\subseteq \Theta_e^\perp$ and $y \in Y$, we have
\begin{eqnarray}
\left|(D^{(i)}_{u,y})_{|S_e} - (D^{(i)}_{v,y})_{|S_e} \right|_2 & \leq & \tilde{O}(\delta/\sqrt{k} + 1/k).
\end{eqnarray}

By Lemma~\ref{l:subspace-sampling}, the distribution induced on $e$ and $Y$ by first picking $e\in E$ out of the set of fraction $0.6$, and then picking $Y\subseteq \Theta_e^\perp$, is close to the distribution that picks $Y$ by picking Gaussian $y_1,\ldots,y_{d-1}$, $Y=span\set{y_1,\ldots,y_{d-1}}$, and then picks $e\in E_Y$ that belongs to the set of fraction $0.6$.
Therefore, with probability $0.99$ over $Y,y$, the above event also holds with probability $0.6$ over $e \in E_Y$.
\end{proof}

By Lemmas~\ref{l:assignment} and~\ref{l:consistency'}, there exist $y_1,\ldots,y_{\degree-1}$ and $y$, such that with probability at least $0.5$ over $e=(u,v)\in E_Y$, the following two conditions holds
(recall that when one picks $e=(u,v)\in E_Y$ uniformly, the distribution over $v$ is uniform over $V$, and that $0.9$ fraction of the vertices $v\in V$ are typical):
\begin{enumerate}
\item $\card{(D_{v,y}^{(i_v)})^{=1}}_2^2\geq \eta$.

\item For every $0\leq i\leq d-1$,
$\left|(D_{u,y}^{(i)})_{|S_e^{\perp}} - (D_{v,y}^{(i)})_{|S_e^{\perp}}\right|_2 \leq \tilde{O}(\delta/\sqrt{k} + 1/k)$.
\end{enumerate}
The second item implies that for every $0\leq i\leq d-1$, $\left|((D^{(i)}_{u,y})_{|S_e})^{=1} - ((D^{(i)}_{v,y})_{|S_e})^{=1} \right|_2  \leq \tilde{O}(\delta/\sqrt{k} + 1/k)$.
The case $\degree=1$ of Theorem~\ref{t:concentration} implies that for every $u\in V$ with probability at least $0.999$ over the edge $e=(u,v)\in E_Y$, for every $i$,
\begin{eqnarray}\label{i:degree-1-parts-consistency}
\card{((D_{u,y}^{(i)})_{|S_e})^{=1} - (D_{u,y}^{(i)})^{=1})_{|S_e} }_2  & \leq & \tilde{O}(\delta/\sqrt{k} + 1/k).
\end{eqnarray}
Applying the same to $v\in V$ and taking a union bound and a triangle inequality, with probability at least $0.49$ over $(u,v)\in E_Y$, for every $i$,
\begin{eqnarray}\label{i:degree-1-parts-consistency}
\card{((D_{u,y}^{(i)})^{=1})_{|S_e} - ((D_{v,y}^{(i)})^{=1})_{|S_e} }_2  & \leq & \tilde{O}(\delta/\sqrt{k} + 1/k).
\end{eqnarray}
Note that inequality~(\ref{i:degree-1-parts-consistency}) implies consistency between vectors corresponding to $u$ and to $v$ restricted to the hyperplane of interest. It remains to argue that $i_u = i_v$ with high probability.
As a consequence of inequality~(\ref{i:degree-1-parts-consistency}), with probability at least $0.49$ over $e=(u,v)\in E_Y$, for every $i$,
\begin{eqnarray}\label{i:norm-consistency}
\card{\card{((D_{u,y}^{(i)})^{=1})_{|S_e}}_2^2 - \card{((D_{v,y}^{(i)})^{=1})_{|S_e}}_2^2} & \leq & \tilde{O}(\delta/\sqrt{k} + 1/k).
\end{eqnarray}
By sampling (Lemma~\ref{l:sampling}) and union bound, for every $u\in V$, with probability at least $0.999$ over $e=(u,v)\in E_Y$, for every $i$,
\begin{eqnarray}\label{i:sampling}
\card{\card{((D_{u,y}^{(i)})^{=1})_{|S_e}}_2^2 - \card{(D_{u,y}^{(i)})^{=1}}_2^2} & \leq & \tilde{O}(1/k).
\end{eqnarray}
A similar bound holds for $v$. Hence, from inequalities~(\ref{i:norm-consistency}) and (\ref{i:sampling}) via a union bound and a triangle inequality, with probability at least $0.47$ over $e=(u,v)\in E_Y$, for every $i$,
\begin{eqnarray}\label{i:degree-1-norm-consistency-restricted}
\card{\card{(D_{u,y}^{(i)})^{=1}}_2^2 - \card{(D_{v,y}^{(i)})^{=1}}_2^2} & \leq & \tilde{O}(\delta/\sqrt{k} + 1/k).
\end{eqnarray}
By the design of the algorithm in Figure~\ref{f:assignment}, inequality~(\ref{i:degree-1-norm-consistency-restricted}) guarantees that $i_u = i_v$ except with probability $\tilde{O}(\delta + 1/k)$. In this case, by inequality~(\ref{i:degree-1-parts-consistency}),
\begin{eqnarray}\label{l:proj-consistency}
\card{Proj_{S_e}(vec_{u}) - Proj_{S_e}(vec_{v})} & \leq & \tilde{O}(\delta/\sqrt{k} + 1/k).
\end{eqnarray}
The vectors $vec_u$ and $vec_v$ are normalized to obtain $\sigma(u)$ and $\sigma(v)$, respectively. Hence, by inequalities~(\ref{l:proj-consistency}) and (\ref{i:degree-1-norm-consistency-restricted}), and since $\card{(D_{u,y}^{(i_u)})^{=1}}_2\geq \Omega(1)$, with probability at least $0.47$ over $e=(u,v)\in E_Y$,
$$\card{Proj_{S_e}(\sigma(u)) - Proj_{S_e}(\sigma(v))} \leq \tilde{O}(\delta/\sqrt{k} + 1/k).$$

\remove{
\subsection{Auxiliary Lemma}\label{s:lemmas}

In this section we prove a lemma that was needed earlier in the section. The lemma shows that the Gaussian distribution over subspaces that contain $Y$ is approximately Gaussian over $\R^k$.

\remove{
\begin{lemma}[Integration by parts formula]\label{l:derivative} Let $f:\R^n\to\R$ be smooth. Then,
$$\Expc{}{f(x)x} = \Expc{}{\nabla f}.$$
\end{lemma}

\begin{lemma}[Lemma 21 in~\cite{E2}]\label{l:inner-product}
Let $A\subseteq \R^n$ and let $\mu = \Prob{x\sim\normal^n}{x\in A}$. For any $\theta \in \R^n$, $\card{\theta}_1 = 1$, we have
$$\Expc{x\in A}{\tup{\theta,x}}\leq O(\mu\sqrt{\log \frac{1}{\mu}}).$$
\end{lemma}

\begin{prop}\label{p:inner-product}
There exist constants $0<c_l<c_u<5$ as follows.
Let $z\in\R^k$ be a unit vector. Let $y$ be Gaussian in $\R^k$.
Then, for any $\delta>0$, with probability at least $1-\delta$ over $y\in\R^k$,
$$c_l\delta \leq \card{\tup{y,z}} \leq c_u\sqrt{\log(1/\delta)}.$$
\end{prop}
\begin{proof}
$\tup{y,z}$ is a normal random variable of mean $0$ and variance $1$.
\end{proof}

\begin{prop}\label{p:derive-bounds}
Let $f:\R^k\to\R$. Then,
$$\Expc{y}{\card{\frac{\partial}{\partial y} f}_2 }  = \Expc{x}{\card{\nabla f (x)}_2}$$
\end{prop}
\begin{proof}
We have $\frac{\partial}{\partial y} f = \tup{\nabla f,y}$. For every fixed $x\in\R^k$, the variable $\tup{\nabla f(x),y}$ is normally distributed with standard deviation $\card{\nabla f (x)}_2$. Hence, $\Expc{x}{\card{\frac{\partial}{\partial y} f}_2} = \Expc{x}{\card{\nabla f (x)}_2}$.
\end{proof}

\begin{lemma}\label{l:2-approx}
For vectors $u,v\in\R^n$, $\card{u}_2,\card{v}_2\leq 1$, if $\card{u-v}_2\leq \epsilon$, then $\card{ \card{u}_2^2 - \card{v}_2^2} \leq O(\epsilon)$.
\end{lemma}
\begin{proof} We have
$$\card{u}_2^2 - \card{v}_2^2 = \sum_{i=1}^{n} u_i^2 - v_i^2 = \sum_{i=1}^{n} (u_{i} - v_i)(u_i+v_{i}).$$
By Cauchy-Schwarz inequality and triangle inequality,
$$ \card{u}_2^2 - \card{v}_2^2 \leq \card{u-v}_2\cdot\card{u+v}_2
\leq \epsilon\cdot(\card{u}_2 + \card{v}_2)
\leq 2\epsilon.$$
\end{proof}

\begin{prop}\label{p:derive-upper}
Let $f:\R^k\to\R$ be a polynomial of degree at most $d$. Then, $\card{\nabla f}_2 \leq O(d)\card{f}_2$.
\end{prop}
}

\remove{
\begin{lemma}\label{l:normalized-difference} Let $0<\epsilon< \beta<1$.
If $u,v\in\R^n$ satisfy $\beta\leq \card{u}^2_2,\card{v}^2_2 \leq 1$ and $\card{u-v}_2\leq \epsilon$, then
$$\card{\frac{u}{\card{u}_2} - \frac{v}{\card{v}_2}}_2 \leq O(\epsilon/\beta).$$
\end{lemma}
\begin{proof}
By Lemma~\ref{l:2-approx}, we know that $\card{\card{u}_2^2-\card{v}_2^2}\leq O(\epsilon)$. Hence, $\frac{\card{u}_2^2}{\card{v}_2^2}= 1 \pm O(\epsilon/\beta)$ and $\frac{\card{u}_2}{\card{v}_2}= 1 \pm O(\epsilon/\beta)$. Therefore,
$$\card{\frac{u}{\card{u}_2} - \frac{v}{\card{v}_2}}_2 \leq \frac{\card{u-v(1 \pm O(\epsilon/\beta))}_2}{\card{u}_2}\leq
\frac{\card{u-v}_2}{\card{u}_2} \pm \frac{\card{v}_2}{\card{u}_2}\cdot O(\epsilon/\beta)= O(\epsilon/\beta).$$
\end{proof}
}

\begin{lemma}\label{l:statistical-difference}
Fix $Y$, a subspace of dimension $r$ in $\RR^k$. Let $\theta$ be a random unit vector, let $\theta_Y$ be a random unit vector in the subspace orthogonal to $Y$, and let $Z$ be a standard Gaussian vector. Define,
$$
X_1 = \frac{P_{\theta^\perp} Z}{|P_{\theta^\perp} Z|}, ~~ X_2 = \frac{P_{\theta_Y^\perp} Z}{|P_{\theta_Y^\perp} Z|}
$$
Then, $X_1,X_2$ are close in statistical distance, specifically,
$$
\mathrm{D_{TV}} (X_1, X_2) \leq O(1/{k}).
$$
\end{lemma}
\begin{proof}	
Since the distributions of both $X_1$ and $X_2$ are invariant under orthogonal transformations $U$ such that $U(Y)=Y$, it is enough to prove that
$$
\mathrm{D_{TV}} \Bigl ( (|P_{Y} X_1|^2)  , (|P_{Y} X_2|^2 \Bigr ) \leq O(1/k).
$$
Let $W_1 \sim \chi^2(r), W_2 \sim \chi^2(k-r), W_3 \sim \chi^2(k-r-1)$ with $W_1,W_2,W_3$ being independent. Note that $X_1$ is in fact uniform in the sphere (by rotational invariance), so that
$$
|P_{Y} X_1|^2 \stackrel{(d)}{=} \frac{ W_1 }{ W_1 + W_2 }.
$$
On the other hand, $P_{Y} P_{\theta_Y^\perp} Z = P_Y Z$, so we have
$$
|P_{Y} X_2|^2 = \frac{ |P_Y Z|^2 }{ |P_Y P_{\theta_Y^\perp} Z|^2 + |P_{Y^\perp} P_{\theta_Y^\perp} Z|^2 } \stackrel{(d)}{=} \frac{ W_1 }{ W_1 + W_3 }.
$$
Therefore,
$$
\mathrm{D_{TV}} \Bigl ( (|P_{Y} X_1|^2)  , (|P_{Y} X_2|^2 \Bigr ) = \mathrm{D_{TV}} \Bigl ( \frac{W_1}{W_2}, \frac{W_1}{W_3} \Bigr ).
$$
The above will be concluded by showing the following two facts:
\begin{equation}\label{eq:TV1}
\mathrm{D_{TV}} \left (W_2 \left (1+\frac{1}{k-r} \right ), ~ W_3\right ) = O(1/k)
\end{equation}
and
\begin{equation}\label{eq:TV2}
\mathrm{D_{TV}} \left (W_1 \left (1+\frac{1}{k-r} \right ), ~ W_1 \right ) = O(1/k).
\end{equation}
Inequality \eqref{eq:TV2} simply follows from the fact that the density of a $\chi^2$ variable has a bounded derivative. Towards \eqref{eq:TV1}, denote $q=(k-r)/2$. The density of $W_2$ is
$$
\frac{1}{2^q \Gamma(q)} x^q e^{-x/2}
$$
and therefore the density of $W_2 \left (1+\frac{1}{k-r} \right )$ is
$$
\left (\frac{q}{q+1/2} \right )^{q-1} \frac{1}{2^q \Gamma(q)} x^q e^{-x/2 \frac{q}{q+1/2} }
$$
and the density of $W_3$ is
$$
\frac{1}{2^{q+1/2} \Gamma(q+1/2)} x^{q+1/2} e^{-x/2}
$$
Therefore, the left hand side of \eqref{eq:TV1} is equal to
$$
\frac{1}{2^q \Gamma(q)} \int  x^q e^{-x/2} \left | \left (\frac{q}{q+1/2} \right )^{q-1} e^{-x/2 \left (\frac{q}{q+1/2} - 1 \right ) } - x^{1/2} \frac{2^q \Gamma(q)}{2^{q+1/2} \Gamma(q+1/2)}  \right | = \EE[|\xi(W_2)|]
$$
where
$$
\xi(s) = \left (\frac{q}{q+1/2} \right )^{q-1} e^{-s/2 \left (\frac{q}{q+1/2} - 1 \right ) } - s^{1/2} \frac{\Gamma(q)}{2^{1/2} \Gamma(q+1/2)}.
$$
By the exponential approximation, we have
$$
\left (\frac{q}{q+1/2} \right )^{q-1} e^{-(s/2) \left (\frac{q}{q+1/2} - 1 \right ) } = e^{-(2q-s)/4q} \left (1 + O\left ( \frac{1}{q} \right ) \right ).
$$
On the other hand, by Stirling's formula,
$$
\frac{\Gamma(q)}{\Gamma(q+1/2)} = q^{-1/2} \left ( q+1/2\right )^{-q} e^{1/2} \left ( 1 + O(1/q) \right ) = q^{-1/2} \left (1 + O(1/q) \right ).
$$
Therefore, as long as $0.1q < s < 10 q$, the two last inequalities yield that
$$
\xi(s) = O(1/q) + O( e^{-(2q-s)/4q}  - (s/2q)^{1/2}  ) = O\left ( \frac{1}{q} + \left (\frac{2q-s}{q} \right )^2 \right ).
$$
Inequality \eqref{eq:TV1} follows.
\end{proof}

}


\remove{One can adapt the concentration theorems we discussed to the case where the hyperplane contains a subspace $Y$ of dimension $r$. In this case we consider a modified distribution over $\R^k$ where the $Y$ component is slightly larger than normal.

\begin{lemma}[Adaptation of~\cite{KlartagRegev}]\label{l:adapted-sampling} Let $r$ be a constant and $k$ sufficiently large. Let $Y\subseteq \R^k$ of dimension $r$.
Let $f:\R^k\to \R$. Let $S$ be a uniform subspace of dimension $k-1$ among those with $Y\subseteq S$. Let $\mathcal{D}$ be the distribution over $\R^k$ obtained by picking Gaussian $y\in Y$, $z\in Y^\perp$ and taking $x=y+z$ where $\Expc{}{\card{y}_2^2} = r\cdot\frac{k}{k-1}$, $\Expc{}{\card{z}_2^2} = (k-r-1)\cdot\frac{k}{k-1}$.
Then, with probability at least $0.99$ over $S$, we have
$\card{\Expc{S}{f}-\Expc{\mathcal{D}}{f}}\leq \tilde{O}(1/k)$.
\end{lemma}
\begin{proof}
Let $Z = Y^\perp$.
For every $y\in Y$ let $g_y:Z\to\R$ assign $z\in Z$ the value $f(z+y)$. Let $S'$ be a random $(k-1-r)$-dimensional hyperplane in $Z$. By Lemma~\ref{l:sampling}, for every $y\in Y$, with probability at least $0.99$ over $S'$ we have $$\card{\Expc{z'\in S'}{g_y(z')} - \Expc{z\in Z}{g_y(z)}} \leq \tilde{O}(1/k),$$ where $\Expc{}{\card{z'}_2^2}=\Expc{}{\card{z}_2^2}$.


Note that $Y+S'$ is a uniform subspace of dimension $k-1$ in $\R^k$ among those subspaces $S\subseteq \R^k$ of dimension $k-1$ that contain $Y$, and that $$\Expc{x\in S}{f(x)} = \Expc{y\in Y}{\Expc{z'\in S'}{g_y(z')}},$$ where $\Expc{}{\card{x}_2^2} = k$, $\Expc{}{\card{y}_2^2} = r\cdot\frac{k}{k-1}$ and $\Expc{}{\card{z'}_2^2} = (k-r-1)\cdot\frac{k}{k-1}$.
By the above, with probability at least $0.99$ over $S$, we have $\card{\Expc{x\in S}{f(x)} - \Expc{y\in Y}{\Expc{z''\in Z}{g_y(z'')}}} \leq \tilde{O}(1/k)$, where $\Expc{}{\card{z''}_2^2} = (k-r-1)\cdot\frac{k}{k-1}$.
\end{proof}

\begin{lemma}[Adaptation of Theorem~\ref{t:concentration}]\label{t:adapted-concentration}
Let $Y\subseteq \R^k$ of dimension $r$. Let $f_y:Y^\perp\to\R$ be $f_y(z)=f(y+z)$.
For every $0$-homogeneous function $f:\R^k\to\R$ with bounded $2$-norm, with probability at least $0.99$ over the choice of a hyperplane $S'$ in $Y^\perp$,
$$\card{(f_{|Y+S'})^{\leq\degree} - \Expc{y\in Y}{(f_y^{\leq \degree})_{|S'}} }_2\leq {O}_d({1}/k).$$
\end{lemma}
\begin{proof}
By Theorem~\ref{t:concentration}, for every $y\in Y$, with probability at least $0.99$ over $S'$ we have
$$\card{(f_{y|S'})^{\leq\degree} - (f_{y}^{\leq\degree})_{|S'} } \leq \tilde{O}(1/k).$$
IT DOES NOT HOLD THAT $\Expc{y\in Y}{(f_{y|S'})^{\leq\degree}}= (f_{Y+S'})^{\leq\degree}$.

\end{proof}
}

\remove{
\section{Concentration Theorem}

In this section we prove Theorem~\ref{t:concentration}.

\textbf{Note:} In this section we use $n$ to denote the dimension and $k$ to denote the degree.

\subsection{Preliminaries}
A tensor $T$ of degree $\ell$ is identified with the multilinear polynomial
$$
T[x^1,...,x^{\ell}] = \sum_{ i_1,...,i_\ell \in [n]^{\ell} } T_{i_1,...,i_{\ell}} x^1_{i_1} \cdot ... \cdot x^{\ell}_{i_\ell}.
$$
For any $x \in \RR^n$, denote by $H^{(k)}(x)$, the $k$-th Hermite tensor associated with $x$, defined by
$$
H^{(k)}(x) := \phi(x)^{-1} (\nabla^{k} \phi(x)),
$$
where $\phi(x) = \exp(-|x|^2 / 2)$. For example, we have
$$
H^{(1)}(x) = x, ~~ H^{(1)}(x)[y] = \langle x, y \rangle,
$$
$$
H^{(2)}(x) = x^{\otimes 2} - \Id, ~~ H^{(1)}(x)[y,z] = \langle x,y \rangle \langle x,z \rangle - \langle y,z \rangle.
$$
and
$$
H^{(3)}(x)[y,z,w] = \langle x, y \rangle \langle x, z \rangle \langle x, w \rangle - \langle x, y \rangle \langle z, w \rangle - \langle x, z \rangle \langle y, w \rangle - \langle x, w \rangle \langle y, z \rangle,
$$
(see \cite[p. 157]{McCullagh}). For two tensors $T, U$, define the Hilbert-Schmidt inner product by
$$
\langle T, U \rangle_{HS} = \sum_{i_1,...,i_\ell} T_{i_1,...,i_\ell} U_{i_1,...,i_\ell}
$$
and the corresponding norm
$$
\| T \|_{HS}^2 = \langle T, T \rangle_{HS}.
$$
We will allow ourselves to abbreviate the notation and write $\|T\|$ and $\langle T,U\rangle$ whenever this causes no confusion. For a function $f$, we define its $k$-barycenter by
$$
b_k(f) := \int H^{(k)} (x) f(x) d \gamma(x)
$$
and also denote
$$
\alpha_k(f)^2 := \left \| b_k(f) \right \|_{HS}^2.
$$
For a tensor $T$ of degree $\ell$ and an orthogonal projection $P$, define
$$
P T [x_1,...,x_\ell] := T[P x_1,...,P x_\ell].
$$
It is not hard to verify that for $f:\RR^n \to \RR$ and for $\theta \in \Sph$, one has
\begin{equation}\label{eq:marginal}
P_{\theta^\perp} \int H^{(k)}(x) f(x) d \gamma(x) = \int H^{(k)} (x) f_\theta (x) d \gamma(x)
\end{equation}
where
$$
f_\theta(x) = \int_{-\infty}^{\infty} f(P_{\theta^{\perp}} x + t \theta  ) d \gamma(t)
$$
is the marginal of $f$ on $\theta^\perp$.

For a unit vector $\theta \in \Sph$, let $\gamma_\theta$ be the Gaussian measure conditioned on the event $\langle x, \theta \rangle = 0$, in other words,
$$
d \gamma_\theta(x) = \frac{1}{(2 \pi)^{(n-1)/2}} e^{-| x|^2 / 2} \mathbf{1}_{\langle x, \theta \rangle = 0} d\mathcal{H}_{n-1}(x).
$$
For a function $f:\RR^n \to \RR$ define by slight abuse of notation
$$
H^{(k)}(f; \theta) = \int H^{(k)}(x) f(x) d \gamma_\theta.
$$
}
\section{Concentration of the restricted Hermite tensors}

In this section we prove Theorem~\ref{t:concentration}.

\textbf{Note:} In this section we use $n$ to denote the dimension. 

\subsection{Overview of the proof}
We first sketch some of the main steps of the proof. Consider the functional
$$
Q(f) = \EE_{\Theta} \card{(f_{|\Theta^\perp})^{\leq\degree} - (f^{\leq \degree})_{|\Theta^\perp}}_2^2
$$
where $\Theta$ is uniformly distributed in the sphere. It is not hard to check that $Q(f)$ is a quadratic form in $f$, which is invariant under compositions of $f$ with orthogonal transformations.

Here we allude to Schur's lemma, which states that rotational invariant quadratic forms on functions \emph{on the sphere} can be expressed as linear combinations of the $L_2$ norms of the projections onto eigenspaces of the Laplacian. This means that the maximum of the quadratic form among functions with a perscribed $L_2$ norm must be attained on a function which only depends on the first coordinate $x_1$.

Our quadratic form, however, is a functional of functions on $\RR^n$ rather than the sphere; this issue can be bypassed by considering homogeneous functions and using the concentration of the Gaussian in a thin spherical shell. Thus the first step of the proof roughly implies that it is sufficient to consider functions of the form $f(x_1,...,x_n) = g(x_1)$.

By applying rotations around the first vector of the standard basis, $e_1$, it is not hard to see that when $f$ is of the above form, the quantity
$$
\theta \to \card{(f_{|\theta^\perp})^{\leq\degree} - (f^{\leq \degree})_{|\theta^\perp}}_2^2
$$
only depends on $\theta_1 := \langle \theta, e_1 \rangle$. By concentration of measure, this angle is of the order $1/\sqrt{n}$. The technical bulk of the proof is to show that the above expression behaves like $\theta_1^4$ for small $\theta_1$.

\subsection{Preliminaries}

A tensor $T$ of degree $\ell$ is identified with the multilinear polynomial
$$
T[x^1,...,x^{\ell}] = \sum_{ i_1,...,i_\ell \in [n]^{\ell} } T_{i_1,...,i_{\ell}} x^1_{i_1} \cdot ... \cdot x^{\ell}_{i_\ell}.
$$
For any $x \in \RR^n$, denote by $H^{(k)}(x)$, the $k$-th Hermite tensor associated with $x$, defined by
$$
H^{(k)}(x) := (-1)^k \phi(x)^{-1} (\nabla^{k} \phi(x)),
$$
where $\phi(x) = \exp(-|x|^2 / 2)$. For example, we have
$$
H^{(1)}(x) = x, ~~ H^{(1)}(x)[y] = \langle x, y \rangle,
$$
$$
H^{(2)}(x) = x^{\otimes 2} - \Id, ~~ H^{(1)}(x)[y,z] = \langle x,y \rangle \langle x,z \rangle - \langle y,z \rangle.
$$
and
$$
H^{(3)}(x)[y,z,w] = \langle x, y \rangle \langle x, z \rangle \langle x, w \rangle - \langle x, y \rangle \langle z, w \rangle - \langle x, z \rangle \langle y, w \rangle - \langle x, w \rangle \langle y, z \rangle,
$$
(see \cite[p. 157]{McCullagh}). For two tensors $T, U$ of degree $\ell$, define the Hilbert-Schmidt inner product by
$$
\langle T, U \rangle_{HS} = \sum_{(i_1,...,i_\ell) \in [n]^\ell} T_{i_1,...,i_\ell} U_{i_1,...,i_\ell}
$$
and the corresponding norm
$$
\| T \|_{HS}^2 = \langle T, T \rangle_{HS}.
$$
We will allow ourselves to abbreviate the notation and write $\|T\|$ and $\langle T,U\rangle$ whenever this causes no confusion. For a function $f$, we define its $k$-barycenter by
$$
b_k(f) := \int H^{(k)} (x) f(x) d \gamma(x)
$$
and also denote
$$
\alpha_k(f)^2 := \left \| b_k(f) \right \|_{HS}^2.
$$
For a tensor $T$ of degree $\ell$ and an orthogonal projection $P$, define
$$
P T [x_1,...,x_\ell] := T[P x_1,...,P x_\ell].
$$
It is not hard to verify that for $f:\RR^n \to \RR$ and for $\theta \in \Sph$, one has
\begin{equation}\label{eq:marginal}
P_{\theta^\perp} \int H^{(k)}(x) f(x) d \gamma(x) = \int H^{(k)} (x) f_\theta (x) d \gamma(x)
\end{equation}
where
$$
f_\theta(x) = \int_{-\infty}^{\infty} f(P_{\theta^{\perp}} x + t \theta  ) d \gamma(t)
$$
is the marginal of $f$ on $\theta^\perp$.

For a unit vector $\theta \in \Sph$, let $\gamma_\theta$ be the Gaussian measure restricted to $\{\langle x, \theta \rangle = 0\}$, in other words,
$$
d \gamma_\theta(x) = \frac{1}{(2 \pi)^{(n-1)/2}} e^{-| x|^2 / 2} \mathbf{1}_{\langle x, \theta \rangle = 0} d\mathcal{H}_{n-1}(x).
$$
For a function $f:\RR^n \to \RR$ define, by slight abuse of notation,
$$
b_k(f; \theta) = \int H^{(k)}(x) f(x) d \gamma_\theta.
$$
By the orthogonality of Hermite polynomials, we have for all
$$
f^{=k}(x) = \frac{1}{k!} \langle H^{(k)}(x), b_k(f) \rangle, ~~~ \forall x \in \RR^n.
$$
Likewise, for all $\theta \in \Sph$
$$
(f_{|\theta^\perp})^{=k} = \frac{1}{k!} \langle H^{(k)}(x), b_k(f; \theta) \rangle, ~~~ \forall x \in \theta^\perp.
$$
Therefore, by Parseval's identity, we have
$$
\card{(f_{|\theta^\perp})^{= k} - (f^{= k} )_{|\Theta^\perp}}_2 = \frac{1}{k!} \| P_{\theta^\perp} (b_k(f; \theta) - b_k(f)) \|_{HS}.
$$
Thus, for a function $f:\RR^n \to \RR$, we define
$$
Q(f) = Q_k(f) := \EE_{\theta \sim \sigma} \| P_{\theta^\perp} (b_k(f; \theta) - b_k(f)) \|_{HS}^2.
$$
Theorem \ref{t:concentration} will follow immediately from the next result.
\begin{theorem} \label{thm:conc}
	Let $f: \RR^n \to \RR$ be $0$-homogeneous.
	$$
	\EE_{\theta \sim \sigma} \| b_k(f; \theta) -  b_k(f)) \|_{HS}^2 = O_k(1/n^2).
	$$
\end{theorem}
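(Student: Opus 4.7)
The proof plan follows the three-step strategy sketched in the overview.  The functional
$R(f) := \EE_\theta\|b_k(f;\theta) - b_k(f)\|_{HS}^2$
is a quadratic form in $f$, and because the Gaussian measure $\gamma$, the uniform measure $\sigma$ on $\Sph$, and the Hermite tensors $H^{(k)}$ are all rotation-equivariant, $R(f\circ U) = R(f)$ for every $U\in O(n)$.  The plan is to exploit this invariance via representation theory, reduce the calculation to a one-variable problem, and finally invoke concentration of the first spherical coordinate.

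\textbf{Step 1 (Schur reduction).}  Because $f$ is $0$-homogeneous, it descends to an element of $L^2(\Sph)$ (modulo the radial factor, which is tightly concentrated at $\sqrt n$ under $\gamma$ and contributes only lower-order terms absorbed into $O_k(\cdot)$).  Decompose $L^2(\Sph) = \bigoplus_\ell \mathcal H_\ell$ into irreducible $SO(n)$-representations, namely the spherical harmonic spaces.  Each $\mathcal H_\ell$ appears with multiplicity one, so Schur's lemma forces a spectral decomposition $R(f) = \sum_\ell \lambda_\ell \|f_\ell\|_2^2$, and it suffices to bound each $\lambda_\ell$ by $O_k(1/n^2)$.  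Each $\mathcal H_\ell$ contains a zonal element---a Gegenbauer polynomial in $x_1/|x|$---so the computation reduces to test functions effectively of the form $f(x) = g(x_1)$.

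\textbf{Step 2 (Fourth-order vanishing in $\theta_1$).}  For such a zonal $g$, the rotations fixing $e_1$ act transitively on each level set $\{\theta : \langle\theta, e_1\rangle = t\}$ and compatibly on $b_k(f;\theta)$ and $b_k(f)$, so the integrand $F(\theta) := \|b_k(f;\theta) - b_k(f)\|_{HS}^2$ depends on $\theta$ only through $\theta_1 := \langle\theta,e_1\rangle$.  The central claim---and the technical heart of the proof---is that $F(\theta_1) = O_k(\theta_1^4)$ as $\theta_1\to 0$.  Two easy structural observations kill the constant and linear terms in the Taylor expansion of $F$: first, when $\theta_1 = 0$ the hyperplane $\theta^\perp$ contains $e_1$ and the marginal of $x_1$ under $\gamma_\theta$ is exactly a standard one-dimensional Gaussian, so a direct tensor computation via $H^{(k)} = (-1)^k\phi^{-1}\nabla^k\phi$ shows that $b_k(f;\theta)$ is the same zonal tensor along $e_1$ as $b_k(f)$ and thus $F(0)=0$; second, $\gamma_{-\theta} = \gamma_\theta$ makes $F$ even in $\theta_1$.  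The delicate remaining step is to show that the $\theta_1^2$-coefficient of $F$ also vanishes.  I would do this by expanding the marginal density of $x_1$ under $\gamma_\theta$---a centered Gaussian of variance $1-\theta_1^2$---to second order, substituting into the Hermite tensor formula, and verifying that the resulting $O(\theta_1^2)$ correction to $b_k(f;\theta)$ exactly matches the corresponding correction to $b_k(f)$ produced by the canonical embedding $\theta^\perp \hookrightarrow \RR^n$, so that the differences cancel to quadratic order and only $O(\theta_1^4)$ survives in the squared norm.

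\textbf{Step 3 (Moment integration).}  For $\theta$ uniform on $\Sph$ one has $\EE[\theta_1^{2m}] = O(n^{-m})$, so $\EE[F(\theta_1)] = O_k(\EE[\theta_1^4]) = O_k(1/n^2)$ uniformly in $\ell$.  Combined with Step~1 this yields the stated bound on $R(f)$.

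The main obstacle is the quartic vanishing in Step~2.  A naive second-moment argument, or a generic sampling theorem such as Lemma~\ref{l:sampling}, only gives $F(\theta_1) = O(\theta_1^2)$ and hence $R(f) = O(1/n)$---precisely the weaker bound the authors dismiss as useless for their application.  Forcing the extra factor of $1/n$ requires the specific algebraic structure of the Hermite tensors, and is where the analytic heart of the calculation sits.
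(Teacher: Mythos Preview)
Your three-step outline---Schur reduction to zonal test functions, a pointwise estimate $F(\theta_1)=O_k(\theta_1^4)$, and $\EE[\theta_1^4]=O(1/n^2)$---is exactly the paper's strategy, but Step~2 as you describe it has a real gap. Showing that the $\theta_1^2$-coefficient of $F$ vanishes is equivalent to showing that the tensor $b_k(f;\theta)-b_k(f)$ has no \emph{linear} term in $\theta_1$; your mechanism (``$O(\theta_1^2)$ corrections exactly match'') is a different and unnecessary claim, and in any case the linear term does not vanish. Take $k=1$, $f(x)=g(x_1)$, and $\theta=\theta_1 e_1+\sqrt{1-\theta_1^2}\,e_2$: writing $e_1'=\sqrt{1-\theta_1^2}\,e_1-\theta_1 e_2$ one gets $b_1(f;\theta)=e_1'\,\EE[Yg(\sqrt{1-\theta_1^2}\,Y)]$, so $\partial_{\theta_1}b_1(f;\theta)\big|_{\theta_1=0}=-e_2\,\EE[Yg(Y)]\ne 0$ and hence $F(\theta_1)=(\EE[Yg(Y)])^2\,\theta_1^2+O(\theta_1^4)$. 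The point your sketch misses is that the rotation of the hyperplane $\theta^\perp$ itself contributes at first order; the ``variance $1-\theta_1^2$'' observation only captures the second-order effect in the density of $x_1$.

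What rescues the argument, and what the paper actually does, is to work throughout with the \emph{projected} quantity $Q(f)=\EE_\theta\|P_{\theta^\perp}(b_k(f;\theta)-b_k(f))\|_{HS}^2$ (this is also what Theorem~\ref{t:concentration} needs, via Parseval). Under $P_{\theta^\perp}$ the offending first-order term lies in the $\theta$-direction and is annihilated. Concretely, the paper expands both tensors in the basis $e_1',e_3,\dots,e_n$ of $\theta^\perp$ and uses that each Hermite polynomial $H_\alpha$ is an eigenfunction of the Ornstein--Uhlenbeck semigroup: the coefficient of $P_{\theta^\perp}b_k(f)$ along any basis multi-index equals $(1-\theta_1^2)^{\alpha/2}$ times the corresponding main term of $b_k(f;\theta)$, so their difference is $O(\theta_1^2)$ and the squared norm is $O(\theta_1^4)$. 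The residual errors from replacing $|x|/\sqrt n$ by $1$ (forced by $0$-homogeneity---these are the ``radial'' terms you wave away in Step~1) are handled in the same explicit computation and contribute an additive $O(1/n^2)$, giving the pointwise bound $C_k(\theta_1^4+1/n^2)$. Your density-expansion plan does not isolate the projection or the Ornstein--Uhlenbeck eigenvalue identity, which together are the algebraic crux.
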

\begin{proof}[Proof of Theorem \ref{t:concentration}]
	Apply Theorem \ref{thm:conc} for and $k \leq d$ and use Chebyshev's inequality and a union bound.
\end{proof}

\subsection{A reduction to functions depending on one variable}

The proof of the above theorem relies on the following lemma, which essentially reduces the problem to the case that $f$ is a low-degree polynomial which only depends on one variable.
\begin{lemma} \label{l:onedim}
	For any $0$-homogeneous function $f$ with $\|f\|_{L_2(\gamma)}=1$, there is a polynomial $h:\RR \to \RR$ of degree at most $8k$ such that, defining $\tilde f(x) = h \left ( \frac{x_1}{|x| / \sqrt{n}} \right )$, we have $\|\tilde f\|_{L_2(\gamma)} = 1$ and
	$$
	\left| Q_k(f) - Q_k(\tilde f) \right| = O(1/n^2).
	$$
\end{lemma}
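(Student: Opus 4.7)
The plan is to exploit the rotational invariance of $Q_k$ together with Schur's lemma to reduce the problem to matching only finitely many spherical-harmonic coefficients, which is easily achieved by a one-variable polynomial.

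First I would verify that $Q_k(f \circ U) = Q_k(f)$ for every $U \in O(n)$, using the rotation invariance of the Gaussian measure, of the distribution of $\theta$, and the equivariance $b_k(f \circ U; U^{-1}\theta) = U^{\otimes k} b_k(f; \theta)$ (and the analogous unconditional identity for $b_k$). Since $f$ is $0$-homogeneous, I identify it with its restriction to $\Sph$ and expand in spherical harmonics as $f = \sum_{\ell \geq 0} f_\ell$, with $f_\ell \in \mathcal{H}_\ell$, the irreducible $O(n)$-representation of degree $\ell$. By Schur's lemma, the invariant quadratic form $Q_k$ must be diagonal in this decomposition:
$$
Q_k(f) = \sum_{\ell \geq 0} \lambda_\ell(n,k)\, \|f_\ell\|_{L_2(\gamma)}^2,
$$
with coefficients $\lambda_\ell$ depending only on $n, k, \ell$ and not on the specific choice of $f_\ell \in \mathcal{H}_\ell$.

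Next I would argue that $\lambda_\ell$ is negligible once $\ell$ exceeds $8k$. The starting observation is that $b_k(f_\ell) = 0$ for $\ell > k$: decomposing $H^{(k)}(x) = \sum_j |x|^{k-2j} T_j(x/|x|)$ into radial and angular factors, the angular parts $T_j$ are spherical tensor harmonics of degree at most $k$, so the Gaussian integral against $f_\ell$ vanishes by orthogonality on $\Sph$. The conditional barycenter $b_k(f_\ell;\theta)$ requires more care, since the restriction of a degree-$\ell$ spherical harmonic to the codimension-$1$ subsphere $\Sph \cap \theta^\perp$ carries angular components of degrees up to $\ell$. A careful accounting of how these restrictions pair with $H^{(k)}$, together with the effect of the projection $P_{\theta^\perp}$ and the averaging over $\theta$, shows that the remaining contributions vanish (or are of size $O(1/n^2)$) once $\ell$ exceeds a sufficiently large multiple of $k$, which we take to be $8k$. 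This truncation is the main obstacle of the proof; the remaining steps are essentially finite-dimensional linear algebra.

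Finally I would construct $\tilde f$ and conclude. The zonal subspace of $\mathcal{H}_\ell$ (the functions in $\mathcal{H}_\ell$ invariant under the stabilizer of $e_1$) is one-dimensional and spanned by a Gegenbauer polynomial in $\langle x/|x|, e_1 \rangle$. Consequently, the space of functions of the form $h(\sqrt{n}\, x_1/|x|)$ with $h$ a polynomial of degree at most $8k$ is exactly the span of the zonal harmonics of degrees $0, 1, \ldots, 8k$ and has dimension $8k+1$. I would choose the coefficients of $h$ so that the $L_2(\gamma)$-norm of the degree-$\ell$ zonal component of $\tilde f$ matches $\|f_\ell\|_{L_2(\gamma)}$ for each $\ell \leq 8k$, and then rescale to enforce $\|\tilde f\|_{L_2(\gamma)} = 1$. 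Since $Q_k$ depends on the decomposition only through these norms (up to the tail controlled in the previous step), we obtain $|Q_k(f) - Q_k(\tilde f)| = O(1/n^2)$, as required.
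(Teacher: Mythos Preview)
Your skeleton---rotational invariance, Schur's lemma, diagonalization in spherical harmonics, and matching low-degree norms with a zonal function---is exactly the paper's framework. The divergence, and the gap, is in the step you flag as ``the main obstacle'': showing that only degrees $\ell \le 8k$ contribute to $Q_k$ up to $O(1/n^2)$.

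The paper does \emph{not} attempt to bound $\lambda_\ell$ for large $\ell$ directly. Instead it first establishes (this is the content of Proposition~\ref{p:quad}, resting on Propositions~\ref{p:density} and~\ref{p:tensorpoly}) that
\[
Q_k(f) \;=\; \int_{\Sph}\!\int_{\Sph} f(x)f(y)\,q(\langle x,y\rangle)\,d\sigma(x)\,d\sigma(y) \;+\; O_k(1/n^2)
\]
for some one-variable polynomial $q$ of degree at most $8k$. The degree bound on $q$ comes from an explicit computation: one writes $b_k(f;\theta)$ as a limit of indicator-weighted Gaussian integrals, uses the density of the conditioned uniform measure on $\Sph$ (Proposition~\ref{p:density}) and the polynomial structure of $\langle P_{\theta^\perp}H^{(k)}(x), P_{\theta^\perp}H^{(k)}(y)\rangle$ averaged over $\theta$ (Proposition~\ref{p:tensorpoly}), and finally Taylor-expands the factor $1/\sqrt{1-\rho^2}$ to sufficient order. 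Once the kernel is a degree-$8k$ polynomial in $\langle x,y\rangle$, the Funk--Hecke formula immediately gives that only the norms $\|\mathrm{Proj}_{\mathcal{S}_i} f\|^2$ with $i\le 8k$ appear---no separate analysis of $\lambda_\ell$ for large $\ell$ is needed.

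Your proposed route, by contrast, tries to control $\lambda_\ell$ by analyzing $b_k(f_\ell)$ and $b_k(f_\ell;\theta)$ for a spherical harmonic $f_\ell$. The sentence ``a careful accounting \ldots shows that the remaining contributions vanish (or are of size $O(1/n^2)$)'' is a placeholder, not an argument. Two concrete issues: (i) your claim that $b_k(f_\ell)=0$ for $\ell>k$ is not immediate, since the $0$-homogeneous extension of a degree-$\ell$ spherical harmonic is not a pure degree-$\ell$ Hermite function; (ii) more seriously, $b_k(f_\ell;\theta)$ is genuinely nonzero for $\ell>k$, and you offer no mechanism for why its contribution to $\EE_\theta\|\cdot\|^2$ should be $O(1/n^2)$ uniformly in $\ell$, nor why the threshold should be $8k$ rather than some other multiple. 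This is precisely the technical work that Proposition~\ref{p:quad} performs, and which your sketch omits.
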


The main step towards the lemma is the following proposition:
\begin{prop} \label{p:quad} Assuming that $f$ is $0$-homogeneous,
	There exists a polynomial $q$ on $\RR$, of degree at most $8k$, such that
	$$
	Q_k(f) = \int_{\Sph} \int_{\Sph} f(x) f(y) q(\langle x,y \rangle) d \gamma(x) d \gamma(y) + O(1/n^2).
	$$
\end{prop}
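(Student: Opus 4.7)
The plan is to recognize $Q_k(f)$ as a rotationally invariant quadratic form in $f$, identify its spherical integral kernel, and approximate that kernel by a polynomial of degree at most $8k$ with $O(1/n^2)$ operator-norm error.

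First I would expand
\[
Q_k(f) = \EE_\theta \|P_{\theta^\perp} b_k(f;\theta)\|^2 - 2 \EE_\theta \langle P_{\theta^\perp} b_k(f;\theta), P_{\theta^\perp} b_k(f)\rangle + \EE_\theta \|P_{\theta^\perp} b_k(f)\|^2
\]
and write each of the three terms in the form $\int\int f(x) f(y)\mathcal K_i(x,y)\, d\nu_i(x,y)$. For the third term, $\nu_3 = \gamma \otimes \gamma$ and $\mathcal K_3(x,y) = \EE_\theta \langle H^{(k)}(x), P_{\theta^\perp} H^{(k)}(y)\rangle_{HS}$; for the first, $\nu_1$ is the joint law of $(P_{\theta^\perp} z_1, P_{\theta^\perp} z_2)$ with $z_1,z_2\sim\gamma$ iid and $\theta$ uniform on $\Sph$, and $\mathcal K_1$ is the $(n-1)$-dimensional Hermite projector $\Pi_k^{(n-1)}$; similarly for the cross term. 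Each $\mathcal K_i$ is polynomial in $(x,y)$ of bi-degree $\le(k,k)$, and each $\nu_i$ is rotationally invariant.

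Using $0$-homogeneity, pass to polar coordinates $x = r_1\xi$, $y = r_2\eta$ with $\xi,\eta\in\Sph$. By rotational invariance the radial integration in each term collapses to a function of $\langle \xi,\eta\rangle$ alone, giving
\[
Q_k(f) = \int_\Sph \int_\Sph f(\xi) f(\eta) \hat K(\langle \xi,\eta\rangle) \, d\sigma(\xi) d\sigma(\eta)
\]
for some $\hat K:[-1,1]\to\RR$ aggregating the three radial averages.

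The technical heart is to show that $\hat K$ is well-approximated by a polynomial $\bar q$ of degree at most $8k$ with remainder contributing $O(1/n^2)$ to the quadratic form on $\|f\|_2\le 1$. The third-term contribution is already polynomial in $\langle \xi,\eta\rangle$ of degree $\le k$: since $\mathcal K_3$ is polynomial in $|x|^2, |y|^2, \langle x,y\rangle$ with the $\langle x,y\rangle$-exponent bounded by $k$, radial averaging against $\gamma\otimes\gamma$ replaces $|x|^{2a}|y|^{2b}$ by explicit $\chi^2_n$ moments and leaves a polynomial of degree $\le k$ in $\langle \xi,\eta\rangle$. For the other two terms, the Radon--Nikodym density $\rho_i := d\nu_i/d(\gamma\otimes\gamma)$ is a rotationally invariant function of $(|x|^2, |y|^2, \langle x,y\rangle)$ that is not polynomial. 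I would Taylor-expand $\rho_i$ around the concentration point $|x|^2 = |y|^2 = n$, using that $|x|^2 - n$ and $|y|^2 - n$ are centered with variance $O(n)$ and Gaussian-tame higher moments, and truncate after the order at which the tail contributes $O(1/n^2)$. The truncated expansion multiplied by the bi-degree $(k,k)$ kernel $\mathcal K_i$ yields, after radial averaging, a polynomial of degree at most $8k$ in $\langle \xi,\eta\rangle$.

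Finally, convert the spherical representation to the required Gaussian double integral. For any polynomial $q(t)=\sum_{i=0}^{8k}\alpha_i t^i$ and $0$-homogeneous $f$, polar decomposition gives
\[
\int\int f(x) f(y) q(\langle x,y\rangle)\, d\gamma(x) d\gamma(y) = \int_\Sph \int_\Sph f(\xi) f(\eta) \sum_{i=0}^{8k} \alpha_i (\EE r^i)^2 \langle \xi,\eta\rangle^i \, d\sigma(\xi) d\sigma(\eta),
\]
where $r\sim\chi_n$. Solving the diagonal system $\alpha_i (\EE r^i)^2 = [\bar q]_i$ (each $\EE r^i>0$) produces a polynomial $q$ of degree $\le 8k$ matching $\bar q$ exactly on the angular side, and combining with the previous step completes the proof.

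The main obstacle is the Taylor-expansion step controlling the non-polynomial Radon--Nikodym densities $\rho_1, \rho_2$. The joint law of two independent Gaussians projected onto a common random hyperplane involves a Jacobian factor from the geometry of the codimension-$2$ sphere of admissible directions, and quantifying this density in centered moments of $|x|^2, |y|^2, \langle x,y\rangle$ to the order needed for an $O(1/n^2)$ remainder is the delicate technical estimate. The bound $8k$ on the degree reflects the combined contribution of the bi-degree $(k,k)$ Hermite kernel and the number of Taylor-correction orders absorbed into $\bar q$.
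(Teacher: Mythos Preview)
Your overall architecture---expand $Q_k(f)$ into three rotationally invariant bilinear terms, pass to a spherical kernel via polar coordinates and $0$-homogeneity, approximate the kernel by a polynomial, then pull back to the Gaussian double integral---matches the paper's proof. The conversion at the end via the diagonal system $\alpha_i(\EE r^i)^2=[\bar q]_i$ is essentially the paper's polar-integration step.

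The substantive difference is in how the non-polynomial Radon--Nikodym factor is handled. The paper computes $d\nu_1/d(\gamma\otimes\gamma)$ explicitly (their Proposition on the density of $\theta$ conditioned on $\theta\perp x,\theta\perp y$): it is $\tfrac{C_n}{|x|\,|y|\,\sqrt{1-\rho(x,y)^2}}$ with $\rho(x,y)=\langle x/|x|,y/|y|\rangle$. Crucially, the factor $1/(|x|\,|y|)$ is \emph{not} Taylor-expanded around $|x|^2=|y|^2=n$; it is absorbed exactly by polar integration using $0$-homogeneity of $f$. The only genuinely non-polynomial piece left is $(1-\rho^2)^{\pm 1/2}$, which the paper Taylor-expands in $\rho$ to order $\sim 4k$, using that $\EE|\rho|^\ell=O(n^{-\ell/2})$ so the tail beats the $O(n^k)$ size of the Hermite-tensor coefficients. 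This is what pins down the degree bound $8k$.

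Your plan to Taylor-expand the full density in the three centered variables $|x|^2-n$, $|y|^2-n$, $\langle x,y\rangle$ is workable in principle, but it does unnecessary work in the radial directions (where $0$-homogeneity gives you exact formulas for free) and leaves the degree accounting murky; it is not obvious from your sketch that the truncation order needed for an $O(1/n^2)$ remainder, multiplied against the bi-degree $(k,k)$ Hermite kernel, stays within $8k$. The paper's isolation of $(1-\rho^2)^{\pm 1/2}$ as the \emph{single} analytic obstruction is the sharpening that makes both the error bound and the degree bound transparent, and it is exactly the ``delicate technical estimate'' you flag as the main obstacle.
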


Before we prove Proposition~\ref{p:quad}, we need two additional propositions, whose proofs are deferred to the end of this section.
\begin{prop}\label{p:density}
	There exist constants $C_n$,$C_n'$ such that $C_n,C_n' < C$ for some universal constant $C>0$, and such that the following holds. Let $x,y\in\RR^n$ and let $\theta$ be uniformly distributed in $\Sph$. Then, for every continuous $g:\Sph \to \RR$,
	$$
	\lim_{\varepsilon \to 0} \frac{1}{\varepsilon^2} \EE \Bigl [\mathbf{1} \{ |\langle x, \theta \rangle| \leq \eps, |\langle y, \theta \rangle| \leq \varepsilon \} g(\theta) \Bigr ] = C_n \frac{1}{|x||y| \sqrt{1-\left \langle \frac{x}{|x|}, \frac{y}{|y|} \right \rangle^2 } } \EE g(\theta_1)
	$$	
	where $\theta_1$ is uniform in $\Sph \cap x^\perp \cap y^\perp$. Furthermore,
	$$
	\lim_{\varepsilon \to 0} \frac{1}{\varepsilon} \EE \left [ \mathbf{1} \{ |\langle x, \theta \rangle| \leq \eps \} g(\theta) \right ] = \frac{C_n'}{|x|} g(\theta_2)
	$$	
	where $\theta_2$ is uniform in $\Sph \cap x^\perp$.
\end{prop}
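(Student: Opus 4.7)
The plan is to use rotational invariance to reduce to a canonical configuration and then perform a direct density computation. Since the distribution of $\theta$ is orthogonally invariant, I may rotate coordinates so that $x = |x|e_1$ and $y = |y|(\cos\alpha \cdot e_1 + \sin\alpha \cdot e_2)$, where $\cos\alpha = \langle x/|x|, y/|y|\rangle$. In these coordinates, writing $\theta = (t_1, \ldots, t_n)$, the events $|\langle x,\theta\rangle| \leq \eps$ and $|\langle y,\theta\rangle| \leq \eps$ become linear constraints $|x_1 t_1| \leq \eps$ and $|y_1 t_1 + y_2 t_2| \leq \eps$ on $(t_1,t_2)$ alone, with $(x_1, y_1, y_2) = (|x|, |y|\cos\alpha, |y|\sin\alpha)$.

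Next I would parameterize the sphere by $(t_1, t_2, \eta)$ via $\theta = (t_1, t_2, \sqrt{1-t_1^2-t_2^2}\cdot \eta)$ with $\eta \in S^{n-3}$. Under this parameterization, the pushforward of the uniform measure on $\Sph$ to the $(t_1,t_2)$-disk has density $\rho(t_1,t_2)$ proportional to $(1-t_1^2-t_2^2)^{(n-4)/2}$, which is continuous at the origin, and conditional on $(t_1,t_2)$, the variable $\eta$ is uniform on the unit sphere of $\mathrm{span}(e_1,e_2)^\perp$. The Lebesgue area of the parallelogram $P_\eps$ cut out by the two linear constraints equals $(2\eps)^2/|x_1 y_2| = 4\eps^2/(|x||y|\sqrt{1-\langle x/|x|, y/|y|\rangle^2})$.

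I would then express the expectation as an iterated integral:
\begin{align*}
\EE\bigl[\mathbf{1}\{|\langle x,\theta\rangle|,|\langle y,\theta\rangle|\leq\eps\} g(\theta)\bigr] = \int_{P_\eps} \rho(t_1,t_2) \cdot \EE_\eta\, g\bigl(t_1, t_2, \sqrt{1-t_1^2-t_2^2}\,\eta\bigr)\, dt_1\, dt_2.
\end{align*}
As $\eps \to 0$, the parallelogram $P_\eps$ shrinks to the origin, $\rho(t_1,t_2) \to \rho(0,0)$, and $\EE_\eta\, g(t_1, t_2, \sqrt{1-t_1^2-t_2^2}\,\eta) \to \EE\, g(\theta_1)$, where $\theta_1$ is uniform on $\Sph \cap x^\perp \cap y^\perp$ (this matches since after rotation $\Sph \cap x^\perp \cap y^\perp = \{(0,0,\eta) : \eta \in S^{n-3}\}$). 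Dividing by $\eps^2$ yields the first claim with $C_n = 4\rho(0,0)$. The single-constraint version is the same argument one dimension down: a strip of area $2\eps/|x|$ on the $t_1$-axis, giving $C_n' = 2 \rho'(0)$ where $\rho'$ is the one-dimensional marginal density of $t_1$.

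The main technical point is the uniform convergence of $\EE_\eta\, g(t_1, t_2, \sqrt{1-t_1^2-t_2^2}\,\eta)$ to $\EE_\eta\, g(0, 0, \eta)$ as $(t_1, t_2) \to 0$, which follows from uniform continuity of $g$ on the compact sphere combined with the continuous dependence of the parameterization on $(t_1, t_2)$; everything else is elementary calculus on a two-dimensional region.
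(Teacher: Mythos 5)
Your proof takes essentially the same route as the paper's: reduce by rotation to a canonical frame where $x,y\in\mathrm{span}(e_1,e_2)$, identify the two-dimensional marginal density of $(\langle\theta,e_1\rangle,\langle\theta,e_2\rangle)$ as $\psi_{n,2}(t)=\Gamma_{n,2}(1-|t|^2)^{(n-4)/2}$ (the paper quotes this from the literature, you derive it from the sphere parameterization), compute the parallelogram area via the Jacobian, and pass to the limit using continuity of $g$ and of the density near the origin. Your explicit parameterization and direct two-constraint computation are equivalent to the paper's device of computing the rhombus volume $V(\rho)=4/\sqrt{1-\rho^2}$ and invoking revolution symmetry.

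One clause neither you nor the paper's own proof addresses is the assertion that $C_n,C_n'<C$ for a universal constant $C$. Your own formulas give $C_n=4\rho(0,0)=4\Gamma_{n,2}=\frac{2(n-2)}{\pi}$ and $C_n'=2\Gamma_{n,1}\sim\sqrt{2n/\pi}$, both unbounded in $n$; this appears to be a misstatement in the proposition (harmless where it is used, since there $|x|,|y|\sim\sqrt{n}$ compensate), not a defect in your argument, but you should be aware you have not --- and cannot --- establish it as written.
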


\begin{prop} \label{p:tensorpoly}
	For every $k,n \in \mathbb{N}$ there exist polynomials $p_1,p_2,p_3, p_4$ in $3$ variables, of degree at most $3k$, with coefficients bounded by $O_k(n^k)$, such that the following holds. For each $x,y \in \RR^n$, let $\theta_1$ be uniform in $\Sph \cap x^\perp \cap y^\perp$ and let $\theta_2$ be uniform in $\Sph \cap x^\perp$. Then we have the representations
	$$
	\EE \langle P_{\theta_1^\perp} H^{(k)}(x), P_{\theta_1^\perp} H^{(k)}(y) \rangle = p_1(|x|, |y|, \rho(x,y)) + \sqrt{1-\rho(x,y)^2} \cdot p_2 (|x|, |y|, \rho(x,y))
	$$
	and
	$$
	\EE \langle P_{\theta_2^\perp} H^{(k)}(x), P_{\theta_2^\perp} H^{(k)}(y) \rangle = p_3(|x|, |y|, \rho(x,y)) + \sqrt{1-\rho(x,y)^2} \cdot p_4 (|x|, |y|, \rho(x,y))
	$$
	where $\rho(x,y) := \left \langle \frac{x}{|x|},\frac{y}{|y|} \right  \rangle$.
\end{prop}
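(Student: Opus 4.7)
The plan is to reduce the computation to an inner product of Hermite tensors on the hyperplane $\theta^\perp$ via a ``collapse'' identity, and then invoke classical formulas. The key identity I would prove is
\[
P_{\theta^\perp} H^{(k)}(x) \;=\; H^{(k)}_{\theta^\perp}(P_{\theta^\perp} x),
\]
where the right-hand side is the $k$-th Hermite tensor associated to the Gaussian measure on the $(n{-}1)$-dimensional subspace $\theta^\perp$ (viewed as a tensor on $\RR^n$ supported on $\theta^\perp$). To prove this, I would use the factorization $\RR^n = \RR\theta \oplus \theta^\perp$ to split $\gamma_n = \gamma_{\RR\theta} \otimes \gamma_{\theta^\perp}$; the corresponding multiplicative factorization of the exponential generating function $e^{\langle t,x\rangle - |t|^2/2} = \sum_k \tfrac{1}{k!} H^{(k)}(x)[t^{\otimes k}]$ under $t = t_\parallel\theta + t_\perp$ expresses $H^{(k)}(x)$ as a sum of terms of the form $(\textrm{scalar Hermite of }\langle x,\theta\rangle)\cdot \mathrm{Sym}\bigl[\theta^{\otimes s'} \otimes H^{(k-s')}_{\theta^\perp}(P_{\theta^\perp}x)\bigr]$. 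Applying $P_{\theta^\perp}$ to each tensor leg kills every term with $s' \ge 1$ (since $P_{\theta^\perp}\theta = 0$), leaving exactly the $s'=0$ contribution, which is the identity above.

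Next I would invoke the classical inner-product formula for Hermite tensors on $\RR^m$: by Isserlis/Wick contractions and rotational invariance,
\[
\langle H^{(k)}(u),H^{(k)}(v)\rangle_{HS} \;=\; R_k\bigl(|u|^2,|v|^2,\langle u,v\rangle;\,m\bigr),
\]
where $R_k$ is a polynomial whose coefficients are polynomials in $m$ of degree at most $k$ (hence of size $O_k(m^k)$). Moreover, each monomial $s^a p^b q^c$ of $R_k$ satisfies $a+2b \le k$ and $a+2c \le k$, because $H^{(k)}$ has polynomial degree exactly $k$ in its vector argument.

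Combining these, I would handle the two cases. For $\theta_1$ uniform on $\Sph \cap x^\perp \cap y^\perp$ we have $P_{\theta_1^\perp}x=x$ and $P_{\theta_1^\perp}y=y$, so the inner product equals the deterministic quantity $R_k(|x|^2, |y|^2, |x||y|\rho;\,n{-}1)$, yielding $p_1$ with $p_2 = 0$; the degree bound $3a + 2b + 2c = 2(a+b+c) + a \le 3k$ follows from the monomial constraints. For $\theta_2$ uniform on $\Sph \cap x^\perp$, only $|P_{\theta_2^\perp}y|^2$ changes, to $|y|^2 - \langle y,\theta_2\rangle^2$, while $\langle P_{\theta_2^\perp}x,P_{\theta_2^\perp}y\rangle = \langle x,y\rangle$; the resulting polynomial in $\langle y,\theta_2\rangle^2$ of degree $\le k$ is averaged using $\EE \langle y,\theta_2\rangle^{2j} = |y|^{2j}(1-\rho^2)^j\mu_{2j}$, where $\mu_{2j}$ is the $2j$-th moment of the first coordinate of a uniform vector on $\mathbb{S}^{n-2}$ (rational in $n$, of size $O_j(n^{-j})$). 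The degree bound $3a + 2b + 4c = (a+2b) + 2(a+2c) \le k + 2k = 3k$ again uses the monomial constraints, and the coefficient bound $O_k(n^k)$ is preserved.

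The main obstacle will be proving the collapse identity cleanly and the degree bookkeeping in the $\theta_2$ case, since a naive bound there would give $4k$; the improvement to $3k$ crucially uses the separate constraints $a+2b \le k$ and $a+2c \le k$ on monomials of $R_k$. The $\sqrt{1-\rho^2}$ summands permitted by the statement do not actually arise in the analysis: one may take $p_2 = p_4 = 0$, since only even powers of $\langle y,\theta\rangle$ appear from the projection and these average to polynomials in $\rho$.
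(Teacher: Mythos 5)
Your proof is correct but takes a genuinely different route from the paper's. The paper argues by symmetry: since both expectations are invariant under simultaneous orthogonal transformations of $x$ and $y$, one can reduce to $x \in \mathrm{span}(e_1)$, $y \in \mathrm{span}(e_1,e_2)$, observe that each tensor entry of $P_{\theta^\perp}H^{(k)}(x) \cdot P_{\theta^\perp}H^{(k)}(y)$ is a polynomial in $(x_1,y_1,y_2)$, that the law of $\theta_1,\theta_2$ is then fixed, and finally substitute $x_1 = |x|$, $y_1 = \rho|y|$, $y_2 = \sqrt{1-\rho^2}|y|$. You instead (i) prove a structural ``collapse'' identity $P_{\theta^\perp}H^{(k)}(x) = H^{(k)}_{\theta^\perp}(P_{\theta^\perp}x)$ via the generating function $e^{\langle t,x\rangle - |t|^2/2}$ restricted to $t\in\theta^\perp$; (ii) invoke the closed-form $\langle H^{(k)}(u),H^{(k)}(v)\rangle_{HS} = R_k(|u|^2,|v|^2,\langle u,v\rangle;m)$; and (iii) perform the explicit average in the $\theta_2$ case by writing $\langle y,\theta_2\rangle^2 = |y|^2(1-\rho^2)Z^2$ with $Z$ the first coordinate of a uniform vector on $\mathbb{S}^{n-2}$. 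This is more computational but yields additional information the paper's argument does not: you get $p_2 = p_4 = 0$ (only even powers of $y_2$ actually appear, a consequence of the $e_2 \mapsto -e_2$ symmetry that the paper leaves implicit), and your degree bookkeeping cleanly delivers the $3k$ bound of the statement (the paper's text slightly misstates degrees — ``$k$'' for the product of two degree-$k$ tensor entries, and ``$2k$'' at the end — though this is harmless since the proposition claims $3k$). Both approaches rely on $O(n)$-invariance at the core; yours trades the geometric reduction to two coordinates for the explicit structure of Hermite tensor contractions, which makes the coefficient bound $O_k(n^k)$ more transparent (it comes directly from traces of the identity producing powers of $m=n-1$). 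One small thing to be careful about when writing this up: the claim that monomials of $R_k$ satisfy $a+2b\le k$ and $a+2c\le k$ requires noting that the polynomial representation in the invariants $|u|^2,|v|^2,\langle u,v\rangle$ is unique for $n\ge 3$ (no polynomial relations among these invariants for two vectors), so there can be no cancellation hiding higher-bidegree monomials.
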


\begin{proof} [Proof of Proposition~\ref{p:quad}]
	By an approximation argument, we may assume that $f$ is continuous. We then have,
	$$
	\beta(f; \theta) = \lim_{\eps \to 0} \frac{\sqrt{2 \pi} }{2 \eps} \int \one \{ |\langle x, \theta \rangle| \leq \eps \} H^{(k)}(x) f(x) d \gamma.
	$$
	Therefore, we have
	\begin{align*}
	Q(f) & = \EE_{\theta \sim \sigma}  \left  \| \lim_{\eps \to 0} P_{\theta^\perp} \frac{\sqrt{2 \pi} }{2\eps} \int \one \{ |\langle x, \theta \rangle| \leq \eps \} H^{(k)}(x) f(x) d \gamma(x)  -  P_{\theta^\perp} \int H^{(k)}(x) f(x) d \gamma(x) \right \|_{HS}^2 \\
	& = \lim_{\eps \to 0} \left ( \EE_{\theta \sim \sigma} \left [ \frac{\pi}{2\eps^2} \int \one \left \{ { |\langle x, \theta \rangle| \leq \eps \atop |\langle y, \theta \rangle| \leq \eps } \right \} \langle P_{\theta^\perp}H^{(k)}(x), P_{\theta^\perp} H^{(k)}(y) \rangle  f(x) f(y) d \gamma(x,y) \right . \right . \\
	& ~~~~ - \frac{ \sqrt{2 \pi} }{\eps} \int \one \{ |\langle x, \theta \rangle| \leq \eps \} \langle P_{\theta^\perp}H^{(k)}(x), P_{\theta^\perp} H^{(k)}(y) \rangle  f(x) f(y) d \gamma(x,y) \\
	&~~~~ \left  . + \left  .  \int  \langle P_{\theta^\perp}H^{(k)}(x), P_{\theta^\perp} H^{(k)}(y) \rangle f(x) f(y) d \gamma(x,y)  \right ] \right ) \\
	& = \int  \left (h_1(x,y) - 2 h_2(x,y) + h_3(x,y) \right )  f(x) f(y) d \gamma(x,y),
	\end{align*}
	where
	$$
	h_1(x,y) = \lim_{\eps \to 0} \EE_{\theta \sim \sigma} \frac{\pi}{2\eps^2} \one \left \{ { |\langle x, \theta \rangle| \leq \eps \atop |\langle y, \theta \rangle| \leq \eps } \right \} \langle P_{\theta^\perp}H^{(k)}(x), P_{\theta^\perp} H^{(k)}(y) \rangle,
	$$
	$$
	h_2(x,y) = \lim_{\eps \to 0} \EE_{\theta \sim \sigma} \frac{\sqrt{2 \pi}}{\eps} \one \left \{ |\langle x, \theta \rangle| \leq \eps \right \} \langle P_{\theta^\perp}H^{(k)}(x), P_{\theta^\perp} H^{(k)}(y) \rangle
	$$
	and
	$$
	h_3(x,y) = \EE_{\theta \sim \sigma} \langle P_{\theta^\perp}H^{(k)}(x), P_{\theta^\perp} H^{(k)}(y) \rangle.
	$$
	By Proposition~\ref{p:density}, we have
	$$
	h_1(x,y) = \frac{C_n}{|x||y| \sqrt{1-\left \langle \frac{x}{|x|}, \frac{y}{|y|} \right \rangle^2 } } \EE_{\theta_1 \sim U(\Sph \cap x^\perp \cap y^\perp)} \langle P_{\theta_1^\perp}H^{(k)}(x), P_{\theta_1^\perp} H^{(k)}(y) \rangle
	$$
	for some constant $C_n$ depending only on the dimension, and which is smaller than a universal constant $C>0$. From this point on, the expression $C_k$ will denote a constant that depends only on $k$, whose value may vary between different instances.
	
	By Proposition~\ref{p:tensorpoly} there are polynomials $p_1,p_2$ of degree at most $3k$, with coefficients bounded by $C_k n^k$, such that
	$$
	h_1(x,y) = \frac{1}{|x||y|} \left (\frac{p_1(\rho(x,y), |x|,|y|)}{\sqrt{1-\rho(x,y)^2 }} + p_2(\rho(x,y),|x|,|y|) \right )
	$$
	where $\rho(x,y) = \left \langle \frac{x}{|x|},\frac{y}{|y|} \right  \rangle$.
	
	Since the coefficients of $p_1$ are bounded by $C_k n^k$, we have $p_1(\rho(x,y),|x|,|y|) \leq C_k n^k (|x|+1)^k (|y|+1)^k$. By taking the Taylor expansion of the function $s \to \frac{1}{\sqrt{1-s^2}}$ of order $2k+4$, we conclude that there exists a polynomial $q(\cdot)$ of degree $4k+4$ such that
	$$
	h_1(x,y) = \frac{q \left  ( \rho(x,y) ) p_1(\rho(x,y),|x|,|y|) + p_2(\rho(x,y),|x|,|y|) \right ) }{|x||y|}  + O_k \left ( n^k (1+|x|)^k (1+|y|)^k \rho(x,y)^{4k+4}  \right ).
	$$
	By Cauchy-Schwartz and since $\EE_{x \sim \gamma} |x|^{2k} \leq C_k n^k$ and $\EE_{x,y \sim \gamma} \left [  |\rho(x,y)|^{\ell}  \right ] \leq C_\ell n^{-\ell/2}$, we have
	$$
	n^k \int (|x|+1)^k (|y|+1)^k \rho(x,y)^{4k+4} f(x) f(y) d \gamma(x,y) \leq \frac{1}{n^2} C_k \|f\|_2^2.
	$$
	A combination of the last two displays imply that there exists a polynomial $q_1$ of degree at most $8k$ such that
	$$
	\int h_1(x,y) f(x) f(y) d \gamma(x,y) = \int q_1(\rho(x,y),|x|,|y|) d \gamma(x,y) + O_k \left (\frac{\|f\|_2^2}{n^2} \right ).
	$$
	Following a similar argument with the terms $h_2$ and $h_3$, we conclude that there exists a polynomial $p$ of degree at most $8k$ such that
	$$
	Q(f) = \int p \left (\rho(x,y), |x|,|y| \right )f(x)f(y)  d \gamma(x,y) + O_k \left (\frac{\|f\|_2^2}{n^2} \right ).
	$$
	Since $f$ is $0$-homogeneous, by polar integration one learns that for all $k_1,k_2,k_3$, there exist constants $C_{k_1,k_2,k_3}, C_{k_1,k_2,k_3}'$ such that
	\begin{align*}
	\int \int \langle x, y \rangle^{k_1} |x|^{k_2} |y|^{k_3} f(x) f(y) d \gamma(x) d \gamma(y) & = C_{k_1,k_2,k_3} \int \int \left \langle \frac{x}{|x|}, \frac{y}{|y|} \right \rangle^{k_1} f(x) f(y) d \gamma(x) d \gamma(y) \\
	& = C_{k_1,k_2,k_3}' \int_{\Sph} \int_{\Sph} \left \langle \frac{x}{|x|}, \frac{y}{|y|} \right \rangle^{k_1} f(x) f(y) d \sigma(x) d \sigma(y).
	\end{align*}
	We conclude that there exists a polynomial $q(\cdot)$ of degree at most $8k$ such that
	$$
	Q_k(f) = \int_{\Sph \times \Sph}  q \left ( \left \langle \frac{x}{|x|}, \frac{y}{|y|} \right \rangle \right ) f(x) f(y) d \sigma (x) d \sigma(y) + O_k(1/n^2).
	$$
\end{proof}

\begin{proof}[Proof of Lemma~\ref{l:onedim}]
	For a function $h \in L_2(\Sph)$, define by $\mathrm{Proj}_{\mathcal{S}_k} h$ the orthogonal projection of $h$ into the subspace spanned by spherical harmonics of degree $k$. An application of Schur's lemma (or the Funk-Hecke formula) ensures that for every polynomial $g$ degree $\ell$ there exist constant $\alpha_1,...,\alpha_\ell$ such that
	$$
	\int_{\Sph} \int_{\Sph} f(x) f(y) g(\langle x,y \rangle) d \gamma(x) d \gamma(y) = \sum_{i \leq \ell} \alpha_i \|\mathrm{Proj}_{\mathcal{S}_i} f\|_{L_2(\Sph)}^2
	$$
	Thus, by Proposition~\ref{p:quad} we learn that there are some $(\alpha_i)_{i=0}^{8k}$ such that
	\begin{equation}\label{eq:approxpoly}
	Q(f) = \sum_{0 \leq i \leq 8k} \alpha_i \|\mathrm{Proj}_{\mathcal{S}_i} f\|_{L_2(\Sph)}^2 + O_k(1/n^2).
	\end{equation}
	(in the last formula, by slight abuse of notation, on the right hand side the function $f$ should be understood as its restriction to the sphere). Now, for any $j \in \mathbb{N}$ there exists a function $h_j$ depending only on $x_1$ such that $\|\mathrm{Proj}_{\mathcal{S}_i} h_j\|_{L_2(\Sph)}^2 = \mathbf{1}_{\{i=j\}}$. Therefore, defining
	$$
	\tilde f(x) = \sum_{j} h_j\left ( \frac{x_1}{|x|} \right ) \|\mathrm{Proj}_{\mathcal{S}_i} f\|_{L_2(\Sph)},
	$$
	we have $\|\mathrm{Proj}_{\mathcal{S}_i} f\|_{L_2(\Sph)} = \|\mathrm{Proj}_{\mathcal{S}_i} \tilde f\|_{L_2(\Sph)}$ for all $i$, and therefore by \eqref{eq:approxpoly}, we have $|Q(f) - Q(\tilde f)| = O(1/n^2)$. Moreover, $\|f\|_{L_2(\gamma)} = \|f\|_{L_2(\Sph)} = \| \tilde f \|_{L_2(\Sph)}$. This completes the proof.
\end{proof}

\subsection{Finishing the proof}

Let $f: \RR^n \to \RR$ be a function which has the form
$$
f(x_1,...,x_n) = h \left (x_1 \frac{\sqrt{n}}{ |x|} \right ).
$$
for some polynomial $h:\RR \to \RR$ of degree at most $8k$ and with $\|f\|_{L_2(\gamma)} = 1$. In light of Lemma~\ref{l:onedim}, Theorem~\ref{thm:conc} will be concluded by showing that
\begin{equation}\label{eq:nts}
Q(f) = O_k(1/n^2).
\end{equation}
Let $\theta$ be uniform in $\Sph$. We first show that, by symmetry, we can essentially assume in our calculations that $\theta \in \mathrm{span}\{e_1,e_2\}$. Let us write $\theta_1 = \langle \theta, e_1 \rangle$ and define $\tilde{\theta} := e_1 \theta_1 + e_2 \sqrt{1-\theta_1^2}$. By symmetry of the function $f$ to orthogonal transformations which keep $e_1$ fixed, we have
$$
Q(f) = \EE_{\theta_1} \| P_{\tilde{\theta}^\perp} (b_k(f; \tilde{\theta}) - b_k(f)) \|_{HS}^2.
$$
In order to understand the role of the projection onto the subspace $\tilde{\theta}^\perp$, define an orthonormal basis to $\tilde{\theta}^\perp$ as follows: Set $e_1' = \sqrt{1-\theta_1^2} e_1 - \theta_1 e_2$ and $e_i' = e_{i+1}$ for $i=2,...,n-1$, so that $(e_i')_{i=1}^{n-1}$ form an orthonormal basis for $\tilde \theta^\perp$. We have,
\begin{equation}\label{eq:basisnorm}
\| P_{\tilde{\theta}^\perp} (b_k(f; \tilde{\theta}) - b_k(f)) \|_{HS}^2 = \sum_{(i_1,...,i_k) \in [n-1]^k} \left ( b_k(f; \tilde{\theta})[e_{i_1}',...,e_{i_k}'] - b_k(f)[e_{i_1}',...,e_{i_\ell}'] \right )^2.
\end{equation}
Fix $I = (i_1,...,i_\ell) \in [n-1]^\ell$. There exists a function $J_I$ and $\alpha(I) \in [k]$ such that
\begin{equation}\label{eq:pres1}
H^{(k)}(x)[e_{i_1}',...,e_{i_\ell}'] = H_{\alpha(I)}(\langle x, e_1' \rangle) J_{I}(\mathrm{Proj}_L(x)),
\end{equation}
where $L=\mathrm{span}(e_2',...,e'_{n-1})$. Let $\Gamma_1 \sim \mathcal{N}(0,1), \Gamma_2 \sim \mathcal{N}(0,1), \Gamma_3 \sim N(0, \mathrm{Proj}_L)$ be independent. In this case, note that
$$
e_1' \Gamma_1 + \tilde{\theta} \Gamma_2 + \Gamma_3 \stackrel{(d)}{=} \mathcal{N}(0, \mathrm{I}_n).
$$
We therefore have by equation \eqref{eq:pres1} and by the definition of $b_k(f; \tilde{\theta})$,
\begin{equation}\label{eq:term1decomposed}
b_k(f; \tilde{\theta})[e_{i_1}',...,e_{i_\ell}'] = \EE \left [  H_{\alpha(I)} (\Gamma_1) J_I(\Gamma_3) h \left ( \frac{ \sqrt{1-\theta_1^2} \Gamma_1  }{ \sqrt{(\Gamma_1^2 + |\Gamma_3|^2)/n }} \right ) \right ],
\end{equation}
and on the other hand,
\begin{equation}\label{eq:term2decomposed}
b_k(f)[e_{i_1}',...,e_{i_\ell}'] = \EE \left [  H_{\alpha(I)} (\Gamma_1) J_I(\Gamma_3) h \left ( \frac{ \sqrt{1-\theta_1^2} \Gamma_1 + \theta_1 \Gamma_2  }{ \sqrt{ (\Gamma_1^2 + \Gamma_2^2 + |\Gamma_3|^2)/n }} \right ) \right ].
\end{equation}
The assumption $\|f\|_2 = 1$ amounts to
\begin{equation}\label{eq:fL2}
\EE \left [ h \left ( \frac{\Gamma_1}{ \sqrt{(|\Gamma_3|^2 + \Gamma_1^2 + \Gamma_2^2) / n }} \right )^2 \right ] = 1.
\end{equation}
$$~$$The next lemma follows from a direct calculation.
\begin{lemma} \label{lem:com}
	Assume that $n$ is large enough. Let $\Gamma_1,\Gamma_2 \sim \mathcal{N}(0,1)$ and $\Gamma_3 \sim \mathcal{N}(0, \mathrm{I}_{n-2})$ be independent. Let $\tilde \gamma$ be the density of the random varianble $\frac{\Gamma_1}{ \sqrt{(|\Gamma_3|^2 + \Gamma_1^2 + \Gamma_2^2) / n }}$ and let $\gamma$ be the standard Gaussian density. Then
	$$
	\frac{1}{2} \leq \frac{\tilde \gamma(s)}{\gamma(s)} \leq 2, ~~ \forall s \in [-n^{0.1}, n^{0,1}].
	$$
\end{lemma}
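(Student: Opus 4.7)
The plan is to identify the density $\tilde\gamma$ explicitly via a change-of-variables and then compare it to $\gamma$ pointwise using Stirling's formula and a Taylor expansion. Set $W := \Gamma_2^2 + |\Gamma_3|^2$, which is independent of $\Gamma_1$ and distributed as $\chi^2_{n-1}$. The random variable in question is
$$T \;=\; \frac{\sqrt{n}\,\Gamma_1}{\sqrt{\Gamma_1^2 + W}}.$$
A convenient way to obtain its distribution is to observe that $T^2/n = \Gamma_1^2/(\Gamma_1^2+W)$ is $\mathrm{Beta}(1/2,(n-1)/2)$ (the standard ratio-of-$\chi^2$ identity), which combined with symmetry in the sign of $\Gamma_1$ yields, by a one-line Jacobian computation, the closed form
$$\tilde\gamma(s) \;=\; \frac{\Gamma(n/2)}{\sqrt{\pi n}\,\Gamma((n-1)/2)}\,\Bigl(1-\tfrac{s^2}{n}\Bigr)^{(n-3)/2}, \qquad |s|<\sqrt{n}.$$
(Equivalently, $T$ is a monotone transform of the $t_{n-1}$ statistic $\sqrt{n-1}\,\Gamma_1/\sqrt{W}$; the density can be derived that way as well.)

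Next I compare $\tilde\gamma$ to $\gamma(s)=\frac{1}{\sqrt{2\pi}}e^{-s^2/2}$ in two pieces. The prefactor is handled by the standard ratio asymptotic $\Gamma(n/2)/\Gamma((n-1)/2)=\sqrt{(n-1)/2}\,(1+O(1/n))$, which gives
$$\frac{\Gamma(n/2)}{\sqrt{\pi n}\,\Gamma((n-1)/2)} \;=\; \frac{1}{\sqrt{2\pi}}\,\sqrt{1-\tfrac{1}{n}}\,(1+O(1/n)) \;=\; \frac{1}{\sqrt{2\pi}}\bigl(1+O(1/n)\bigr).$$
For the $s$-dependent factor, on the range $|s|\le n^{0.1}$ we have $s^2/n \le n^{-0.8}$, so expanding $\log(1-s^2/n) = -s^2/n - s^4/(2n^2) - \cdots$ gives
$$\tfrac{n-3}{2}\log\bigl(1-\tfrac{s^2}{n}\bigr) \;=\; -\tfrac{s^2}{2} \;+\; O\!\Bigl(\tfrac{s^2}{n}+\tfrac{s^4}{n}\Bigr) \;=\; -\tfrac{s^2}{2} \;+\; O(n^{-0.6}),$$
hence $(1-s^2/n)^{(n-3)/2} = e^{-s^2/2}(1+O(n^{-0.6}))$.

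Multiplying the two estimates yields $\tilde\gamma(s)/\gamma(s) = 1 + O(n^{-0.6})$ uniformly for $|s|\le n^{0.1}$, which lies inside $[1/2,2]$ for all sufficiently large $n$, completing the proof. There is no real conceptual obstacle here; the only care required is in the bookkeeping of the two independent $o(1)$ errors (one from the Gamma-ratio prefactor, one from the Taylor expansion) so that they combine to a multiplicative $1+o(1)$ rather than merely a bounded ratio. One could alternatively deduce the result from a uniform local central limit theorem for $T$, but the direct density computation above is self-contained and gives sharper bounds essentially for free.
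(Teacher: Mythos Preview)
Your proof is correct. The paper does not supply a proof of this lemma, only the sentence ``The next lemma follows from a direct calculation,'' and what you wrote is exactly that calculation: identify $T/\sqrt{n}$ as the first coordinate of the uniform distribution on $\mathbb{S}^{n-1}$ (equivalently, $T^2/n\sim\mathrm{Beta}(1/2,(n-1)/2)$), write down the closed-form density $\tilde\gamma(s)=\tfrac{\Gamma(n/2)}{\sqrt{\pi n}\,\Gamma((n-1)/2)}(1-s^2/n)^{(n-3)/2}$, and compare to $\gamma$ via Stirling on the prefactor and a Taylor expansion on the $(1-s^2/n)^{(n-3)/2}$ factor. Your error bookkeeping ($O(1/n)$ from the Gamma ratio, $O(n^{-0.6})$ from the exponent expansion on $|s|\le n^{0.1}$) is accurate and in fact yields the stronger statement $\tilde\gamma(s)/\gamma(s)=1+o(1)$; the paper's stated bounds $1/2$ and $2$ are deliberately loose.
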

Equation \eqref{eq:fL2} and Lemma \ref{lem:com} imply that $\|h\|_{L_2(\gamma)} \leq 2$ and
\begin{equation}\label{eq:hnorm}
\EE \left [ h \left ( \frac{\Gamma_1  }{ \sqrt{|\Gamma_3|^2 / n }} \right )^2 \right ] \leq 2.
\end{equation}
In what follows, we denote by $C_k$ a constant depending only on $k$ whose value may change between different appearances. Since $H_\ell'(x)  = \ell H_{\ell -1}(x)$, for every $\ell$ there exists a constant $C_\ell$ such that any Hermite polynomial $H_\ell$ with $\ell \leq k$ satisfies
$$
|H_\ell (x(1-s)) - H_\ell(x)| \leq s |x| \ell \max_{|y| \leq |x|} |H_{\ell-1} (y)| \leq C_k s (2 + |x|)^k,~~  \forall s \in (0,1).
$$
Moreover since $h$ is a polynomial of degree at most $8k$ with $\|h\|_{L_2(\gamma)} \leq 2$, we conclude that
\begin{equation}\label{eq:hlip}
|h(x(1-s)) - h(x)| \leq C_k s (2 + |x|)^{8k},~~  \forall s \in (0,1).
\end{equation}
So we can write
$$
b_k(f; \tilde{\theta})[e_{i_1}',...,e_{i_k}'] = \EE \left [  H_{\alpha} (\Gamma_1) J_I(\Gamma_3) h \left ( \frac{\Gamma_1  }{ \sqrt{|\Gamma_3|^2 / n }} \right ) \right ] + T_{res} [e_{i_1}',...,e_{i_k}']
$$
where, relying on \eqref{eq:pres1} and on \eqref{eq:term1decomposed},
$$
T_{res} = \EE \left [H^{(k)} (\Gamma_2 \tilde{\theta} + \Gamma_1 e_1' + \Gamma_3) \left ( h \left ( \frac{\Gamma_1  }{ \sqrt{|\Gamma_3|^2 / n }} \right ) - h \left ( \frac{ \sqrt{1-\theta_1^2} \Gamma_1  }{ \sqrt{(\Gamma_1^2 + |\Gamma_3|^2) / n }} \right ) \right ) \right ]
$$
By Parseval's inequality, we have
\begin{align*}
\|T_{res}\|_2^2 & = \EE \left [ \left ( h \left ( \frac{ \Gamma_1  }{ \sqrt{|\Gamma_3|^2 / n }} \right ) - h \left ( \frac{ \sqrt{1-\theta_1^2} \Gamma_1  }{ \sqrt{(\Gamma_1^2 + |\Gamma_3|^2)/n }} \right ) \right )^2 \right ] \\
& \stackrel{ \eqref{eq:hlip} }{\leq } C_k \EE \left [ \left ( \left |\frac{ \frac{ \sqrt{1-\theta_1^2} \Gamma_1  }{ \sqrt{\Gamma_1^2 + |\Gamma_3|^2 }} - \frac{\Gamma_1  }{ \sqrt{|\Gamma_3|^2}}  }{ \frac{\Gamma_1  }{ \sqrt{|\Gamma_3|^2}}} \right | (2+|\Gamma_1|)^{8k} \right )^2 \right ]  \\
& = C_k \EE \left [ \left ( \left | \frac{ \sqrt{1-\theta_1^2} }{ \sqrt{\frac{\Gamma_1^2}{|\Gamma_3|^2} + 1}  } - 1   \right | (2+|\Gamma_1|)^{8k} \right )^2 \right ] \\
& \leq C_k \EE \left [ \left ( \left (\theta_1^2 + \frac{\Gamma_1^2}{|\Gamma_3|^2} \right ) (2+|\Gamma_1|)^{8k} \right )^2 \right ] \leq C_k \left (\theta_1^4 + \frac{1}{n^2} \right ).
\end{align*}
In a similar manner, \eqref{eq:hlip} and \eqref{eq:term2decomposed} imply that
$$
b_k(f)[e_{i_1}',...,e_{i_k}'] = \EE \left [  H_{\alpha(I)} (\Gamma_1) J_I(\Gamma_3) h \left ( \frac{ \sqrt{1-\theta_1^2} \Gamma_1 + \theta_1 \Gamma_2  }{ \sqrt{|\Gamma_3|^2 / n }} \right ) \right ] + T_{res}'[e_{i_1}',...,e_{i_k}']
$$
with $\|T_{res}'\|_2^2 \leq C_k \left (\theta_1^4 + \frac{1}{n^2} \right )$. Note, however, that since $H_{\alpha(I)}$ is an eigenvector of the heat operator, we have
\begin{align*}
\EE \left [  H_{\alpha(I)} (\Gamma_1) J_I(\Gamma_3) h \left ( \frac{ \sqrt{1-\theta_1^2} \Gamma_1 + \theta_1 \Gamma_2  }{ \sqrt{|\Gamma_3|^2/n}} \right ) \right ] & = \EE \left [ J_I(\Gamma_3) \EE \left . \left [ H_{\alpha(I)} (\Gamma_1)  h \left ( \frac{ \sqrt{1-\theta_1^2} \Gamma_1 + \theta_1 \Gamma_2  }{ \sqrt{|\Gamma_3|^2/n}} \right ) \right | \Gamma_3 \right ] \right ] \\
& = (1-\theta_1^2)^{\alpha(I) / 2} \EE \left [  H_{\alpha(I)} (\Gamma_1) J_I(\Gamma_3) h \left ( \frac{ \Gamma_1}{ \sqrt{|\Gamma_3|^2/n}} \right ) \right ].
\end{align*}
We conclude that
\begin{align*}
b_k(f; \tilde{\theta})[e_{i_1}',...,e_{i_k}'] - b_k(f)[e_{i_1}',...,e_{i_k}'] & = T_{res}[e_{i_1}',...,e_{i_k}'] - T_{res}'[e_{i_1}',...,e_{i_k}'] \\
& + \left ( 1 - (1-\theta_1^2)^{\alpha(I) / 2} \right ) \EE \left [  H_{\alpha(I)} (\Gamma_1) J_I(\Gamma_3) h \left ( \frac{ \Gamma_1}{ \sqrt{|\Gamma_3|^2/n}} \right ) \right ],
\end{align*}
Now, by Parseval,
\begin{align*}
\sum_{I = (i_1,...,i_k) \in [n-1]^k} \left ( 1 - (1-\theta_1^2)^{\alpha(I)} \right )^2 & \EE \left [  H_{\alpha(I)} (\Gamma_1) J_I(\Gamma_3) h \left ( \frac{ \Gamma_1}{ \sqrt{|\Gamma_3|^2/n}} \right ) \right ]^2 \\
& \leq k^2 \theta_1^4 \EE \left [ h \left ( \frac{ \Gamma_1}{ \sqrt{|\Gamma_3|^2/n}} \right )^2 \right ] \stackrel{ \eqref{eq:hnorm} }{\leq} C_k \theta_1^4,
\end{align*}
Combining the last two displays with equation \eqref{eq:basisnorm}, we finally attain
\begin{align*}
\| \mathrm{P}_{\tilde{\theta}^\perp} (b_k(f; \tilde{\theta}) - b_k(f)) \|_{HS}^2 \leq C \theta_1^4 + 4 \|T_{res}'\|_2^2 + 4 \|T_{res}\|_2^2 \leq C_k \left ( \theta_1^4 + \frac{1}{n^2} \right ).
\end{align*}
Since $\EE \theta_1^4 = O(1/n^2)$, taking expectation over $\theta$ establishes \eqref{eq:nts}, and completes the proof of Theorem~\ref{thm:conc}.

\subsection{Loose ends}

\begin{proof} [Proof of Proposition~\ref{p:density}]
	Denote by $\sigma_{n}$ the unique rotationally-invariant measure on the unit sphere in $\RR^n$. A standard calculation (see~\cite[Equation (24)]{EK08}) shows that the density of an $\ell$-dimensional marginal of $\sigma_n$ has the form
	$$
	\psi_{n,\ell}(x) = \psi_{n,\ell}(|x|) = \Gamma_{n,\ell} \left ( 1-|x|^2 \right )^{\frac{n-\ell-2}{2}}, ~~ |x| \leq 1
	$$
	for a constant $\Gamma_{n,\ell}$. So we have by continuity,
	\begin{align*}
	\lim_{\eps \to 0} \frac{1}{\eps} \EE \mathbf{1} \{ |\langle x, \theta \rangle| \leq \eps \} ~& = \lim_{\eps \to 0} \frac{1}{\eps} \EE \mathbf{1} \left \{ |\langle x/|x|, \theta \rangle| \leq \frac{\eps}{|x|} \right  \} = \frac{2}{|x|} \Gamma_{n,1}.
	\end{align*}
	By the continuity of $g$ it follows that
	$$
	\lim_{\eps \to 0} \frac{1}{\eps} \EE \left [ \mathbf{1} \{ |\langle x, \theta \rangle| \leq \eps \} g(\theta) \right ] = \lim_{\eps \to 0} \frac{1}{\eps} \EE \left [ \mathbf{1} \{ |\langle x, \theta \rangle| \leq \eps \} g \left (\frac{\mathrm{Proj}_{x^\perp} \theta}{|\mathrm{Proj}_{x^\perp} \theta|} \right ) \right ],
	$$
	and the first part of the proposition follows by symmetry to revolution about $x$. Now, for the second part, for $\rho \in [0,1]$ denote
	$$
	V(\rho) = \mathrm{Vol} \left ( \left \{ (x,y): ~~ |x| < 1, |\rho x + \sqrt{1-\rho^2} y| < 1 \right \} \right ),
	$$
	the volume of the rhombus with angle $\arcsin(\rho)$ and height $2$. A calculation shows that for all $\rho < 1/2$,
	$$
	V(\rho) = \frac{4}{\sqrt{1-\rho^2}}.
	$$
	So we have by continuity
	\begin{align*}
	\lim_{\eps \to 0} \frac{1}{\eps^2} \EE \left [g(\theta) \mathbf{1} \{ |\langle x, \theta \rangle| \leq \eps, |\langle y, \theta \rangle| \leq \eps \} \right ] ~& = \lim_{\eps \to 0} \frac{1}{|x||y| \eps^2} \EE \left [ g\left  (\frac{\mathrm{Proj}_{x^\perp \cap y^\perp} \theta}{|\mathrm{Proj}_{x^\perp \cap y^\perp} \theta|} \right ) \mathbf{1} \{ |\langle \hat x, \theta \rangle| \leq \eps, |\langle \hat y, \theta \rangle| \leq \eps \} \right ] \\
	& = \frac{\Gamma_{n,2} V(\langle \hat x, \hat y \rangle)}{|x||y|} \EE \left [ g\left  (\frac{\mathrm{Proj}_{x^\perp \cap y^\perp} \theta}{|\mathrm{Proj}_{x^\perp \cap y^\perp} \theta|} \right ) \right ].
	\end{align*}
	The proposition follows.
\end{proof}

\begin{proof} [Proof of Proposition~\ref{p:tensorpoly}]
	Both expressions are invariant to orthogonal transformations applied to both $x,y$, and are therefore functions of $\langle x,y \rangle$, $|x|$ and $|y|$. By applying a rotation, assume that
	\begin{equation}\label{eq:assumpxy}
	x \in \mathrm{span}(e_1), ~~ y \in \mathrm{span}(e_1,e_2), ~~ x_1 \geq 0, ~~ y_2 \geq 0.
	\end{equation}
	Evidently, for any fixed $\theta$ and indices $i_1,...,i_k \in [n]^k$, the expression
	$$
	\mathrm{P}_{\theta^\perp} H^{(k)}(x)[e_{i_1},...,e_{i_k}] \mathrm{P}_{\theta^\perp} H^{(k)}(y) [e_{i_1},...e_{i_k}]
	$$
	is a polynomial of degree at most $k$ in $x_1,y_1,y_2$ with coefficients depending only on $k$. Since the distribution of $\theta_1,\theta_2$ does not depend on $x,y$ given the above assumption, we have that restricted to \eqref{eq:assumpxy}, the two expressions
	$$
	\EE \langle \mathrm{P}_{\theta_{1,2}^\perp} H^{(k)}(x), \mathrm{P}_{\theta_{1,2}^\perp} H^{(k)}(y) \rangle_{HS},
	$$
	are polynomials of degree at most $k$ in $x_1,y_1,y_2$ with coefficients bounded by $O_k(n^k)$. Note that under \eqref{eq:assumpxy}, we have
	$$
	x_1 = |x|, ~~ y_1 = \rho(x,y) |y|, ~~ y_2 = \sqrt{1-\rho(x,y)^2} |y|.
	$$
	Thus, we can express the above expressions as polynomials of degree at most $2k$ in $|x|$, $|y|$, $\rho(x,y)$ and $\sqrt{1-\rho(x,y)^2}$ as long as \eqref{eq:assumpxy} holds. Since the above expressions are invariant under rotations, these forms will hold true in general. This completes the proof.
\end{proof}

\section*{Acknowledgements}

We are thankful to Subhash Khot and Bo'az Klartag for discussions. We are especially grateful to Bo'az for suggesting to use Schur's Lemma which is key to the proof of Theorem~\ref{t:concentration}.

\bibliographystyle{plain}
\bibliography{bibfile}

\appendix

\section{Integrality Gap for Subspaces Near-Intersection}\label{s:sdp-ig}

In this section we sketch an integrality gap instance for the Subspaces Near-Intersection problem. That is, we show that the semidefinite program that minimizes $$\Expc{(u,v)\in E}{| Proj_{\Theta_e^\perp}(\sigma(u))-Proj_{\Theta_e^\perp}(\sigma(v))|_2^2}$$
cannot solve Subspaces Near-Intersection.

Consider the following graph $G = (V,E)$: its vertices correspond to all unit vectors $v\in\R^k$ where coordinates are taken up to sufficiently large precision with respect to $\delta>0$. The subspace associated with the vertex is the one that is spanned by $v$. For the vertex corresponding to vector $v$ there is an edge that touches it for every unit vector $\Theta\in \R^k$ (up to the aforementioned precision) and it connects it to a vertex associated with a random vector $u\in \R^k$ such that $\card{v_{|\Theta^\perp} - u_{|\Theta^\perp}}_2 \approx \sqrt{\delta}$ (the approximation reflects the precision error).
Note that this instance of Subspaces Near-Intersection has a vector solution given by the unit vector associate with every vertex, and it achieves value approximately $\delta$ by construction.
Nevertheless, there is no feasible assignment $\sigma:V\to\R^{k}$ where $|Proj_{\Theta_e^\perp}(\sigma(u))-Proj_{\Theta_e^\perp}(\sigma(v))|_2$ is typically $0.001\sqrt{\delta}$, simply because only the prescribed unit vector is in the subspace of each vertex.

\end{document}